\documentclass[hidelinks,onefignum,onetabnum]{siamart251216}

\usepackage{booktabs}

\usepackage{lipsum}
\usepackage{amsfonts}
\usepackage{graphicx}
\usepackage{epstopdf}
\usepackage{algorithmic}
\ifpdf
  \DeclareGraphicsExtensions{.eps,.pdf,.png,.jpg}
\else
  \DeclareGraphicsExtensions{.eps}
\fi

\newsiamremark{remark}{Remark}
\crefname{remark}{Remark}{Remarks}
\newsiamremark{hypothesis}{Hypothesis}
\crefname{hypothesis}{Hypothesis}{Hypotheses}
\newsiamthm{claim}{Claim}
\newsiamremark{fact}{Fact}
\crefname{fact}{Fact}{Facts}
\newsiamremark{assumption}{Assumption}
\crefname{assumption}{Assumption}{Assumptions}

\headers{Policy Gradient for Log-Growth Control}{Q.~Pan, Y.~Shen, L.~Zhang, C.~Chen and X.~Guan}

\title{Sample Complexity of Policy Gradient for Log-Growth Control}

\author{Qiuhua Pan\thanks{%
State Key Laboratory of Submarine Geoscience, School of Automation and
Intelligent Sensing, Shanghai Jiao Tong University, Shanghai 200240, China;
Key Laboratory of System Control and Information Processing, Ministry of
Education of China, Shanghai 200240, China; and
Shanghai Key Laboratory of Perception and Control in Industrial Network
Systems, Shanghai 200240, China
(\email{panqh@sjtu.edu.cn},
 \email{shen010704@sjtu.edu.cn},
 \email{cailianchen@sjtu.edu.cn},
 \email{xpguan@sjtu.edu.cn}).
Corresponding author: Xinping Guan.}
\and Yukai Shen\footnotemark[2]
\and Liwei Zhang\thanks{Paris Elite Institute of Technology, Shanghai Jiao
Tong University, Shanghai 200240, China
(\email{zhangliwei01@sjtu.edu.cn}).}
\and Cailian Chen\footnotemark[2]
\and Xinping Guan\footnotemark[2]}

\usepackage{amsopn}

\ifpdf
\hypersetup{
  pdftitle={Sample Complexity of Policy Gradient for Log-Growth Control},
  pdfauthor={Qiuhua Pan, Yukai Shen, Liwei Zhang, Cailian Chen, Xinping Guan}
}
\fi

\begin{document}

\maketitle

\begin{abstract}
We study the sample complexity of policy gradient for log-growth control---the problem of learning, from observed state transitions, a feedback gain that optimally stabilizes a scalar linear system driven through a multiplicative-noise actuation channel. The objective $J(K) = \mathbb{E}[\log|1+BK|]$ is the top Lyapunov exponent of the closed loop. This problem carries a structural difficulty we call the \emph{cusp obstruction}: the optimal gain $K^*$ always places the noise singularity $b_{\rm sing}(K) = -1/K$ in the interior of the support. At this singular optimum the policy gradient exists only as a Cauchy principal value, not as a Lebesgue integral, and the natural single-sample gradient estimator has infinite variance. Standard first-order stochastic-optimization analysis is thus inapplicable at the optimum, and merely smoothing the objective does not resolve the difficulty. The obstruction, however, has an exploitable symmetry: the Cauchy kernel is an odd function of the displacement from the moving pole, so pairing each observation with its reflection through the pole cancels the divergent part. This one cancellation simultaneously controls the population curvature, the gradient-estimator variance, and the bias incurred when the noise density is estimated. Combining these bounds with a closed-form single-transition gradient oracle, we prove that projected mini-batch policy gradient, initialized in any compact subset of the stabilizing region, attains total sample complexity $\tilde O(1/\eta)$ when the noise density is known and $\tilde O(\eta^{-(2s+1)/(2s)})$ when it must be estimated, for $C^s$ noise densities with $s \ge 2$.
\end{abstract}

\begin{keywords}
policy gradient, sample complexity, log-growth control, multiplicative noise, Cauchy principal-value regularization, Polyak--{\L}ojasiewicz inequality
\end{keywords}

\begin{MSCcodes}
93E35, 93E20, 62L20, 90C26, 62G07
\end{MSCcodes}

\section{Introduction}
\subsection{The problem and the core difficulty}\label{11-problem-and-motivation}
Consider a scalar linear plant whose control enters through a noisy
actuation channel,
\begin{equation}\label{eq:plant}
    X_{t+1} = a\,(X_t + B_t U_t), \qquad t \ge 0,
\end{equation}
where \(|a| \ge 1\) is the open-loop growth factor, \(U_t\) is the control
applied at step \(t\), and \((B_t)_{t \ge 0}\) is an independent and
identically distributed (i.i.d.)\ sequence from a density \(\rho\) supported
on \([b_{\min}, b_{\max}] \subset (0, \infty)\). The control is not delivered
directly: it passes through an actuation channel whose gain \(B_t\)
fluctuates stochastically from step to step --- the multiplicative-noise
model of Ranade and Sahai's control-capacity
framework~\cite[sect.~II.A]{ranade2018control}. We restrict to static linear
state feedback \(U_t = K X_t\) with gain \(K \in \mathbb{R}\), which is
without loss of optimality for the cost considered
below~\cite[Thm.~3]{ranade2018control}; the resulting closed loop is
\begin{equation}\label{eq:sys}
    X_{t+1} = a\,(1 + B_t K)\,X_t, \qquad t \ge 0.
\end{equation}

\paragraph{The cost is the closed-loop contraction rate}
Iterating \eqref{eq:sys} gives $X_t = a^t X_0 \prod_{\tau=0}^{t-1}(1 + B_\tau K)$,
so the strong law of large numbers yields the almost-sure exponential
growth rate
\[
    \lim_{t\to\infty} \tfrac{1}{t}\log|X_t|
    \;=\; \log|a| + \mathbb{E}[\log|1+BK|]
    \qquad \text{a.s.}
\]
Normalizing to \(|a| = 1\) by the logarithmic shift \(J \mapsto J + \log|a|\),
the infinite-horizon objective is the \emph{log-growth cost}
\begin{equation}\label{eq:cost}
    J(K) := \mathbb{E}[\log|1+BK|],
\end{equation}
the top Lyapunov exponent of the random product \(\prod_{t\ge 0}(1+B_t K)\).
Thus \(J(K)\) is exactly the almost-sure geometric rate at which the closed-loop
state contracts (\(J(K)<0\)) or diverges (\(J(K)>0\)): the stabilizing region is
\(\mathcal{K}_{\rm stab} := \{K : J(K) < 0\}\), and minimizing \(J\) over
\(\mathcal{K}_{\rm stab}\) selects the gain \(K^* := \arg\min_K J(K)\) achieving
the steepest contraction. Optimizing \(J\) is therefore the natural
infinite-horizon stabilization problem for \eqref{eq:sys}, the multiplicative-noise
analogue of minimizing the linear-quadratic regulator (LQR) cost for an additive-noise system.

\paragraph{What the learner observes}
The system parameter \(a\) is known and the gain \(K\) is chosen by the learner,
so every state transition \((X_t, X_{t+1})\) with \(X_t \ne 0\) reveals the
realized channel gain exactly, via
\(B_t = \bigl(X_{t+1}/(aX_t) - 1\bigr)/K\). Observing the closed-loop trajectory
thus yields an i.i.d.\ sample stream \((B_t)_{t\ge 0}\) from \(\rho\), one sample
per transition; this is the operational meaning of ``observing a transition''
throughout the paper. We study two access models: \emph{density-known}, in which
\(\rho\) is additionally available in closed form (e.g.\ the actuation channel
has been characterized offline), and \emph{density-unknown}, in which \(\rho\) is
unknown and must be learned from the observed stream \((B_t)\).

We ask whether, and at what sample cost, the optimal stabilizing gain can be learned from observed transitions \((X_t, X_{t+1})\) alone: whether a policy-gradient algorithm returns a gain \(\hat K\) with \(\mathbb{E}[J(\hat K) - J^*] \le \eta\), and how many transitions this requires. What sets the question apart is where its difficulty sits. The policy gradient is
\begin{equation}\label{eq:grad}
    \frac{dJ}{dK}(K) = \int_{b_{\min}}^{b_{\max}} \frac{b\,\rho(b)}{1 + bK}\,db,
\end{equation}
and a stationary gain satisfies \(dJ/dK(K) = 0\) in the principal-value sense. Since \(b\rho(b) > 0\) throughout the support, the kernel \(1/(1+bK)\) can integrate to zero only by changing sign, which drives the zero of \(1+bK\) --- the singularity \(b_{\rm sing}(K) = -1/K\) of the integrand --- into the interior of the support. Optimality therefore \emph{forces} the singularity inside the noise support (\cref{lem:strict-cusp-optimum}): the optimum \(K^*\) does not merely lie near a singular point; it is one.

At \(K^*\) the integrand of \eqref{eq:grad} has a non-integrable simple pole, so \(dJ/dK\) is no longer an ordinary integral --- it exists only as a Cauchy principal value (\cref{prop:non-lebesgue-gradient}). This is the \emph{cusp obstruction}, and it places the problem beyond the reach of standard policy-gradient analysis: the descent lemma, linear convergence under a Polyak--{\L}ojasiewicz (PL) inequality, and the variance control of mini-batch stochastic gradient descent (SGD) all presuppose a locally Lipschitz gradient --- a hypothesis that fails at the cusp, hence at the very point the iterates must reach. The obstruction has a statistical face as well: the same pole renders the natural single-transition gradient estimator non-square-integrable, so its variance at \(K^*\) is infinite.

Smoothing is the natural response, and by itself it does not suffice. Replacing \(J\) by a regularized family \(J_\varepsilon\) removes the pole for every \(\varepsilon > 0\), but reaching accuracy \(\eta\) forces \(\varepsilon \downarrow 0\); along that limit the variance of the smoothed single-transition estimator grows as \(\Theta(1/\varepsilon)\) (\cref{thm:variance-scaling-naive-estimator}), while the curvature that drives linear convergence may collapse. Naive regularization thus exchanges a non-integrable gradient for an estimator of diverging variance --- it relocates the obstruction rather than removing it.

\subsection{State of the art and its limits}\label{12-state-of-the-art-and-its-limits}

Policy-gradient (PG) sample complexity for linear systems originates with Fazel et al.\,\cite{fazel2018global} (LQR, gradient domination), with follow-up work on the stabilizing-gain geometry \cite{bu2019lqr}, the model-based/model-free gap \cite{tu2019gap}, model-free LQR constants \cite{mohammadi2021convergence}, derivative-free rates \cite{malik2020derivative}, and noisy-LQR PG over a finite horizon \cite{hambly2021policy}. For multiplicative-noise LQR, Gravell, Mohajerin Esfahani, and Summers \cite{gravell2020learning} establish $\tilde O(1/\eta^2)$ via an algebraic Lyapunov-equation almost-smoothness bound. A parallel continuous-time line of work places entropy regularization at the level of the control formulation (Wang--Zariphopoulou--Zhou \cite{wang2020reinforcement}; Giegrich--Reisinger--Zhang \cite{giegrich2024convergence}), obtaining geometry-aware PG iterations with linear convergence for relaxed exploratory LQ; entropy there is a \emph{modeling} device for exploration on top of an a priori Lebesgue-smooth gradient, whereas our Cauchy regularization plays an \emph{analytical} role---repairing a cost whose formal gradient is non-Lebesgue (\cref{prop:non-lebesgue-gradient}). All of these analyses require global gradient-Lipschitz continuity together with a Lyapunov-equation identity for $dJ/dK$; the cusp obstruction of \cref{11-problem-and-motivation} rules out both. The same obstruction excludes subgradient/proximal frameworks for non-smooth optimization: weakly convex objectives (Davis--Drusvyatskiy \cite{davis2019stochastic}) and tame functions (Bolte--Pauwels \cite{bolte2021conservative}) deliver convergence to critical points but not explicit $\eta$-rates, and $J$ is not weakly convex on $\mathcal{K}_{\rm stab}$ (its subgradient correspondence is unbounded near $K^*$).

\paragraph{Position in the risk-sensitive family}
A second view, which we exploit analytically, is that $J$ sits at the boundary of the risk-sensitive family $\beta^\theta(K) := \theta^{-1}\log\mathbb{E}[|1+BK|^\theta]$---the exponential-of-cost criterion of Jacobson~\cite{jacobson1973optimal} and Whittle~\cite{whittle1990risk}, in the policy-optimization form studied by Borkar--Meyn~\cite{borkar2002risk} and Anantharam--Borkar~\cite{anantharam2017variational}. For every $\theta > 0$, $\beta^\theta$ is Lebesgue-smooth in $K$ with gradient $\mathbb{E}[|1+BK|^\theta\,b/(1+BK)]/\mathbb{E}[|1+BK|^\theta]$, since the $|1+BK|^\theta$ factor cancels the $1/(1+BK)$ singularity. The limit $\beta^\theta(K) \to J(K)$ as $\theta \downarrow 0$ is precisely the degeneration of this Lebesgue smoothness; the cusp obstruction is the structural boundary phenomenon of the $\theta$-family at $\theta = 0$. The present analysis thus occupies the missing $\theta = 0$ corner of risk-sensitive policy-gradient theory, complementary to the $\theta > 0$ analyses of Borkar \cite{borkar2001sensitivity}.

A natural alternative under density-unknown access is certainty equivalence~\cite{mania2019certainty}: estimate $\hat\rho$ and solve $\mathrm{PV}\!\int b\hat\rho(b)/(1+b\hat K)\,db = 0$. \Cref{prop:plug-and-solve-rate} establishes rate parity with our policy-gradient \cref{alg:density-unknown-pg}; the contribution of this paper is therefore a sample-complexity analysis of \emph{policy gradient} --- the canonical first-order stochastic procedure in adaptive control --- in a regime where its standard tools fail, together with the parity-cancellation mechanism that delivers it (section~\ref{56-comparison-and-structural-perspective} discusses when each method is preferable).

\subsection{Contributions}\label{13-contributions}

The resolution of the cusp obstruction rests on a single observation: the singularity is odd. The Cauchy kernel is an odd function about the moving pole $b_{\rm sing}(K)$, and a parity-matched symmetric construction cancels its odd part. The same cancellation resolves the difficulty at the three levels at which it arises.

\emph{(i) Population curvature.}\quad The Cauchy-regularized cost $J_\varepsilon$ admits a closed-form Hessian decomposition (\cref{thm:hessian-decomposition}) whose near-pole leading term vanishes by parity, yielding a uniform-in-$\varepsilon$ PL constant on the local basin $\mathcal{N}_\delta$ (\cref{lem:uniform-PL-constant,thm:uniform-epsilon-pl-inequality}).

\emph{(ii) Estimator variance.}\quad A density-aware symmetric-pairing estimator (\cref{def:symmetric-pairing-estimator,prop:unbiased-variance-psi-tilde}) is unbiased and has $O(1)$ variance uniformly in $\varepsilon$, in contrast to the $\Theta(1/\varepsilon)$ divergence of the naive estimator (\cref{thm:variance-scaling-naive-estimator}).

\emph{(iii) Plug-in bias.}\quad When $\rho$ is replaced by a kernel density estimate (KDE), the weight discrepancy $\Delta := w_{\hat\rho} - w_\rho$ is exactly odd through $b_{\rm sing}(K)$ (\cref{lem:parity-weight-discrepancy}), reducing the plug-in bias to $O(\nu'_{n_1} R)$ (\cref{prop:uniform-bias-bound}).

Together with the closed-form single-transition gradient oracle afforded by the multiplicative-ergodic structure, these bounds establish that first-order learning at the singular optimum is possible: projected mini-batch policy gradient, initialized in any compact subset of $\mathcal{K}_{\rm stab}$, attains total sample complexity $\tilde O(1/\eta)$ when $\rho$ is known to the algorithm (\cref{thm:sample-complexity-density-known}) and $\tilde O(\eta^{-(2s+1)/(2s)})$ when $\rho$ must be estimated, for $C^s$ densities with $s \ge 2$ (\cref{thm:total-sample-complexity-pl-basin}); a certainty-equivalent principal-value root-finder attains the latter rate as well (\cref{prop:plug-and-solve-rate}). \Cref{tab:complexity_bounds_matched} collects these bounds.

\begin{table}[htbp]
\centering
\footnotesize
\setlength{\tabcolsep}{4pt}
\caption{Sample complexity and convergence rates for the scalar log-growth control cost $J(K)=\mathbb{E}[\log|1+BK|]$. Initialization in any compact subset of $\mathcal{K}_{\rm stab}$ is admissible via the preliminary phase of section~\ref{56-comparison-and-structural-perspective} (full statement in section~SM2 of the supplement) at $\eta$-independent extra cost.}
\label{tab:complexity_bounds_matched}
\begin{tabular}{@{}llll@{}}
\toprule
Access model & Algorithm & Rate & Reference \\ \midrule
Density-known & PG with oracle pairing & $\tilde O(1/\eta)$ & \cref{thm:sample-complexity-density-known} \\
Density-known & Newton on PV equation & $O(\log\log\tfrac{1}{\eta})$ iters$^\dagger$ & \cite{davis2007methods,atkinson2008introduction} \\
Density-unknown ($\rho \in C^s$, $s\ge 2$) & PG with KDE plug-in & $\tilde O(\eta^{-(2s+1)/(2s)})$ & \cref{thm:total-sample-complexity-pl-basin} \\
Density-unknown ($\rho \in C^s$, $s\ge 2$) & Plug-and-solve PV root & $\tilde O(\eta^{-(2s+1)/(2s)})$ & \cref{prop:plug-and-solve-rate} \\
Density-unknown ($\rho_\theta$ parametric) & PG with MLE plug-in & $\tilde O(1/\eta)$ & \cref{cor:parametric-special-case} \\
\bottomrule
\end{tabular}\\[2pt]
{\footnotesize $^\dagger$Density-known Newton uses zero new samples; reported is the iteration count to accuracy $\eta$.}
\end{table}

\paragraph{Comparison with multiplicative-noise LQR}
The most directly comparable prior problem is policy gradient for multiplicative-noise LQR, where the established rate is $\tilde O(1/\eta^2)$~\cite{gravell2020learning}. The $\tilde O(1/\eta)$ rate we obtain under density-known access (\cref{thm:sample-complexity-density-known}) is faster, but the gap reflects a property of the cost rather than a sharper analysis: the log-growth objective admits a closed-form single-transition gradient oracle (section~\ref{51-single-transition-closed-form-gradient-oracle}), absent from the LQR cost, which accumulates across time through a gain-dependent steady-state covariance and therefore requires zeroth-order finite-difference estimation. We do not claim a sharper analysis of multiplicative-noise LQR.

\subsection{Paper organization and notation}\label{15-paper-organization}

Section~\ref{2-problem-formulation-and-the-cusp-obstruction} formalizes the problem and proves the cusp obstruction.
Section~\ref{3-cauchy-regularization-and-uniform-in-epsilon-polyak-Lojasiewicz} introduces the Cauchy regularization, the closed-form Hessian decomposition, and the uniform-in-$\varepsilon$ PL lemma.
Section~\ref{4-finite-sample-analysis} develops the finite-sample analysis (naive-estimator variance divergence and the paired-estimator reduction).
Section~\ref{5-sample-complexity-in-the-pl-basin} assembles the local sample-complexity bounds. Section~\ref{6-numerical-validation} reports numerical validation; section~\ref{7-conclusion} concludes with four limitations. The proof of \cref{lem:uniform-PL-constant} is in \cref{appendix-a-proof-of-lemma-34}; auxiliary derivations are in \cref{appendix-b-auxiliary-derivations}. The supplement (sections~SM5 and~SM6) contains the preliminary-phase analysis, the quadrature-stability ablation, auxiliary numerical experiments, and the technical lemma on PV--directional-derivative equivalence at a moving simple pole.

\paragraph{Notation}
\Cref{tab:notation} collects the symbols used throughout. We write $\tilde O(\cdot)$ for $O(\cdot)$ up to polylogarithmic factors in $1/\eta$.

\begin{table}[htbp]
\centering
\footnotesize
\setlength{\tabcolsep}{4pt}
\caption{Notation used throughout the paper.}
\label{tab:notation}
\begin{tabular}{@{}p{0.20\linewidth}p{0.78\linewidth}@{}}
\toprule
Symbol & Meaning \\ \midrule
$\rho$ & noise density on $[b_{\min},b_{\max}]$ \\
$\mathcal{K}_{\rm stab}$ & stabilizing region $\{K : J(K) < 0\}$ \\
$K^*$ & a local minimizer of $J$ on $\mathcal{K}_{\rm stab}$ (principal-value first-order point) \\
$J^* := J(K^*)$ & optimal log-growth cost \\
$\mathcal{N}_\delta$ & local basin $[K^*-\delta, K^*+\delta]$ \\
$b_{\rm sing}(K) := -1/K$ & moving singularity of $dJ/dK$ \\
$v(b,K) := 1+bK$ & inverse weight \\
$\bar B := 2 b_{\rm sing}(K)-B$ & reflection of $B$ through the pole \\
$h(b) := b\rho(b)$ & weighted density \\
$\mathcal{S}_K(R), \mathcal{A}_K(R)$ & symmetric pairing zone of radius $R$ and its complement in $[b_{\min},b_{\max}]$ \\
$\tau$ & pole-to-edge margin (\cref{lem:uniform-PL-constant}(a)) \\
$\mu_0, L_0$ & uniform PL constant / Lipschitz bound on $\mathcal{N}_\delta$ \\
$\psi(B;K,\varepsilon)$ & naive single-sample gradient estimator \\
$\tilde\psi$ & density-aware paired estimator \\
$\hat\rho_{n_1,h}$ & order-$s$ kernel density estimate from $n_1$ samples \\
$\nu_{n_1},\nu'_{n_1}$ & sup-norm KDE rates for $\hat\rho, \hat\rho'$ \\
$w_\rho, w_{\hat\rho}, \Delta$ & pairing weight, plug-in weight, weight discrepancy \\
$\bar C_b$ & bias coefficient $\pi\rho(b_{\rm sing}(K^*))/|K^*|$ \\
$\mathrm{PV}\!\int,\,\mathrm{Hf}\!\int$ & Cauchy principal value / Hadamard finite-part \cite{king2009hilbert} \\
$\tilde O(\cdot)$ & $O(\cdot)$ up to polylogarithmic factors in $1/\eta$ \\
\bottomrule
\end{tabular}
\end{table}

\section{Problem formulation and the cusp obstruction}\label{2-problem-formulation-and-the-cusp-obstruction}
\subsection{The log-growth control problem}\label{21-the-log-growth-control-problem}

Consider the scalar linear control system \eqref{eq:sys} under the standing
assumption \(|a| = 1\) (obtained without loss of generality by the
logarithmic shift \(J \mapsto J + \log|a|\)). Given the density \(\rho\)
of the multiplicative noise \(B\), the parametric objective for linear
feedback \(U_t = K X_t\) is
\begin{equation}\label{eq:cost_param}
    J(K) = \mathbb{E}\!\left[\log\lvert 1+BK\rvert\right] = \int_{b_{\min}}^{b_{\max}} \rho(b)\log\lvert 1+bK\rvert\,db.
\end{equation}
The following standing assumptions hold throughout the paper unless
explicitly relaxed.

\begin{assumption}[Compactly supported positive noise]
\label[assumption]{asm:compact_noise}\(\rho\) is a probability density supported on a compact interval
\([b_{\min}, b_{\max}]\) with \(0 < b_{\min} < b_{\max}\).
\end{assumption}

\begin{assumption}[Regularity]
\label[assumption]{asm:regularity}\(\rho \in C^2([b_{\min}, b_{\max}])\) and \(\rho(b) > 0\) for all
\(b \in [b_{\min}, b_{\max}]\).
\end{assumption}

\begin{assumption}[Non-degeneracy of the optimum]
\label[assumption]{asm:optimum}The Hadamard finite-part second derivative of $J$ at $K^*$ is strictly positive:
\begin{equation}\label{eq:A3_condition}
    \left.\frac{d^2 J}{dK^2}\right|_{K=K^*} > 0,
\end{equation}
where the right-hand side is given in closed form by~\eqref{eq:hess_Kstar} in \cref{thm:hessian-decomposition}.
\end{assumption}

\begin{remark}[genericity of \cref{asm:optimum}]
\label{rem:A3-generic}
The closed-form expression \eqref{eq:hess_Kstar} is a continuous functional of $\rho$ in the $C^2$ topology, so $\{\rho : d^2J/dK^2(K^*) > 0\}$ is an open subset of the admissible-density space, and it is also dense (a transversality argument: if $d^2J/dK^2(K^*) = 0$ at some $\rho_0$, an arbitrarily small $C^2$ perturbation moves it to a nonzero value via the linear dependence of \eqref{eq:hess_Kstar} on $\rho$). \Cref{asm:optimum} is therefore \emph{generic}. Numerically, all four test densities of section~\ref{6-numerical-validation} yield $d^2J/dK^2|_{K^*} \in \{11.18, 16.97, 16.98, 12.96\}$. A sharp analytical sufficient condition --- in particular, a structural assumption on $\rho$ (such as log-concavity of $b\rho(b)$) implying \cref{asm:optimum} --- remains open; see L4 of section~\ref{72-limitations}.
\end{remark}

\begin{remark}
\cref{asm:compact_noise} encompasses the canonical
Ranade--Sahai setting~\cite{ranade2018control}, where \(B\) models a bounded
multiplicative disturbance with support bounded away from zero to ensure
controllability. The positivity of the support is a natural restriction:
if \(0 \in \mathrm{supp}(\rho)\), the system loses controllability on
the event \(B_t = 0\) and the log-growth cost diverges. \Cref{asm:regularity} is
a mild smoothness condition standard in density-dependent analyses~\cite{Giné_Nickl_2021}.
\end{remark}

The \emph{singularity locus} associated with gain \(K\) is
\begin{equation}\label{eq:bsing}
    b_{\mathrm{sing}}(K) := -\frac{1}{K},
\end{equation}
the unique value of \(b\) satisfying \(1 + bK = 0\). The integrand of
\(J(K)\) in \eqref{eq:cost_param} has a logarithmic singularity at
\(b = b_{\mathrm{sing}}(K)\) whenever this value lies in the support
interval \([b_{\min}, b_{\max}]\). Although the logarithm is Lebesgue
integrable at this singularity, its first derivative in \(K\) is not ---
a fact that drives the entire analysis.

\subsection{The stabilizing region and existence of the optimum}\label{22-the-stabilizing-region-and-existence-of-the-optimum}

We characterize the set of gains yielding contracting closed-loop
dynamics and establish existence of a capacity-achieving gain \(K^*\) on
this set.

\begin{definition}[stabilizing region]
\label{def:stab_reg}The stabilizing region is
\(\mathcal{K}_{\mathrm{stab}} := \{K \in \mathbb{R} : J(K) < 0\}\).
\end{definition}
Under the \(|a| = 1\) normalization,
\(K \in \mathcal{K}_{\mathrm{stab}}\) is equivalent to the almost-sure
contraction of \((X_t)_{t \ge 0}\): the Lyapunov exponent is
\(\log|a| + J(K) = J(K) < 0\), so \(|X_t| \to 0\) almost surely. Since
the multiplicative noise is bounded with
\(\mathrm{supp}(\rho) \subset (0, \infty)\), the case \(K = 0\) gives
\(J(0) = 0\) (no contraction), consistent with \(0\) lying on the
closure of \(\mathcal{K}_{\mathrm{stab}}\) but not its interior. The
characterization \(\mathcal{K}_{\rm stab} = \{K : J(K) < 0\}\) is the
negative-top-Lyapunov-exponent set for the i.i.d. random product
\(\prod_{t \ge 0}(1 + B_t K)\), a special case of the multiplicative
ergodic framework of Arnold \cite{arnold2006random}. Multiplicative-noise stabilization
in the linear-systems context goes back to Wonham\textquotesingle s
stationary control of state-dependent-noise systems \cite{wonham1967optimal} and is now
standardly framed via stochastic Riccati equations \cite{damm2004rational}; the i.i.d.
case considered here is also a degenerate (memoryless) case of the
Markov-jump linear systems studied by Costa, Fragoso, and Marques
\cite{costa2005discrete}. We emphasize that \emph{characterization} of
\(\mathcal{K}_{\rm stab}\) is classical; what is novel here is the
\emph{sample-complexity} analysis for \emph{learning} an optimal
\(K \in \mathcal{K}_{\rm stab}\) under the log-growth cost.

Under \cref{asm:compact_noise,asm:regularity}, $\mathcal{K}_{\mathrm{stab}}$ is open and $J: \mathcal{K}_{\mathrm{stab}} \to \mathbb{R}$ is continuous (by dominated convergence, since the integrand of \eqref{eq:grad} is locally bounded away from the singularity for $K \in \mathcal{K}_{\rm stab}$). Throughout the paper, $K^*$ denotes a solution of the principal-value first-order condition \eqref{eq:pv_foc} lying in $\mathcal{K}_{\mathrm{stab}}$; existence of such a $K^*$ follows from $J(K) \to 0$ at the boundary of $\mathcal{K}_{\rm stab}$ together with $J(K) < 0$ on the interior. That $K^*$ is in fact a strict local minimum of $J$, in the classical sense, follows from the strict positivity of the Hadamard finite-part Hessian at $K^*$ — itself a consequence of \cref{asm:optimum} and \cref{thm:hessian-decomposition}, made precise in \cref{lem:uniform-PL-constant}.

\begin{remark}[local vs. global minimum]
We do not establish uniqueness of $K^*$ or global convexity of $J$. The algorithmic guarantees of sections~\ref{3-cauchy-regularization-and-uniform-in-epsilon-polyak-Lojasiewicz}--\ref{5-sample-complexity-in-the-pl-basin} require only local gradient domination on a neighborhood of a fixed local minimum (\cref{lem:uniform-PL-constant}); the initialization assumption of \cref{thm:total-sample-complexity-pl-basin} places the policy-gradient trajectory within the basin of $K^*$. Numerical evidence on canonical test densities (uniform, Beta, truncated Gaussian) supports unique global minimality of $K^*$.
\end{remark}

\begin{lemma}[strict cusp at the optimum]
\label{lem:strict-cusp-optimum} Under \cref{asm:compact_noise,asm:regularity}, every local minimum \(K^*\) in \(\mathcal{K}_{\rm stab}\) of \(J\) in the principal-value sense satisfies \(b_{\rm sing}(K^*) \in (b_{\min}, b_{\max})\).
\end{lemma}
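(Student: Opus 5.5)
We argue by contradiction: suppose $K^*\in\mathcal{K}_{\rm stab}$ is a local minimum of $J$ with $b_{\rm sing}(K^*)\notin(b_{\min},b_{\max})$. Two preliminary facts. First, $K^*\neq 0$, since $J(0)=0$ while $J<0$ on $\mathcal{K}_{\rm stab}$. Second, if $K^*>0$ then $1+bK^*>1$ for every $b$ in the support, so $J(K^*)>0$; this is impossible, hence $K^*<0$ and $b_{\rm sing}(K^*)=-1/K^*>0$. The excluded case therefore splits into two subcases: $b_{\rm sing}(K^*)\le b_{\min}$ or $b_{\rm sing}(K^*)\ge b_{\max}$.

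\emph{Strictly exterior pole.} Take $b_{\rm sing}(K^*)<b_{\min}$ or $b_{\rm sing}(K^*)>b_{\max}$. Then $1+bK^*$ is bounded away from zero on $[b_{\min},b_{\max}]$, uniformly for $K$ in a neighbourhood of $K^*$. By dominated convergence, $J$ is $C^1$ near $K^*$ and \eqref{eq:grad} holds as an ordinary Lebesgue integral. Moreover $1+bK^*$ has constant sign on the support. If $b_{\rm sing}(K^*)<b_{\min}$, then $1+bK^*<0$ throughout; combined with $b\rho(b)>0$ (from \cref{asm:compact_noise,asm:regularity}), this gives $dJ/dK(K^*)<0$. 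If $b_{\rm sing}(K^*)>b_{\max}$, then $1+bK^*>0$ throughout and $dJ/dK(K^*)>0$. In either case the derivative of $J$ at $K^*$ is nonzero, so $K^*$ cannot be a local minimum. (In the case $b_{\rm sing}>b_{\max}$ one also has $0<1+bK^*<1$, so $J<0$ is consistent; the contradiction comes only from the nonzero derivative.)

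\emph{Pole at an endpoint.} This is the main obstacle. Take $b_{\rm sing}(K^*)=b_{\min}$, i.e.\ $1+bK^*=K^*(b_{\min}-b)$ vanishes linearly at the endpoint. Since $\rho(b_{\min})>0$, the integrand in \eqref{eq:grad} behaves like $-b_{\min}\rho(b_{\min})/\bigl(|K^*|(b-b_{\min})\bigr)$ near $b_{\min}$. The principal value therefore does not exist; the integral diverges to $-\infty$, because a one-sided pole has no symmetric cancellation. The PV first-order condition thus fails.

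To reach a contradiction with local minimality directly, compare $J(K)$ with $J(K^*)$ for $K$ slightly more negative than $K^*$, so that $b_{\rm sing}(K)$ moves into $(b_{\min},b_{\max})$. Write $J(K)-J(K^*)=\int\rho(b)\log\bigl|(1+bK)/(1+bK^*)\bigr|\,db$. On the bulk of the support the ratio is approximately $1+(K-K^*)b/(1+bK^*)$, and the near-endpoint region contributes a term of order $|K-K^*|\log(1/|K-K^*|)$ with negative sign. We will estimate this using the explicit antiderivative of $\log|x|$, with $\rho$ frozen at $\rho(b_{\min})$ and a $C^1$ remainder. The result is that $J$ strictly decreases as $K$ moves in that direction, contradicting local minimality.

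The case $b_{\rm sing}(K^*)=b_{\max}$ is symmetric: the divergence is now to $+\infty$, and perturbing $K$ slightly toward $0$ decreases $J$ by the same logarithmic-order argument. Combining all subcases gives $b_{\rm sing}(K^*)\in(b_{\min},b_{\max})$.
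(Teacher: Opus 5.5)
Your reductions ($K^*\neq 0$, $K^*<0$) and the strictly-exterior case are correct. Your remark that an endpoint pole makes the integral in \eqref{eq:grad} diverge already settles that case if ``local minimum in the principal-value sense'' means a root of \eqref{eq:pv_foc}. That is how the paper reads it, and its entire proof is the sign argument: since $b\rho(b)>0$, a kernel $1/(1+bK^*)$ of constant sign cannot give a vanishing PV integral.

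The genuine problem is your direct local-minimality argument at an endpoint pole: both perturbation directions are reversed. For $K<0$ one has $b_{\rm sing}(K)=1/|K|$, so making $K$ more negative pushes the pole \emph{left}, out of $[b_{\min},b_{\max}]$, not into it. If $b_{\rm sing}(K^*)=b_{\min}$ and $K<K^*$, then $|1+bK|=b|K|-1>b|K^*|-1=|1+bK^*|$ for every $b$ in the support, so $J(K)>J(K^*)$. Likewise, if $b_{\rm sing}(K^*)=b_{\max}$ and $K^*<K<0$, then $|1+bK|=1-b|K|>1-b|K^*|=|1+bK^*|$, so again $J(K)>J(K^*)$. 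Thus the step ``$J$ strictly decreases as $K$ moves in that direction'' is false in both subcases as written. Your own signs already rule it out: a derivative diverging to $-\infty$ (resp.\ $+\infty$) means $J$ decreases as $K$ increases (resp.\ decreases). Also, $1+bK^*=K^*(b-b_{\min})$, not $K^*(b_{\min}-b)$, although your asymptotic for the integrand is right.

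The repair is to move $K$ toward $0$ when $b_{\rm sing}(K^*)=b_{\min}$, and away from $0$ when $b_{\rm sing}(K^*)=b_{\max}$. In both cases this is the direction that carries the pole into the interior, as your parenthetical intended. Write $b_e:=b_{\rm sing}(K^*)$ and $a:=|b_{\rm sing}(K)-b_e|\asymp|K-K^*|$. The factorization $1+bK=K(b-b_{\rm sing}(K))$ gives the exact identity
\[
J(K)-J(K^*)=\log\frac{K}{K^*}+\int_{b_{\min}}^{b_{\max}}\rho(b)\log\Bigl|1-\frac{a}{|b-b_e|}\Bigr|\,db .
\]
Freeze $\rho$ at $b_e$, using $\int_0^X\log|1-a/x|\,dx=a\log(a/X)-a+O(a^2)$; the $C^1$ remainder and the $\log(K/K^*)$ term are $O(a)$. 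This yields $J(K)-J(K^*)=-\rho(b_e)\,a\log(1/a)+O(a)<0$.

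With this correction your route is a valid alternative to the paper's. It treats $K^*$ as a classical local minimum and does not presuppose that such a minimum satisfies \eqref{eq:pv_foc}. The paper has that fact available (via \cref{prop:non-lebesgue-gradient}(b) or ordinary differentiability) only when the pole is not at an endpoint. The paper's argument is a one-line consequence of the PV condition but leaves the endpoint case implicit; yours handles it explicitly at the cost of the logarithmic estimate above.
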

\begin{proof}
The first-order condition for $K^*$, in the principal-value 
sense, is
\begin{equation}\label{eq:pv_foc}
    \mathrm{PV}\!\int_{b_{\min}}^{b_{\max}} \frac{b\,\rho(b)}{1+bK^*}\,db = 0.
\end{equation}
By \cref{asm:compact_noise,asm:regularity}, $b\rho(b)$ is strictly positive on the 
interior $(b_{\min}, b_{\max})$. The kernel $1/(1+bK^*)$ is the only 
sign-changing factor; if it has constant sign on the support, the 
PV integral has the same sign as $b\rho(b)/(1+bK^*)$, contradicting 
its vanishing. Hence $1+bK^*$ must take both signs on 
$[b_{\min}, b_{\max}]$, i.e., its unique zero 
$b_{\rm sing}(K^*) = -1/K^*$ lies in the open interval 
$(b_{\min}, b_{\max})$.
\end{proof}

\subsection{The cusp obstruction}\label{23-the-cusp-obstruction}

The principal technical obstacle to applying classical policy-gradient
analysis to \eqref{eq:cost_param} is that \(J\) does not admit a Lebesgue-integrable
gradient on an open subset of \(\mathcal{K}_{\mathrm{stab}}\) containing
\(K^*\). The Cauchy principal-value framework is classical in applied
analysis (King \cite{king2009hilbert} is a comprehensive modern reference). Its appearance here as an obstruction to policy-gradient theory appears, to our knowledge, to be new: we are not aware of prior policy-gradient analyses that resolve the cusp obstruction.

\begin{proposition}[non-Lebesgue gradient]
\label{prop:non-lebesgue-gradient}Fix
\(K \in \mathcal{K}_{\mathrm{stab}}\) with
\(b_{\mathrm{sing}}(K) \in (b_{\min}, b_{\max})\). Then:

(a) The formal gradient integrand \(\phi_K(b) := \rho(b)b/(1+bK)\) is
not Lebesgue integrable on \([b_{\min}, b_{\max}]\); in particular
\(\int |\phi_K(b)|\,db = +\infty\).

(b) The Cauchy principal value
\(\mathrm{PV}\!\int_{b_{\min}}^{b_{\max}} \phi_K(b)\,db\) exists and
equals the one-sided directional derivative
\(\lim_{h \downarrow 0}[J(K+h) - J(K)]/h = \lim_{h \uparrow 0}[J(K+h) - J(K)]/h\).

(c) There is no continuous extension of \(\phi_K\) across
\(b_{\mathrm{sing}}(K)\):
\(\lim_{b \uparrow b_{\mathrm{sing}}(K)} \phi_K(b) = -\infty\cdot\mathrm{sign}(K)\)
and
\(\lim_{b \downarrow b_{\mathrm{sing}}(K)} \phi_K(b) = +\infty\cdot\mathrm{sign}(K)\).
\end{proposition}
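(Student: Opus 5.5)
Write $b_0 := b_{\rm sing}(K) = -1/K \in (b_{\min}, b_{\max})$. Since $b_{\min} > 0$, we have $b_0 > 0$ and hence $K < 0$. The basic tool is the factorization
\[
1 + bK = K(b - b_0), \qquad \phi_K(b) = \frac{h(b)}{K(b - b_0)}, \qquad h(b) = b\rho(b).
\]
By \cref{asm:regularity}, $h \in C^2$ and $h(b_0) > 0$.

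\emph{Parts (c) and (a).} For (c), $h$ is continuous and positive near $b_0$, so the sign of $\phi_K$ is that of $(b - b_0)/K$. Thus $\phi_K \to -\infty\cdot\mathrm{sign}(K)$ as $b \uparrow b_0$ and $\phi_K \to +\infty\cdot\mathrm{sign}(K)$ as $b \downarrow b_0$. Any continuous extension would have to be bounded near $b_0$, which is impossible. For (a), choose $r > 0$ with $[b_0 - r, b_0 + r] \subset (b_{\min}, b_{\max})$ and $h \ge c := \tfrac12 h(b_0)$ there. Then
\[
\int |\phi_K| \ge \frac{c}{|K|} \int_{b_0 - r}^{b_0 + r} \frac{db}{|b - b_0|} = +\infty .
\]

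\emph{Existence of the PV in (b).} Split the kernel as
\[
\frac{h(b)}{b - b_0} = \frac{h(b) - h(b_0)}{b - b_0} + \frac{h(b_0)}{b - b_0}.
\]
The first term is bounded by $\|h'\|_\infty$ and so is Lebesgue integrable. The second has the explicit principal value
\[
h(b_0) \log\frac{b_{\max} - b_0}{b_0 - b_{\min}},
\]
because the symmetric excision $[b_0 - \epsilon, b_0 + \epsilon]$ cancels the odd part exactly. This is the parity mechanism in its simplest form.

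\emph{Identification with the one-sided derivatives.} This is the main step. I would avoid differentiating under the integral sign and instead compute $J$ near $K$ directly. Writing $\log|1 + bK| = \log|K| + \log|b - b_0|$ gives
\[
J(K) = \log|K| + F(b_0), \qquad F(x) := \int \rho(b) \log|b - x|\, db,
\]
using $\int \rho = 1$. $F$ is the logarithmic potential of $\rho$, and with $\rho \in C^2$, $\rho > 0$ on the closed support, the classical result is that $F \in C^1$ on $(b_{\min}, b_{\max})$ with
\[
F'(x) = -\mathrm{PV}\!\int \frac{\rho(b)}{b - x}\, db .
\]
To prove this, split $\rho(b) = [\rho(b) - \rho(x)] + \rho(x)$. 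For the first term the derivative can be taken under the integral by dominated convergence, since the difference quotient of $\log|b - x|$ times $\rho(b) - \rho(x)$ is dominated by an integrable function. The second term is $\rho(x)\bigl[(b_{\max} - x)\log(b_{\max} - x) + (x - b_{\min})\log(x - b_{\min}) - (b_{\max} - b_{\min})\bigr]$ up to sign, which is explicitly differentiable.

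Then $K \mapsto b_0(K) = -1/K$ is smooth, so $J$ is differentiable at $K$ and both one-sided limits equal the two-sided derivative
\[
\frac{1}{K} + F'(b_0)\cdot\frac{1}{K^2}.
\]
Finally I would check algebraically that this equals $\mathrm{PV}\!\int h(b)/(K(b - b_0))\,db$. Use $b = b_0 + (b - b_0)$, so that
\[
\frac{b\rho(b)}{K(b - b_0)} = \frac{\rho(b)}{K} + \frac{b_0\,\rho(b)}{K(b - b_0)} .
\]
With $b_0/K = -1/K^2$ and $\int \rho = 1$, the two terms match $1/K$ and $F'(b_0)/K^2$ respectively.

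The main obstacle is the dominated-convergence step for $F'$ at a point inside the support, where the $C^1$ (indeed $C^2$) regularity of $\rho$ is what supplies the needed Lipschitz bound on $\rho(b) - \rho(x)$. The paper also indicates that a fuller version of this moving-pole equivalence lemma appears in the supplement, and I would cite it there.
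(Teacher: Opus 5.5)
Your argument is correct, and for the identification in (b) it takes a genuinely different route from the paper. Parts (a), (c) and the existence of the principal value go essentially as in the paper: a $1/|s|$ lower bound near the pole, and cancellation of the odd $1/s$ part over a symmetric excision. Your subtraction $h(b)=[h(b)-h(b_0)]+h(b_0)$ is just an explicit form of that parity cancellation.

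For the derivative, the paper invokes Lemma~SM1.1, which works with the moving pole directly. It splits the difference quotient into an exterior region and a shell $|y|\le\delta$ around $b_0$, and Taylor-expands $\rho$ to first order there. It then evaluates the explicit shell integrals $I_0(\Delta b)$ and $I_1(\Delta b)$ in the pole shift $\Delta b=h/(K(K+h))$, matching constant and linear terms against the PV term by term.

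Your factorization $\log|1+bK|=\log|K|+\log|b-b_0|$ instead gives $J(K)=\log|K|+F(-1/K)$ exactly. The moving pole thus becomes a smooth reparametrization of a fixed logarithmic potential. The only singular analysis left is differentiability of $F$ at one interior point, which needs only a Lipschitz bound on $\rho(b)-\rho(b_0)$ and the closed-form integral $\int_{b_{\min}}^{b_{\max}}\log|b-x|\,db$. Your formula for that integral is exact; there is no sign ambiguity.

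Your route buys three things:
\begin{itemize}
\item two-sided differentiability at once;
\item no linear-term computation, since the factor $b$ in $b\rho(b)$ is absorbed by your identity with $b_0/K=-1/K^2$;
\item it works directly on $[b_{\min},b_{\max}]$.
\end{itemize}
By contrast, the paper applies a lemma stated for $\rho\in C^1(\mathbb{R})$. The zero extension of a density that is positive at $b_{\min},b_{\max}$ (\cref{asm:regularity}) does not satisfy that hypothesis, even though the lemma's proof only uses regularity near the pole. In exchange, the paper's route stays in the parity-shell format it reuses for the Hessian, the bias, and the variance.

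One step needs tightening. The difference quotient $[\rho(b)-\rho(x_0)]\,t^{-1}\bigl(\log|b-x_0-t|-\log|b-x_0|\bigr)$ has a logarithmic singularity at the moving point $b=x_0+t$. So no single $t$-independent integrable majorant exists, and dominated convergence does not apply verbatim. Treat the shell $|b-x_0|\le 2|t|$ separately:
\begin{itemize}
\item Outside the shell, the quotient is bounded by $2\|\rho'\|_\infty$ (mean value theorem) and converges pointwise, so dominated convergence applies there.
\item Inside the shell, the substitution $u=(b-x_0)/t$ bounds its contribution by $|t|\,\|\rho'\|_\infty\int_{|u|\le 2}|u|\,\bigl|\log|u-1|-\log|u|\bigr|\,du=O(|t|)$.
\end{itemize}
Also make explicit that the constant you subtract is the fixed value $\rho(x_0)$. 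If you subtract $\rho(x)$ instead, the first piece acquires an extra term $-\rho'(x)\int\log|b-x|\,db$, which you must then cancel against the product rule on the second piece.
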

\begin{proof}
\emph{(a)} Near \(b = b_{\mathrm{sing}}(K)\), write
\(s := b - b_{\mathrm{sing}}(K)\), so \(1 + bK = Ks\). Then
\begin{equation}
    |\phi_K(b_{\mathrm{sing}} + s)| = \frac{\rho(b_{\mathrm{sing}}+s)\cdot|b_{\mathrm{sing}}+s|}{|K|\cdot|s|}.
\end{equation}
By \cref{asm:regularity}, \(\rho\) is continuous and strictly positive at
\(b_{\mathrm{sing}}(K) \in (b_{\min}, b_{\max})\), so
\begin{equation}\label{eq:phi_limit}
    \rho(b_{\mathrm{sing}}+s)\cdot|b_{\mathrm{sing}}+s| \;\longrightarrow\; \rho(b_{\mathrm{sing}})\cdot b_{\mathrm{sing}} > 0
    \qquad \text{as } s \to 0.
\end{equation}
Hence \(|\phi_K(b)| \ge C/|s|\) for \(|s|\) small and
some \(C > 0\), and \(\int 1/|s|\,ds\) diverges logarithmically, proving
\(\phi_K \notin L^1\).

\emph{(b)} Split
\([b_{\min}, b_{\max}] = [b_{\min}, b_{\mathrm{sing}} - h] \cup [b_{\mathrm{sing}} - h, b_{\mathrm{sing}} + h] \cup [b_{\mathrm{sing}} + h, b_{\max}]\)
and define the principal value as
\begin{equation}
    \mathrm{PV}\!\int \phi_K\,db := \lim_{h \downarrow 0}\left[\int_{b_{\min}}^{b_{\mathrm{sing}}-h}\phi_K\,db + \int_{b_{\mathrm{sing}}+h}^{b_{\max}}\phi_K\,db\right].
\end{equation}
On the symmetric shell
\([b_{\mathrm{sing}} - h, b_{\mathrm{sing}} + h]\), expand
\begin{equation}\label{eq:phi_expand}
    \phi_K(b_{\mathrm{sing}} \pm s) = \frac{\rho(b_{\mathrm{sing}} \pm s)\cdot(b_{\mathrm{sing}} \pm s)}{\pm Ks}.
\end{equation}
The leading \(1/s\) terms cancel between the \(+s\) and \(-s\) pieces by
odd parity; the remaining contributions are \(O(1)\) and integrable, so
the two one-sided limits agree and the principal value exists.

For the directional derivative,
\begin{equation}\label{eq:dir_deriv_diff}
    J(K+h) - J(K) = \int_{b_{\min}}^{b_{\max}} \rho(b)\bigl[\log|1+b(K+h)| - \log|1+bK|\bigr]\,db.
\end{equation}
For $b$ bounded away from the singularity $b_{\mathrm{sing}}(K) = -1/K$, the difference quotient's integrand converges uniformly to $\phi_K(b) = \rho(b)b/(1+bK)$ as $h \to 0$. However, because the pole $b_{\mathrm{sing}}(K+h)$ moves with $h$, the interchange of the limit and integration requires explicitly evaluating the local parity cancellation around the moving singularity. Lemma~SM1.1 (stated and proved in the supplement) establishes this exact evaluation under the hypothesis that $\rho \in C^1(\mathbb{R})$ has sufficient decay at infinity; this hypothesis is implied by compactness of $\mathrm{supp}(\rho) \subset [b_{\min}, b_{\max}]$ under \cref{asm:compact_noise,asm:regularity}, since $\rho$ extends by zero outside its compact support. Applying the lemma yields $\lim_{h \to 0}[J(K+h) - J(K)]/h = \mathrm{PV}\!\int \phi_K(b)\,db$.

\emph{(c)} Direct from the form
\(\phi_K(b_{\mathrm{sing}}+s) = \rho(b_{\mathrm{sing}}+s)(b_{\mathrm{sing}}+s)/(Ks)\)
and continuity and positivity of \(\rho\cdot b\) at
\(b_{\mathrm{sing}}\). 
\end{proof}

\Cref{prop:non-lebesgue-gradient} formalizes the cusp obstruction of
\cref{11-problem-and-motivation}: the non-Lebesgue character of \(dJ/dK\)
at any \(K\) with \(b_{\mathrm{sing}}(K)\) interior to
\(\mathrm{supp}(\rho)\). The name reflects both the singular (infinite)
behavior of the gradient integrand at \(b_{\mathrm{sing}}(K)\) in sample
space, and the non-smooth cusp-like structure of the expected gradient
\(dJ/dK\) as a function of \(K\) at the optimum.

Consequently, \(dJ/dK\) admits no Lebesgue-defined Lipschitz constant on
any compact \(\mathcal{C} \subset \mathcal{K}_{\rm stab}\) that contains
a \(K\) with \(b_{\rm sing}(K) \in (b_{\min}, b_{\max})\) --- including
\(\mathcal{C} = \{K^*\}\) --- so any policy-gradient analysis predicated
on the existence of a global gradient-Lipschitz constant is structurally
inapplicable. We add that the Lyapunov-equation almost-smoothness machinery of \cite{fazel2018global, gravell2020learning} is structurally unavailable for the log-growth cost: the per-step increment $\log|X_{t+1}/X_t| = \log|a| + \log|1 + B_t K|$ is additive in $\log|X_t|$, not in $X_t^2$; the Bellman value function for the resulting log-growth cost is therefore not polynomial in $X_t$, and the algebraic Lyapunov-equation machinery used for LQR in \cite{fazel2018global, gravell2020learning} has no log-growth analog. The regularization developed in section~\ref{3-cauchy-regularization-and-uniform-in-epsilon-polyak-Lojasiewicz} takes a different route.

\section{Cauchy regularization and the uniform Polyak--{\L}ojasiewicz bound}\label{3-cauchy-regularization-and-uniform-in-epsilon-polyak-Lojasiewicz}

\subsection{The regularized objective}\label{31-the-regularized-objective}

We introduce a one-parameter family of smooth objectives converging to
\(J\).

\begin{definition}[Cauchy-regularized log-growth cost]
\label{def:cauchy-regularized-log-growth-cost}For
\(\varepsilon > 0\), let
\begin{equation}\label{eq:reg_cost}
    J_\varepsilon(K) := \mathbb{E}\!\left[\tfrac{1}{2}\log\bigl((1+BK)^2 + \varepsilon^2\bigr)\right] 
= \int_{b_{\min}}^{b_{\max}} \rho(b)\cdot\tfrac{1}{2}\log\bigl((1+bK)^2 + \varepsilon^2\bigr)\,db.
\end{equation}
\end{definition}
The integrand is \(C^\infty\) in \((b, K)\) for each
\(\varepsilon > 0\), so \(J_\varepsilon\) inherits the regularity of
\(\rho\) through the domain of integration:
\(J_\varepsilon \in C^\infty(\mathbb{R})\) whenever \(\rho \in C^0\),
independently of whether \(b_{\mathrm{sing}}(K)\) lies in
$\mathrm{supp}(\rho)$.

The following regularity and bias estimates are direct consequences of
\cref{def:cauchy-regularized-log-growth-cost}.

\begin{proposition}[basic properties of \(J_\varepsilon\)]
\label{prop:properties-J-epsilon}Under \cref{asm:compact_noise,asm:regularity}:

(a) For each \(\varepsilon > 0\),
\(J_\varepsilon \in C^\infty(\mathbb{R})\), and
\begin{equation}
    \frac{dJ_\varepsilon}{dK}(K) = \int_{b_{\min}}^{b_{\max}} \rho(b)\cdot\frac{b(1+bK)}{(1+bK)^2 + \varepsilon^2}\,db.
\end{equation}

(b) \(J_\varepsilon \to J\) pointwise on \(\mathcal{K}_{\mathrm{stab}}\)
as \(\varepsilon \downarrow 0\) and for every
\(K \in \mathcal{K}_{\rm stab}\) and every \(\varepsilon \ge 0\),
\begin{equation}
    J(K) \le J_\varepsilon(K).
\end{equation}

(c) \emph{(Bias at the optimum.)} The bias at the optimum satisfies
\begin{equation}\label{eq:Jeps_bias}
    J_\varepsilon^* - J^* = \frac{\pi\,\rho(b_{\mathrm{sing}}(K^*))}{|K^*|}\,\varepsilon + O(\varepsilon^2), \qquad \varepsilon \downarrow 0.
\end{equation}
\end{proposition}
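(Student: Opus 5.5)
Part (a) is differentiation under the integral sign. For fixed \(\varepsilon>0\) the integrand \(g_\varepsilon(b,K) := \tfrac12\log\bigl((1+bK)^2+\varepsilon^2\bigr)\) is \(C^\infty\) in \(K\). Its \(K\)-derivatives of every order are bounded uniformly for \(b\in[b_{\min},b_{\max}]\) and \(K\) in compacts, because the denominator \((1+bK)^2+\varepsilon^2 \ge \varepsilon^2\) never vanishes. Since \(\rho\) is integrable on a compact interval, dominated convergence lets us differentiate under the integral to all orders. The first derivative gives the displayed formula directly: \(\partial_K g_\varepsilon = b(1+bK)/((1+bK)^2+\varepsilon^2)\).

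For part (b), monotonicity is pointwise. We have \(\tfrac12\log(x^2+\varepsilon^2)\ge \log|x|\), with equality at \(\varepsilon=0\), and integrating against \(\rho\) gives \(J\le J_\varepsilon\) (trivially equal at \(\varepsilon=0\)). The integrand decreases monotonically in \(\varepsilon\downarrow0\) to \(\log|1+bK|\). It is bounded above by \(\tfrac12\log((1+bK)^2+1)\) for \(\varepsilon\le1\), and \(\log|1+bK|\) is integrable: away from \(b_{\rm sing}\) it is bounded, and near \(b_{\rm sing}\) it is a logarithmic singularity. So monotone convergence (or dominated convergence with the integrable majorant \(|\log|1+bK||+\log 2+\dots\)) yields \(J_\varepsilon(K)\to J(K)\).

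Part (c) carries the content. I interpret \(J_\varepsilon^*\) as \(\min_{K\in\mathcal N_\delta}J_\varepsilon\) near \(K^*\), and proceed in two steps.

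Step 1 evaluates the pointwise bias at \(K^*\). Write
\[
J_\varepsilon(K^*)-J(K^*)=\int\rho(b)\,\tfrac12\log\Bigl(1+\tfrac{\varepsilon^2}{(1+bK^*)^2}\Bigr)db .
\]
Substitute \(s=b-b_{\rm sing}(K^*)\), so that \(1+bK^*=K^*s\). With \(\varepsilon'=\varepsilon/|K^*|\), the integrand becomes \(\rho(b_{\rm sing}+s)\tfrac12\log(1+\varepsilon'^2/s^2)\). The exact integral \(\int_{\mathbb R}\tfrac12\log(1+\varepsilon'^2/s^2)\,ds=\pi\varepsilon'\) gives the leading term \(\pi\rho(b_{\rm sing})\varepsilon/|K^*|\). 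For the correction:
\begin{itemize}
\item the tails outside the support contribute \(O(\varepsilon^2)\), since the integrand is \(O(\varepsilon^2/s^2)\) there and \(b_{\rm sing}\) is interior by \cref{lem:strict-cusp-optimum};
\item the first-order Taylor term \(\rho'(b_{\rm sing})s\) is odd and cancels on a symmetric window, with an \(O(\varepsilon^2)\) remainder off the window;
\item the second-order remainder is \(O(s^2)\cdot\log(1+\varepsilon'^2/s^2)\), which integrates to \(O(\varepsilon^2)\) (indeed \(O(\varepsilon^2)\) after scaling \(s=\varepsilon' u\): \(\varepsilon'^3\int u^2\log(1+u^{-2})\) diverges only linearly at large \(u\), and cutting at \(u\sim1/\varepsilon'\) gives \(O(\varepsilon^2)\)).
\end{itemize}
This uses \(\rho\in C^2\).

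Step 2 passes from \(J_\varepsilon(K^*)\) to the minimum value, and is the main obstacle. The upper bound is immediate: \(J_\varepsilon^*\le J_\varepsilon(K^*)\). For the lower bound, part (b) gives \(J_\varepsilon(K)\ge J(K)\ge J^*\) only at order zero, which is not enough. Instead I would show \(J_\varepsilon^*-J_\varepsilon(K^*)=O(\varepsilon^2)\). The minimizer \(K_\varepsilon\) satisfies \(|K_\varepsilon-K^*|=O(\varepsilon)\) by stability of the first-order condition under the nondegeneracy \cref{asm:optimum}. Given that, I would write
\[
J_\varepsilon(K_\varepsilon)-J_\varepsilon(K^*) = J_\varepsilon(K_\varepsilon)-J(K_\varepsilon)-[J_\varepsilon(K^*)-J^*]+[J(K_\varepsilon)-J^*].
\]
The last bracket is nonnegative and \(O(\varepsilon^2)\), since \(K^*\) is a nondegenerate local minimum. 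The first difference equals \(\pi\rho(b_{\rm sing}(K))\varepsilon/|K|+O(\varepsilon^2)\), uniformly for \(K\) near \(K^*\) by Step 1. This coefficient is Lipschitz in \(K\), so it changes by \(O(\varepsilon)\cdot\varepsilon\). Hence \(J_\varepsilon^*=J_\varepsilon(K^*)+O(\varepsilon^2)\), which together with Step 1 gives \eqref{eq:Jeps_bias}.
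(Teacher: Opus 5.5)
Your parts (a) and (b), and Step~1 of (c), follow the paper's argument: differentiation under the integral, monotone or dominated convergence, then the near/far split with $\int_{\mathbb R}\tfrac12\log(1+\varepsilon'^2/s^2)\,ds=\pi\varepsilon'$, odd cancellation of the linear Taylor term, and an $O(\varepsilon^2)$ quadratic remainder.

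Step~2 is where you take a genuinely different route. The paper writes $J_\varepsilon^*-J^*=[J_\varepsilon(K^*)-J(K^*)]-[J_\varepsilon(K^*)-J_\varepsilon(K_\varepsilon^*)]$. It bounds the second bracket by Taylor-expanding $J_\varepsilon$ at its critical point $K_\varepsilon^*$, using the uniform-in-$\varepsilon$ bound $d^2J_\varepsilon/dK^2\le L_0$ of \cref{lem:uniform-PL-constant} and $|K_\varepsilon^*-K^*|\le C_K\varepsilon$ from \cref{lem:persistence-regularized-minimizer}. So it relies on the parity-driven uniform Hessian bounds.

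You telescope through the unregularized $J$ instead. That needs only three things:
\begin{itemize}
\item the Step~1 expansion uniformly over $K\in\mathcal N_\delta$, which your argument gives since it used only the pole-to-edge margin and $\|\rho\|_{C^2}$;
\item Lipschitz continuity of $c(K):=\pi\rho(b_{\rm sing}(K))/|K|$;
\item $J(K)\ge J^*$ near $K^*$.
\end{itemize}
No bound on $d^2J_\varepsilon/dK^2$ enters. Only the sign of $J(K_\varepsilon)-J^*$ matters, because $J_\varepsilon^*\le J_\varepsilon(K^*)$ already gives the other inequality.

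The one input you assert rather than prove is $|K_\varepsilon-K^*|=O(\varepsilon)$. ``Stability of the first-order condition'' is exactly \cref{lem:persistence-regularized-minimizer}(b), which the paper proves using the uniform lower bound $\mu_0$. Your own decomposition makes this step unnecessary. Let $m>0$ be the quadratic-growth constant of $J$ at $K^*$, which follows from \cref{asm:optimum} and continuity in $K$ of the finite-part Hessian at $\varepsilon=0$. Let $L_c$ be the Lipschitz constant of $c$. Then for every $K\in\mathcal N_\delta$ your decomposition gives $J_\varepsilon(K)-J_\varepsilon(K^*)\ge\tfrac m2(K-K^*)^2-L_c\varepsilon|K-K^*|-C\varepsilon^2\ge-\bigl(L_c^2/(2m)+C\bigr)\varepsilon^2$. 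This yields $J_\varepsilon^*\ge J_\varepsilon(K^*)-O(\varepsilon^2)$ without locating the minimizer, and $|K_\varepsilon-K^*|=O(\varepsilon)$ follows as a by-product.

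Completed this way, your route is more elementary. It uses only $\varepsilon=0$ regularity of $J$ and none of the uniform-in-$\varepsilon$ PL and smoothness machinery. The paper's route is shorter only because it needs that machinery anyway for the algorithmic results.
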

The proof of (a) and (b) follows from dominated convergence, noting that
\(\tfrac{1}{2}\log((1+bK)^2 + \varepsilon^2) \to \log|1+bK|\) pointwise
for \(b \ne b_{\rm sing}(K)\) and that the sets where convergence fails
have Lebesgue measure zero. The bound in (b) holds because
\((1+bK)^2 + \varepsilon^2 \ge (1+bK)^2\), so
\(\tfrac{1}{2}\log((1+bK)^2 + \varepsilon^2) \ge \log|1+bK|\) pointwise
in \(b\); integrating against \(\rho\) yields the claim.

The proof of (c) --- sketched in Appendix B.1 --- turns on the same
parity argument as the proof of \cref{lem:uniform-PL-constant} (Appendix A), applied to a different integrand.
Substituting \(s = b - b_\star\) with \(b_\star = -1/K^*\) and rescaling
\(t = K^* s/\varepsilon\), the bias splits into three terms via Taylor
expansion of \(\rho\) about \(b_\star\):
\begin{equation}
\begin{aligned}
        J_\varepsilon(K^*) - J(K^*)=& \frac{\varepsilon}{|K^*|}\!\int_{-T}^{T}\!\!\Bigl[\,\underbrace{\rho(b_\star)}_{\to\pi\rho(b_\star)\text{ via }I_\infty} + \underbrace{\tfrac{\varepsilon t}{K^*}\,\rho'(b_\star)}_{\emph{vanishes by odd parity}}+ O(\varepsilon^2 t^2)\,\Bigr]\\ &\cdot \tfrac{1}{2}\log\!\bigl(1 + 1/t^2\bigr)\,dt+O(\varepsilon^2),
\end{aligned}
\end{equation}
with \(T = |K^*|R/\varepsilon\). The constant term, integrated against
the Poisson-kernel-like profile \(\tfrac{1}{2}\log(1+1/t^2)\) on the
real line, gives exactly \(\pi\rho(b_\star)\) (a Frullani-type
evaluation). The linear-in-\(t\) term, an odd function of \(t\) over the
symmetric interval \([-T, T]\), integrates to zero. The quadratic
remainder contributes \(O(\varepsilon^2)\). The leading
\(\pi\rho(b_\star)\varepsilon/|K^*|\) has no logarithmic factor.

\subsection{The Hessian representation}\label{32-the-hessian-representation}

The analytical heart of section~\ref{3-cauchy-regularization-and-uniform-in-epsilon-polyak-Lojasiewicz} is a closed-form integration-by-parts decomposition of $d^2 J_\varepsilon/dK^2$ that renders the $\varepsilon$-dependence of the second derivative explicit and enables the uniform-in-$\varepsilon$ PL bound of section~\ref{33-the-uniform-in-epsilon-polyak-Lojasiewicz-lemma}.

\begin{theorem}[Hessian decomposition]
\label{thm:hessian-decomposition}Under \cref{asm:compact_noise,asm:regularity}, for every \(K \in \mathcal{K}_{\mathrm{stab}}\) and every
\(\varepsilon > 0\),
\begin{equation}\label{eq:hess_decomp}
    \frac{d^2 J_\varepsilon}{dK^2}(K) = B_\varepsilon(K) + I_\varepsilon(K),
\end{equation}

where, with \(v(b, K) := 1 + bK\) and
\(g(b) := 2b\rho(b) + b^2\rho'(b)\),
\begin{equation}\label{eq:hess_B}
    B_\varepsilon(K) := \frac{1}{K}\!\left[\frac{\rho(b)\,b^2\,v(b,K)}{v(b,K)^2 + \varepsilon^2}\right]_{b = b_{\min}}^{b = b_{\max}},
\end{equation}
\begin{equation}\label{eq:hess_I}
    I_\varepsilon(K) := -\,\frac{1}{K}\!\int_{b_{\min}}^{b_{\max}} g(b)\cdot\frac{v(b,K)}{v(b,K)^2 + \varepsilon^2}\,db.
\end{equation}
As \(\varepsilon \downarrow 0\), the decomposition passes to the limit
in the Hadamard finite-part sense: both \(B_\varepsilon(K)\) and
\(I_\varepsilon(K)\) have well-defined limits \(B_0(K)\) and \(I_0(K)\)
for every \(K \in \mathcal{K}_{\mathrm{stab}}\) with
\(b_{\mathrm{sing}}(K) \in (b_{\min}, b_{\max})\), and
\(d^2 J_0/dK^2(K) := B_0(K) + I_0(K)\) gives the Hadamard finite-part
extension of the (otherwise divergent) formal second derivative of
\(J\).

Setting \(K = K^*\) in \eqref{eq:hess_decomp}--\eqref{eq:hess_I} and passing to the Hadamard
finite-part limit gives
\begin{equation}\label{eq:hess_Kstar}
    \left.\frac{d^2 J}{dK^2}\right|_{K=K^*} = \frac{1}{K^*}\!\left[\frac{\rho(b_{\max})b_{\max}^2}{1 + b_{\max}K^*} - \frac{\rho(b_{\min})b_{\min}^2}{1 + b_{\min}K^*}\right]  - \frac{1}{K^*}\,\mathrm{PV}\!\int_{b_{\min}}^{b_{\max}}\!\frac{g(b)}{1 + bK^*}\,db.
\end{equation}
The right-hand side is a finite quantity for every \(\rho\) satisfying
\cref{asm:compact_noise,asm:regularity}.
\end{theorem}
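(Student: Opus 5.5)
The plan is to differentiate the first-derivative formula of \cref{prop:properties-J-epsilon}(a) once more in \(K\), and then move the \(K\)-derivative onto the \(b\)-variable by a chain-rule identity followed by an integration by parts. For \(\varepsilon>0\) the integrand is smooth and bounded in \((b,K)\), so differentiation under the integral is legitimate and gives
\[
\frac{d^2J_\varepsilon}{dK^2}(K)=\int_{b_{\min}}^{b_{\max}}\rho(b)\,b^2\,F_\varepsilon'(v)\,db,\qquad F_\varepsilon(v):=\frac{v}{v^2+\varepsilon^2},\quad v=v(b,K).
\]
The key observation is that \(\partial_b F_\varepsilon(v(b,K))=K\,F_\varepsilon'(v)\). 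Hence \(\rho b^2F_\varepsilon'(v)=\tfrac1K\,\rho b^2\,\partial_bF_\varepsilon(v)\); this requires \(K\neq0\), which holds on \(\mathcal{K}_{\rm stab}\) because \(J(0)=0\).

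Integrating by parts with \(\partial_b(\rho b^2)=2b\rho+b^2\rho'=g\), where \(\rho\in C^1\) suffices, produces exactly the boundary term \(B_\varepsilon(K)\) of \eqref{eq:hess_B} and the bulk term \(I_\varepsilon(K)\) of \eqref{eq:hess_I}. This establishes \eqref{eq:hess_decomp} for every \(\varepsilon>0\) by a purely algebraic identity.

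For the limit \(\varepsilon\downarrow0\), first take \(b_{\rm sing}(K)\in(b_{\min},b_{\max})\). Then \(v(b_{\min},K)\) and \(v(b_{\max},K)\) are nonzero, so \(B_\varepsilon(K)\to B_0(K)=\tfrac1K\bigl[\rho b^2/v\bigr]_{b_{\min}}^{b_{\max}}\). For \(I_\varepsilon\), I will invoke the classical Sokhotski--Plemelj-type fact that
\[
\int g(b)\,\frac{v}{v^2+\varepsilon^2}\,db\;\longrightarrow\;\mathrm{PV}\!\int \frac{g(b)}{v}\,db
\]
whenever \(g\) is Hölder (here \(C^1\), since \(\rho\in C^2\)). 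The argument runs as follows. Substitute \(s=b-b_{\rm sing}\), so that \(v=Ks\), and split the integral into a symmetric window \([-R,R]\) and its complement. On the complement, dominated convergence applies directly. On the window, write \(g(b_{\rm sing}+s)=g(b_{\rm sing})+[g(b_{\rm sing}+s)-g(b_{\rm sing})]\). The constant part multiplies the odd kernel \(Ks/(K^2s^2+\varepsilon^2)\) over a symmetric interval and therefore vanishes identically; this is the same parity cancellation used in \cref{prop:non-lebesgue-gradient}(b). The difference part is \(O(|s|)\), so \(|s|\cdot\bigl|Ks/(K^2s^2+\varepsilon^2)\bigr|\le 1/|K|\) provides an integrable dominating function. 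The PV integral is thereby given by the same decomposition, which makes it finite.

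If instead \(b_{\rm sing}(K)\) lies outside \([b_{\min},b_{\max}]\), everything converges by dominated convergence. The endpoint case \(b_{\rm sing}\in\{b_{\min},b_{\max}\}\) is excluded, or at least not claimed, in the statement. Setting \(K=K^*\) and using \cref{lem:strict-cusp-optimum}, which places the pole in the interior, then yields \eqref{eq:hess_Kstar} and its finiteness.

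The main obstacle is the identification \(d^2J_0/dK^2:=B_0+I_0\) as the Hadamard finite-part extension of the formal second derivative \(\int\rho b^2(-1/v^2)\,db\). I would handle it by running the same integration by parts directly on the finite-part integral, excising \([b_{\rm sing}-h,b_{\rm sing}+h]\). The resulting inner boundary terms are \(\tfrac1K\rho b^2/v\) evaluated at \(b_{\rm sing}\pm h\). These are \(\pm C/(Kh)\) plus \(O(1)\), and their symmetric difference gives the \(1/h\) divergence that the finite part discards, plus an \(O(h)\) remainder. What survives as \(h\to0\) is precisely \(B_0+I_0\). Since the theorem essentially takes this as the definition, this step only needs to be checked for consistency.
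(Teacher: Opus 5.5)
Your proposal is correct and follows the paper's proof essentially step for step: differentiate under the integral, use $\partial_b v = K$ to turn the $K$-derivative into a $b$-derivative, integrate by parts, pass to the limit in the boundary term classically, and treat the integral term with the parity-shell split, where the constant term is killed by the odd symmetry of the Cauchy kernel. Two refinements are worth keeping. First, your first-order split $g(b_{\rm sing}+s)-g(b_{\rm sing}) = O(|s|)$, with dominating bound $1/|K|$, needs only $g\in C^1$, which $\rho\in C^2$ supplies; the paper's second-order Taylor expansion tacitly uses $g''$. Second, your excision argument, whose inner boundary terms sum to $-2\rho(b_{\rm sing})\,b_{\rm sing}^2/(K^2h)+O(h)$, actually verifies the Hadamard finite-part identification $B_0+I_0$ that the paper only asserts.
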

\begin{proof}
We differentiate the expression for \(dJ_\varepsilon/dK\) in \cref{prop:properties-J-epsilon}(a) twice in $K$ under the integral sign (permissible for \(\varepsilon > 0\) by standard dominated-convergence arguments), obtaining
\begin{equation}\label{eq:hess_raw}
    \frac{d^2 J_\varepsilon}{dK^2}(K) = \int_{b_{\min}}^{b_{\max}} \rho(b)\,b^2\cdot\frac{\varepsilon^2 - v(b,K)^2}{(v(b,K)^2 + \varepsilon^2)^2}\,db.
\end{equation}
Since $\partial_b v(b,K) = K$ (differentiation here is in $b$, at fixed $K$), the integrand of \eqref{eq:hess_raw} equals
$\rho(b)b^2 \cdot K^{-1}\,\partial_b\!\left[v/(v^2+\varepsilon^2)\right]$.
Integration by parts in $b$ over $[b_{\min}, b_{\max}]$ gives
\begin{equation}
        \frac{d^2 J_\varepsilon}{dK^2}(K) = \frac{1}{K}\!\left[\rho(b)\,b^2\cdot\frac{v(b,K)}{v(b,K)^2+\varepsilon^2}\right]_{b_{\min}}^{b_{\max}} - \frac{1}{K}\int_{b_{\min}}^{b_{\max}}\!\frac{d}{db}[\rho(b)\,b^2]\cdot\frac{v(b,K)}{v(b,K)^2+\varepsilon^2}\,db,
\end{equation}
which matches \eqref{eq:hess_decomp}--\eqref{eq:hess_I} with
\(g(b) = d/db[\rho(b)b^2] = 2b\rho(b) + b^2\rho'(b)\). The boundary term
\(B_\varepsilon \to B_0\) classically as \(\varepsilon \downarrow 0\)
since \(v(b_\pm, K)\) is bounded away from zero for
\(K \in \mathcal{K}_{\rm stab}\) with \(b_{\rm sing}(K)\) interior to
\((b_{\min}, b_{\max})\).

On the near-shell $|b - b_{\rm sing}(K)| \le R$, the parity-shell 
decomposition of Appendix A applied to the integrand of $I_\varepsilon$ 
--- with $g$ playing the role of $f$ in that proof --- gives a constant 
term whose integral
$\int_{-R}^R Ks/(K^2 s^2 + \varepsilon^2)\,ds$
vanishes identically for every $\varepsilon \ge 0$ by odd symmetry of 
the integrand on $[-R, R]$ (at $\varepsilon = 0$ this is the 
symmetric-cutoff principal-value definition \eqref{eq:bsing}), a linear term 
continuous in $\varepsilon$ at zero, and a quadratic remainder of size 
$O(R^2 K^{-2})$ uniform in $\varepsilon$. The convergence 
$I_\varepsilon(K) \to I_0(K)$ as $\varepsilon \downarrow 0$ holds 
term-by-term in the parity-shell decomposition, with the 
$\varepsilon = 0$ value of the constant term interpreted in the 
principal-value sense.
\end{proof}

\subsection{\texorpdfstring{The uniform-in-$\varepsilon$ Polyak--{\L}ojasiewicz lemma}{The uniform-in-epsilon Polyak-Lojasiewicz lemma}}\label{33-the-uniform-in-epsilon-polyak-Lojasiewicz-lemma}

The decomposition \eqref{eq:hess_decomp} is the technical vehicle for the principal result
of section~\ref{3-cauchy-regularization-and-uniform-in-epsilon-polyak-Lojasiewicz}: a uniform-in-$\varepsilon$ bound on $d^2J_\varepsilon/dK^2$ near $K^*$.

\begin{lemma}[uniform-in-\(\varepsilon\) PL constant]
\label{lem:uniform-PL-constant}Under \cref{asm:compact_noise,asm:regularity,asm:optimum}, there exist constants \(\delta > 0\), \(\varepsilon_0 > 0\),
\(\mu_0 > 0\), and \(L_0 < \infty\) such that, writing
\(\mathcal{N}_\delta := [K^* - \delta, K^* + \delta]\):

(a)
\(\{b_{\mathrm{sing}}(K) : K \in \mathcal{N}_\delta\} \subset [b_{\min} + \tau, b_{\max} - \tau]\)
for some \(\tau > 0\) depending only on \(\delta\).

(b) The map \((K, \varepsilon) \mapsto d^2 J_\varepsilon/dK^2(K)\) is
jointly continuous on \(\mathcal{N}_\delta \times [0, \varepsilon_0]\),
where the value at \(\varepsilon = 0\) is defined via the Hadamard
finite part of \cref{thm:hessian-decomposition}.

(c) For all
\((K, \varepsilon) \in \mathcal{N}_\delta \times [0, \varepsilon_0]\),
\begin{equation}
    \mu_0 \le \frac{d^2 J_\varepsilon}{dK^2}(K) \le L_0.
\end{equation}
The explicit forms of \(\delta, \varepsilon_0, \mu_0, L_0\) are given in
the proof.
\end{lemma}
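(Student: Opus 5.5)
The plan is to prove (a) by continuity and then obtain (b) and (c) from a parity-shell analysis of the decomposition in \cref{thm:hessian-decomposition}. After that, (c) follows from joint continuity, compactness and \cref{asm:optimum}.

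For (a), \cref{lem:strict-cusp-optimum} gives $b_\star := b_{\rm sing}(K^*) = -1/K^* \in (b_{\min}, b_{\max})$. In particular $K^*<0$ and $K^*$ is bounded away from $0$. The map $K \mapsto -1/K$ is continuous and strictly monotone near $K^*$. We choose $\delta_1 < |K^*|/2$ so that $-1/K$ stays inside $[b_{\min}+\tau, b_{\max}-\tau]$ for $K\in\mathcal{N}_{\delta_1}$, with $\tau := \tfrac12\min(b_\star-b_{\min}, b_{\max}-b_\star)$ minus the displacement $\sup_{\mathcal{N}_{\delta_1}}|1/K-1/K^*| \le \delta_1/(|K^*|-\delta_1)|K^*|$. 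Shrinking $\delta_1$ also keeps $\mathcal{N}_{\delta_1}\subset\mathcal{K}_{\rm stab}$, which is open. Since $|v(b_{\min/\max},K)| = |K|\,|b_{\min/\max}-b_{\rm sing}(K)| \ge |K|\tau$, the endpoint values of $v$ are bounded away from zero uniformly on $\mathcal{N}_{\delta_1}$.

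For (b), we treat the two pieces of \eqref{eq:hess_decomp} separately. The boundary term $B_\varepsilon(K)$ is an explicit rational function of $(K,\varepsilon)$. By (a) its denominators satisfy $v^2+\varepsilon^2 \ge K^2\tau^2 > 0$, so it is jointly continuous on $\mathcal{N}_{\delta_1}\times[0,1]$.

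For $I_\varepsilon(K)$, we fix $R := \tau/2$ and split the $b$-integral into a near shell $|b-b_{\rm sing}(K)|\le R$ and its complement.
\begin{itemize}
\item On the complement, $|v|\ge |K|R$. The integrand $g\,v/(v^2+\varepsilon^2)$ is then jointly continuous and bounded by $\|g\|_\infty/(|K|R)$, so dominated convergence gives joint continuity. Here $g = 2b\rho + b^2\rho'$ is $C^1$ by \cref{asm:regularity}.
\item On the shell we substitute $b = b_{\rm sing}(K)+s$, so $v = Ks$, and Taylor-expand $g(b_{\rm sing}+s) = g(b_{\rm sing}) + s\,g'(b_{\rm sing}) + r(s)$ with $|r(s)|\le C s^2$. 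This step needs $g\in C^1$ with a Lipschitz derivative; if only $C^1$ is available, we instead write $g(b_{\rm sing}+s)-g(b_{\rm sing}) = s\int_0^1 g'(b_{\rm sing}+\theta s)\,d\theta$.
\end{itemize}
The constant term multiplies $\int_{-R}^R Ks/(K^2s^2+\varepsilon^2)\,ds$, which is identically zero for all $\varepsilon\ge0$ by oddness; at $\varepsilon=0$ this is the symmetric-cutoff principal value. Everything else has the form $\int_{-R}^R G(K,s)\,K s^2/(K^2s^2+\varepsilon^2)\,ds$ with $G$ continuous. The kernel factor $Ks^2/(K^2s^2+\varepsilon^2)$ is bounded by $1/|K|$ and converges pointwise (for $s\neq0$) to $1/K$ as $\varepsilon\to0$. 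Dominated convergence then yields joint continuity of the shell contribution on $\mathcal{N}_{\delta_1}\times[0,1]$, and its $\varepsilon=0$ value coincides with the PV/Hadamard value $I_0(K)$ defined in \cref{thm:hessian-decomposition}. Because the shell center moves with $K$, the continuity in $K$ is easiest to see after the change of variables, where the domain $[-R,R]$ is fixed and $b_{\rm sing}(K)$ enters only continuously through $G$.

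For (c), we set $\varepsilon_0 := 1$ (any fixed value works). The compact set $\mathcal{N}_{\delta_1}\times[0,\varepsilon_0]$ and the joint continuity from (b) give $L_0 := \max |d^2J_\varepsilon/dK^2| < \infty$, which is explicitly bounded by the term-wise estimates above. For the lower bound, \cref{asm:optimum} gives $H^* := d^2J_0/dK^2(K^*) > 0$. By joint continuity at $(K^*,0)$ there exist $\delta\le\delta_1$ and $\varepsilon_0$ such that $d^2J_\varepsilon/dK^2(K)\ge H^*/2 =: \mu_0$ on $\mathcal{N}_\delta\times[0,\varepsilon_0]$. This lower bound is local in $\varepsilon$ as well as in $K$, which is why $\varepsilon_0$ appears in the statement.

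The main obstacle is the shell analysis in (b): we must verify uniform domination of the non-odd remainder near $s=0$ while the pole moves with $K$. The fixed-radius change of variables together with the bound $Ks^2/(K^2s^2+\varepsilon^2)\le 1/|K|$ is the device that handles this.
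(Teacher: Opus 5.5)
Your proposal is correct and follows essentially the same route as the paper's proof in Appendix~A. The shared steps are the integration-by-parts split $B_\varepsilon + I_\varepsilon$, the endpoint margin $\tau$ controlling the boundary term, the near/far split at $R=\tau/2$ with the constant Taylor term removed by odd parity, joint continuity by dominated convergence, and the lower bound $\mu_0$ from \cref{asm:optimum} together with continuity at $(K^*,0)$ and compactness. One point in your favour is the first-order integral remainder $g(b_{\rm sing}+s)-g(b_{\rm sing}) = s\int_0^1 g'(b_{\rm sing}+\theta s)\,d\theta$: it needs only $g\in C^1$ (i.e.\ $\rho\in C^2$ as in \cref{asm:regularity}), whereas the paper's second-order expansion invokes $\|g''\|_\infty$ and so tacitly requires $\rho\in C^3$.
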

The proof is given in Appendix A. We describe only the analytical
mechanism, which is the core conceptual content of the paper. After the
integration-by-parts decomposition \eqref{eq:hess_decomp}, the boundary term
\(B_\varepsilon\) is uniformly bounded on
\(\mathcal{N}_\delta \times [0, \varepsilon_0]\) because the support
endpoints \(b_{\min}, b_{\max}\) stay away from the moving pole
\(b_{\rm sing}(K)\) for \(K \in \mathcal{N}_\delta\) (part (a) of the
lemma). The integral term \(I_\varepsilon\) has a near-pole component
which, after substituting \(s = b - b_{\rm sing}(K)\) and
Taylor-expanding \(g\) about \(b_{\rm sing}(K)\), contains a leading
Cauchy-kernel piece
\begin{equation}
    -\frac{g(b_{\rm sing}(K))}{K}\int_{-R}^{R}\frac{Ks}{K^2 s^2 + \varepsilon^2}\,ds = 0,
\end{equation}
which \emph{vanishes identically by odd symmetry of the integrand over
the symmetric interval \([-R, R]\)}. Without this vanishing, the piece
would scale as \(\varepsilon^{-1}\) and destroy the uniform \emph{upper}
bound \(L_0\). The remaining linear and quadratic Taylor terms are
uniformly bounded in \(\varepsilon\). This odd-parity cancellation --- a structural feature of the Cauchy kernel about the singularity $b_{\rm sing}(K) = -1/K$, not a property of the regularization --- is the central analytical mechanism of the paper. The same mechanism appeared in the proof of the bias formula in section~\ref{31-the-regularized-objective} (linear-in-$t$ term vanishes by odd parity over $[-T, T]$), and will reappear in section~\ref{42-the-density-aware-symmetric-pairing-estimator} to give a uniform variance bound for the symmetric-pairing estimator.

\begin{lemma}[Persistence of the regularized minimizer]
\label{lem:persistence-regularized-minimizer}Under the hypotheses of \cref{lem:uniform-PL-constant}, there exist
\(\varepsilon_1 \in (0, \varepsilon_0]\) and a constant
\(C_K = C_K(\rho, K^*, \mu_0, R) < \infty\) such that:

(a) For every \(\varepsilon \in [0, \varepsilon_1]\), \(J_\varepsilon\)
has a unique minimizer \(K_\varepsilon^*\) on \(\mathcal{N}_\delta\),
lying in the interior \((K^* - \delta,\, K^* + \delta)\) and
characterized by \(dJ_\varepsilon/dK(K_\varepsilon^*) = 0\).

(b) \(\bigl|K_\varepsilon^* - K^*\bigr| \le C_K\,\varepsilon\), where $C_K := 2\pi\,|\rho'(b_\star)\,b_\star + \rho(b_\star)|\,/\,(\mu_0\,(K^*)^2)$ with $b_\star := -1/K^*$ (matching the constant $C_1$ derived in part~(b.1) of the proof).

(c)
\(0 \le J_\varepsilon(K^*) - J_\varepsilon(K_\varepsilon^*) \le \tfrac{L_0}{2}\,C_K^2\,\varepsilon^2\).
\end{lemma}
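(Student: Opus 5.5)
The plan is to use the uniform strong convexity supplied by \cref{lem:uniform-PL-constant}(c): for every $\varepsilon\in[0,\varepsilon_0]$, $J_\varepsilon$ is $\mu_0$-strongly convex and $L_0$-smooth on $\mathcal{N}_\delta$ (at $\varepsilon=0$ in the Hadamard finite-part sense, which by part (b) is the continuous limit). Everything then reduces to a perturbation estimate for the first derivative at $K^*$:
\[
\Bigl|\tfrac{dJ_\varepsilon}{dK}(K^*)\Bigr| \le C_1'\,\varepsilon,
\]
where the derivative of $J$ itself vanishes there in the PV sense by \eqref{eq:pv_foc}.

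\textbf{Step 1 (gradient perturbation at $K^*$).} By \cref{prop:properties-J-epsilon}(a), $dJ_\varepsilon/dK(K^*)=\int h(b)\,v/(v^2+\varepsilon^2)\,db$ with $h=b\rho$ and $v=1+bK^*$. Subtracting \eqref{eq:pv_foc}, the integrand difference is
\[
h(b)\Bigl[\frac{v}{v^2+\varepsilon^2}-\frac{1}{v}\Bigr]=-\frac{h(b)\,\varepsilon^2}{v(v^2+\varepsilon^2)}.
\]
Off the shell $|b-b_\star|\le R$ this is $O(\varepsilon^2)$, because part (a) keeps the pole at distance $\ge\tau$ from the endpoints. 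On the shell, substitute $s=b-b_\star$ and $t=K^*s/\varepsilon$, and Taylor-expand $h(b_\star+s)=h(b_\star)+h'(b_\star)s+O(s^2)$, exactly as in the parity-shell decomposition of Appendix A and the bias proof of \cref{prop:properties-J-epsilon}(c).
\begin{itemize}
\item The constant term gives an integrand odd in $t$ over $[-T,T]$ and vanishes, taken as a PV.
\item The linear term gives $-(h'(b_\star)\varepsilon/(K^*)^2)\int_{-T}^{T}dt/(1+t^2)$, which is $\to -\pi h'(b_\star)\varepsilon/(K^*)^2$ in magnitude.
\item The quadratic remainder is $O(\varepsilon^2\log(1/\varepsilon))$ or better.
\end{itemize}
Since $h'(b_\star)=\rho'(b_\star)b_\star+\rho(b_\star)$, this yields $|dJ_\varepsilon/dK(K^*)|\le 2\pi|\rho'(b_\star)b_\star+\rho(b_\star)|\varepsilon/(K^*)^2$ for $\varepsilon\le\varepsilon_1$, where the factor $2$ absorbs the remainders once $\varepsilon_1$ is small.

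\textbf{Step 2 (existence, uniqueness, location).} On $\mathcal{N}_\delta$, $dJ_\varepsilon/dK$ is strictly increasing with slope $\ge\mu_0$. Hence
\[
\tfrac{dJ_\varepsilon}{dK}(K^*\pm\delta)\gtrless \pm\mu_0\delta - C_1'\varepsilon.
\]
Choosing $\varepsilon_1$ so that $C_1'\varepsilon_1<\mu_0\delta$, the intermediate value theorem gives a unique interior zero $K^*_\varepsilon$. Strong convexity makes it the unique minimizer on $\mathcal{N}_\delta$, which is (a); at $\varepsilon=0$ the minimizer is $K^*$ itself. Strong monotonicity then gives
\[
\mu_0|K^*_\varepsilon-K^*|\le\Bigl|\tfrac{dJ_\varepsilon}{dK}(K^*)-\tfrac{dJ_\varepsilon}{dK}(K^*_\varepsilon)\Bigr|=\Bigl|\tfrac{dJ_\varepsilon}{dK}(K^*)\Bigr|,
\]
which is (b) with $C_K=C_1'/\mu_0$.

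\textbf{Step 3 (part (c)).} Taylor-expand $J_\varepsilon$ about its stationary point $K^*_\varepsilon$ using the upper curvature bound $L_0$:
\[
0\le J_\varepsilon(K^*)-J_\varepsilon(K^*_\varepsilon)\le \tfrac{L_0}{2}|K^*-K^*_\varepsilon|^2\le\tfrac{L_0}{2}C_K^2\varepsilon^2.
\]

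\textbf{Main obstacle.} The delicate point is Step 1: making the shell expansion rigorous, uniformly in $\varepsilon$, with an explicit constant. This requires checking three things: the odd constant term vanishes as a PV at $\varepsilon>0$ (where it is genuinely integrable) and combines consistently with the PV of $1/v$; the tail $\int_{|t|>T}dt/(1+t^2)=O(\varepsilon)$ only adds an $O(\varepsilon^2)$ correction; and the $C^2$ regularity of $\rho$ from \cref{asm:regularity} controls the remainder. I would handle the combination by computing the difference integrand, which is already integrable for $\varepsilon>0$, and invoking oddness only for the $h(b_\star)$ piece on the symmetric shell.
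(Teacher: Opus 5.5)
Your proposal is correct and follows essentially the paper's proof. It uses the same far/near parity-shell estimate $|dJ_\varepsilon/dK(K^*)|\le \pi|h'(b_\star)|\varepsilon/(K^*)^2+o(\varepsilon)$, the same inversion through the uniform curvature bound $\mu_0$, and the same Taylor expansion about $K^*_\varepsilon$ with $L_0$ for (c). Two side remarks. Your $O(\varepsilon^2\log(1/\varepsilon))$ quadratic remainder is the honest order under $C^2$ regularity; the paper claims $O(\varepsilon^2)$. Also, exactly as in the paper, absorbing the remainders with a factor $2$ tacitly requires $h'(b_\star)\neq 0$; if $h'(b_\star)=0$ then $C_K=0$ and the stated bound in (b) generally fails.

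The only real variation is in (a). You get interiority quantitatively from the Step 1 bound plus strong monotonicity, with the explicit condition $C_1'\varepsilon_1<\mu_0\delta$. The paper instead argues qualitatively: $dJ_\varepsilon/dK$ converges uniformly to the principal-value gradient on $\mathcal{N}_\delta$, and $dJ/dK$ changes sign at $K^*\pm\delta$.
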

\begin{proof}
\emph{(a) Existence, uniqueness, interior persistence.}
By \cref{lem:uniform-PL-constant}(c), $d^2 J_\varepsilon/dK^2 \ge \mu_0 > 0$ uniformly on $\mathcal{N}_\delta \times [0, \varepsilon_0]$, so $J_\varepsilon$ is strictly convex on $\mathcal{N}_\delta$ for every $\varepsilon \in [0, \varepsilon_0]$ and admits a unique minimizer $K_\varepsilon^*$ there. To show $K_\varepsilon^*$ lies in the interior, we verify that $dJ_\varepsilon/dK$ has opposite signs at the endpoints of $\mathcal{N}_\delta$ for $\varepsilon$ small.

By joint continuity of \(dJ_\varepsilon/dK\) on
\(\mathcal{N}_\delta \times [0, \varepsilon_0]\) --- established in the
proof of \cref{thm:hessian-decomposition} by differentiation under the integral for
\(\varepsilon > 0\) and by the parity-shell decomposition of Appendix A
at \(\varepsilon = 0\) (where \(dJ/dK\) denotes the principal-value
extension) --- and compactness of \(\mathcal{N}_\delta\),
\begin{equation}
    \sup_{K \in \mathcal{N}_\delta}\Bigl|\frac{dJ_\varepsilon}{dK}(K) - \frac{dJ}{dK}(K)\Bigr| \to 0 \quad \text{as } \varepsilon \downarrow 0.
\end{equation}

Strict convexity of \(J\) on \(\mathcal{N}_\delta\) by \cref{lem:uniform-PL-constant}(c) applied at \(\varepsilon = 0\), where the second derivative is the Hadamard finite-part of \cref{thm:hessian-decomposition}, and the first-order condition \(dJ/dK(K^*) = 0\)
give \(dJ/dK(K^* - \delta) < 0 < dJ/dK(K^* + \delta)\). Choose
\(\varepsilon_1 \in (0, \varepsilon_0]\) so that the displayed supremum
is below
\begin{equation}
    \tfrac{1}{2}\,\min\Bigl(-\tfrac{dJ}{dK}(K^* - \delta), \tfrac{dJ}{dK}(K^* + \delta)\Bigr) \quad \text{for all } \varepsilon \in [0, \varepsilon_1].
\end{equation}

Then for every \(\varepsilon \in [0, \varepsilon_1]\),
\(dJ_\varepsilon/dK(K^* - \delta) < 0 < dJ_\varepsilon/dK(K^* + \delta)\),
and the unique minimizer \(K_\varepsilon^*\) lies strictly inside
\(\mathcal{N}_\delta\) with \(dJ_\varepsilon/dK(K_\varepsilon^*) = 0\).

(b) The rate \(|K_\varepsilon^* - K^*| = O(\varepsilon)\). We proceed in
two sub-steps.

(b.1) \(dJ_\varepsilon/dK(K^*) = O(\varepsilon)\). Since
\(dJ/dK(K^*) = 0\) in the principal-value sense, this amounts to
bounding the rate of joint continuity at \(\varepsilon = 0\) for the
specific point \(K^*\). We apply the parity-shell decomposition of
Appendix A to the integrand of \(dJ_\varepsilon/dK(K^*)\).

Substitute \(s = b - b_\star\) with
\(b_\star = b_{\rm sing}(K^*) = -1/K^*\) and set
\(f(s) := \rho(b_\star + s)\,(b_\star + s)\). Split the integration into
a near-region \(|s| \le R\) (with \(R \in (0, \tau)\) as in \cref{lem:uniform-PL-constant})
and its complement. On the complement, \(|1+bK^*| \ge |K^*|R\)
uniformly, so the integrand of \(dJ_\varepsilon/dK(K^*)\) on the
far-region satisfies
\begin{equation}
        \rho(b)\cdot b \cdot \frac{1+bK^*}{(1+bK^*)^2 + \varepsilon^2} - \rho(b)\cdot b \cdot \frac{1}{1+bK^*} = \frac{-\varepsilon^2 \rho(b)\cdot b}{(1+bK^*)\bigl[(1+bK^*)^2 + \varepsilon^2\bigr]},
\end{equation}
which is \(O(\varepsilon^2)\) pointwise with uniform integrable majorant
on the far-region; the far-region contribution to
\(dJ_\varepsilon/dK(K^*) - dJ/dK(K^*)\) is therefore
\(O(\varepsilon^2)\).

On the near-region, Taylor-expand \(f(s) = f(0) + s\,f'(0) + r(s)\) with
\(|r(s)| \le \tfrac{1}{2}\|f''\|_\infty\,s^2\). The three contributions
to \(\int_{-R}^{R} f(s)\cdot K^* s/((K^*)^2 s^2 + \varepsilon^2)\,ds\)
are:

--- \emph{Constant term.}
\(f(0)\,\int_{-R}^{R} K^* s/((K^*)^2 s^2 + \varepsilon^2)\,ds = 0\),
identically for every \(\varepsilon \ge 0\), by odd parity over
\([-R, R]\). (This is the same parity-vanishing as in the proof of \cref{lem:uniform-PL-constant}.)

--- \emph{Linear term.}
\begin{equation*}
f'(0)\int_{-R}^{R}\frac{K^* s^2}{(K^*)^2 s^2 + \varepsilon^2}\,ds
= \frac{f'(0)}{K^*}\Bigl[2R - \frac{2\varepsilon}{|K^*|}\arctan\frac{|K^*|R}{\varepsilon}\Bigr].
\end{equation*}
The $\varepsilon \downarrow 0$ limit is $2R f'(0)/K^*$. From
$\arctan(|K^*|R/\varepsilon) = \pi/2 - \arctan(\varepsilon/(|K^*|R)) = \pi/2 + O(\varepsilon)$,
the deviation from the limit is $O(\varepsilon)$.

--- \emph{Quadratic remainder.}
\begin{equation*}
\Bigl|\int_{-R}^{R}\frac{r(s)\, K^* s}{(K^*)^2 s^2 + \varepsilon^2}\,ds\Bigr|
\le \frac{\|f''\|_\infty}{2}\int_{-R}^{R}\frac{|s|^3}{(K^*)^2 s^2}\,ds
\le \frac{\|f''\|_\infty R^2}{2(K^*)^2},
\end{equation*}
uniformly in \(\varepsilon\). The deviation from the \(\varepsilon = 0\)
value is \(O(\varepsilon^2)\).

Summing the four contributions (far + three from near) and subtracting 
the principal-value identity $dJ/dK(K^*) = 0$:
\begin{equation}
    \Bigl|\frac{dJ_\varepsilon}{dK}(K^*)\Bigr| \le C_1\,\varepsilon + O(\varepsilon^2),
\end{equation}
where the leading-order constant
\begin{equation}
    C_1 = \frac{\pi\,|f'(0)|}{(K^*)^2} 
    = \frac{\pi\,|\rho'(b_\star)\,b_\star + \rho(b_\star)|}{(K^*)^2}
\end{equation}
is collected from the linear-term deviation 
$\arctan(|K^*|R/\varepsilon) = \pi/2 - \varepsilon/(|K^*|R) + O(\varepsilon^3)$ 
applied to the linear coefficient $f'(0) = \rho'(b_\star)b_\star + \rho(b_\star)$ 
(here $b_\star = -1/K^*$).

\emph{(b.2) Mean-value-theorem inversion.} By (a),
\(K_\varepsilon^* \in \mathcal{N}_\delta\) and
\(dJ_\varepsilon/dK(K_\varepsilon^*) = 0\). Apply the mean value theorem
to \(dJ_\varepsilon/dK\) on the segment from \(K^*\) to
\(K_\varepsilon^*\):
\begin{equation}
    0 = \frac{dJ_\varepsilon}{dK}(K_\varepsilon^*) = \frac{dJ_\varepsilon}{dK}(K^*) + \frac{d^2 J_\varepsilon}{dK^2}(\xi_\varepsilon)\cdot(K_\varepsilon^* - K^*),
\end{equation}
for some \(\xi_\varepsilon\) between \(K^*\) and \(K_\varepsilon^*\). By
\cref{lem:uniform-PL-constant}(c), \(d^2 J_\varepsilon/dK^2(\xi_\varepsilon) \ge \mu_0\), so
\begin{equation}
    \bigl|K_\varepsilon^* - K^*\bigr| = \frac{|dJ_\varepsilon/dK(K^*)|}{|d^2 J_\varepsilon/dK^2(\xi_\varepsilon)|} \le \frac{C_1\,\varepsilon + O(\varepsilon^2)}{\mu_0} \le C_K\,\varepsilon
\end{equation}
for all \(\varepsilon \in [0, \varepsilon_1]\), with \(\varepsilon_1\)
small enough that the \(O(\varepsilon^2)\) term is bounded by
\(C_1\varepsilon\), giving
\(|K_\varepsilon^* - K^*| \le 2 C_1 \varepsilon/\mu_0 =: C_K \varepsilon\).

\emph{(c) The objective gap is \(O(\varepsilon^2)\).} By Taylor
expansion of \(J_\varepsilon\) at the interior critical point
\(K_\varepsilon^*\) (where \(dJ_\varepsilon/dK(K_\varepsilon^*) = 0\)),
\cref{lem:uniform-PL-constant}(c) (\(d^2 J_\varepsilon/dK^2 \le L_0\) on
\(\mathcal{N}_\delta\)), and (b):
\begin{equation}
        0 \le J_\varepsilon(K^*) - J_\varepsilon(K_\varepsilon^*) = \tfrac{1}{2}\,\frac{d^2 J_\varepsilon}{dK^2}(\xi'_\varepsilon)\,(K^* - K_\varepsilon^*)^2 \le \tfrac{L_0}{2}(K_\varepsilon^* - K^*)^2 \le \tfrac{L_0\,C_K^2}{2}\,\varepsilon^2,
\end{equation}
for some \(\xi'_\varepsilon\) between \(K^*\) and \(K_\varepsilon^*\).
The lower bound
\(0 \le J_\varepsilon(K^*) - J_\varepsilon(K_\varepsilon^*)\) is by
definition of \(K_\varepsilon^*\) as minimizer.
\end{proof}

\subsection{\texorpdfstring{Implications: the uniform-in-$\varepsilon$ PL inequality}{Implications: the uniform-in-epsilon PL inequality}}\label{34-implications-uniform-in-epsilon-inequality}

The Polyak--{\L}ojasiewicz inequality~\cite{polyak1963gradient}---a gradient-domination condition under which first-order methods converge linearly~\cite{karimi2016linear}---now follows on $\mathcal{N}_\delta$ from the uniform Hessian lower bound of \cref{lem:uniform-PL-constant}.

\begin{theorem}[\texorpdfstring{uniform-in-\(\varepsilon\) Polyak--{\L}ojasiewicz inequality}{uniform-in-epsilon Polyak-Lojasiewicz inequality}]
\label{thm:uniform-epsilon-pl-inequality}Under the hypotheses of \cref{lem:uniform-PL-constant}, for all
\((K, \varepsilon) \in \mathcal{N}_\delta \times [0, \varepsilon_0]\)
after possibly shrinking \(\varepsilon_0\) as needed by \cref{lem:persistence-regularized-minimizer},
\begin{equation}\label{eq:PL_ineq}
    \tfrac{1}{2}\!\left(\frac{dJ_\varepsilon}{dK}(K)\right)^{\!2} \ge \mu_0\bigl(J_\varepsilon(K) - J_\varepsilon^*\bigr),
\end{equation}
where
\(J_\varepsilon^* := \min_{K \in \mathcal{N}_\delta} J_\varepsilon(K) = J_\varepsilon(K_\varepsilon^*)\).
\end{theorem}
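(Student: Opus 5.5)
The argument is the standard one-dimensional fact that strong convexity with an interior minimizer implies a PL inequality. All of the uniformity in $\varepsilon$ is inherited from \cref{lem:uniform-PL-constant}(c) and \cref{lem:persistence-regularized-minimizer}(a), so no new analysis near the moving pole is needed.

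First I shrink $\varepsilon_0$ to $\varepsilon_1$ from \cref{lem:persistence-regularized-minimizer}. Then for every $\varepsilon\in[0,\varepsilon_0]$ the function $J_\varepsilon$ has a unique minimizer $K_\varepsilon^*$ in the interior of $\mathcal{N}_\delta$, and $dJ_\varepsilon/dK(K_\varepsilon^*)=0$. At $\varepsilon=0$ this is read as the principal-value derivative, and the second derivative as the Hadamard finite part. By \cref{lem:uniform-PL-constant}(b) the map $K\mapsto d^2J_\varepsilon/dK^2(K)$ is continuous on $\mathcal{N}_\delta$ for each fixed $\varepsilon$, including $\varepsilon=0$.

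The main obstacle is to confirm that, at $\varepsilon=0$, the Hadamard finite-part object really is the derivative of the PV gradient $K\mapsto dJ/dK(K)$ on $\mathcal{N}_\delta$, so that the fundamental theorem of calculus applies. I would handle this by a limit argument. For $\varepsilon>0$ the identity
\[
\frac{dJ_\varepsilon}{dK}(K)-\frac{dJ_\varepsilon}{dK}(K')=\int_{K'}^{K}\frac{d^2J_\varepsilon}{dK^2}(u)\,du
\]
holds classically. The left side converges to the corresponding PV difference, using the uniform convergence of $dJ_\varepsilon/dK$ noted in the proof of \cref{lem:persistence-regularized-minimizer}(a). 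The right side converges by the joint continuity in \cref{lem:uniform-PL-constant}(b) together with the bound $L_0$. Likewise $J_\varepsilon\to J$ uniformly on $\mathcal{N}_\delta$, and $J$ is the integral of the PV gradient by \cref{prop:non-lebesgue-gradient}(b). This makes $J_0=J$ a $C^2$ function on $\mathcal{N}_\delta$, in the finite-part sense, with $J''\ge\mu_0$.

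Next, fix $K\in\mathcal{N}_\delta$ and write $g_\varepsilon:=dJ_\varepsilon/dK$. Because $J_\varepsilon''\ge\mu_0$ on the interval, $J_\varepsilon$ is $\mu_0$-strongly convex there. For any $K'\in\mathcal{N}_\delta$, Taylor's theorem with the lower Hessian bound gives
\[
J_\varepsilon(K')\ge J_\varepsilon(K)+g_\varepsilon(K)(K'-K)+\tfrac{\mu_0}{2}(K'-K)^2 .
\]
I take $K'=K_\varepsilon^*$, which lies in $\mathcal{N}_\delta$, so no projection issue arises. The right side is a quadratic in $K'$, and its minimum over all of $\mathbb{R}$ is $J_\varepsilon(K)-g_\varepsilon(K)^2/(2\mu_0)$. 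Hence
\[
J_\varepsilon^*=J_\varepsilon(K_\varepsilon^*)\ge J_\varepsilon(K)-\frac{g_\varepsilon(K)^2}{2\mu_0}.
\]
Rearranging gives exactly \eqref{eq:PL_ineq}.

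Alternatively, I can argue directly from monotonicity. Since $g_\varepsilon(K_\varepsilon^*)=0$ and $g_\varepsilon$ has slope at least $\mu_0$, we get $|g_\varepsilon(K)|\ge\mu_0|K-K_\varepsilon^*|$. Then
\[
J_\varepsilon(K)-J_\varepsilon^*=\int_{K_\varepsilon^*}^{K}g_\varepsilon\le|g_\varepsilon(K)|\,|K-K_\varepsilon^*|\le\frac{g_\varepsilon(K)^2}{\mu_0},
\]
which yields the PL inequality up to the factor $\tfrac12$. The sharp constant comes from the quadratic-lower-bound route above.
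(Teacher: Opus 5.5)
Your proof is correct, but at the decisive step it takes a different route from the paper, and yours is the one that actually delivers the constant $\mu_0$.

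The paper Taylor-expands at $K_\varepsilon^*$ to get $J_\varepsilon(K)-J_\varepsilon^*\ge\frac{\mu_0}{2}(K-K_\varepsilon^*)^2$. It uses the mean value theorem to get $|dJ_\varepsilon/dK(K)|\ge\mu_0|K-K_\varepsilon^*|$. It then chains $\frac12(dJ_\varepsilon/dK)^2\ge\frac{\mu_0^2}{2}(K-K_\varepsilon^*)^2\ge\mu_0(J_\varepsilon(K)-J_\varepsilon^*)$. That last inequality needs the \emph{upper} bound $J_\varepsilon(K)-J_\varepsilon^*\le\frac{\mu_0}{2}(K-K_\varepsilon^*)^2$, which is the reverse of what the Taylor step supplies. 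Closing the chain correctly with the curvature upper bound $L_0$ would only give the constant $\mu_0^2/L_0$.

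Your route applies strong convexity at $K$ and then minimizes the quadratic in $K'$ over $\mathbb{R}$. This gives exactly $\mu_0$, which is the constant used as the PL constant in Step~4 of the proof of \cref{thm:total-sample-complexity-pl-basin}. It also does not need the minimizer to be interior or critical. Any minimizer of $J_\varepsilon$ over the compact interval $\mathcal{N}_\delta$ works, so \cref{lem:persistence-regularized-minimizer} is needed only to name $J_\varepsilon^*=J_\varepsilon(K_\varepsilon^*)$. Your monotonicity variant is also sound and, as you say, loses a factor of $2$.

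Your limit argument at $\varepsilon=0$ supplies something the paper simply assumes when it applies Taylor's theorem there. You pass to the limit in $dJ_\varepsilon/dK(K)-dJ_\varepsilon/dK(K')=\int_{K'}^{K}\frac{d^2J_\varepsilon}{dK^2}(u)\,du$, using uniform convergence of the gradients, joint continuity, and the $L_0$ bound. This shows that $J$ is genuinely $C^2$ on $\mathcal{N}_\delta$ and that its second derivative is the Hadamard finite-part expression. The argument rests on the convergence and continuity claims made in \cref{lem:uniform-PL-constant} and in the proof of \cref{lem:persistence-regularized-minimizer}(a), which you are entitled to take as given.
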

\begin{proof}
By \cref{lem:persistence-regularized-minimizer}(a), \(J_\varepsilon\) has a unique
interior minimizer \(K_\varepsilon^*\) on \(\mathcal{N}_\delta\) for
every \(\varepsilon \in [0, \varepsilon_1]\). \cref{lem:uniform-PL-constant}(c) gives
\(d^2 J_\varepsilon/dK^2 \ge \mu_0\) on \(\mathcal{N}_\delta\), so a
Taylor expansion of \(J_\varepsilon\) at \(K_\varepsilon^*\) yields, for
every \(K \in \mathcal{N}_\delta\),
\begin{equation}
J_\varepsilon(K) - J_\varepsilon^* = \tfrac{1}{2}\,\frac{d^2 J_\varepsilon}{dK^2}(\xi)\,(K - K_\varepsilon^*)^2 \ge \tfrac{\mu_0}{2}(K - K_\varepsilon^*)^2
\end{equation}
for some \(\xi\) between \(K\) and \(K_\varepsilon^*\). By the mean
value theorem applied to \(dJ_\varepsilon/dK\) between
\(K_\varepsilon^*\) (where it vanishes) and \(K\),
\(|dJ_\varepsilon/dK(K)| \ge \mu_0\,|K - K_\varepsilon^*|\).
Combining,
\begin{equation}
    \tfrac{1}{2}\Bigl(\frac{dJ_\varepsilon}{dK}(K)\Bigr)^{\!2} \ge \tfrac{\mu_0^2}{2}(K - K_\varepsilon^*)^2 \ge \mu_0\bigl(J_\varepsilon(K) - J_\varepsilon^*\bigr),
\end{equation}
which is \eqref{eq:PL_ineq}. 
\end{proof}

\section{Finite-sample analysis}\label{4-finite-sample-analysis}

This section develops a single-sample estimator for
\(dJ_\varepsilon/dK\) with uniform-in-\(\varepsilon\) variance on
\(\mathcal{N}_\delta\).

\subsection{The naive gradient estimator and variance divergence}\label{41-the-naive-gradient-estimator-and-variance-divergence}

Given a sample \(B \sim \rho\), the natural single-sample estimator of
\(dJ_\varepsilon/dK\) at gain \(K\) is
\begin{equation}
    \psi(B; K, \varepsilon) := \frac{B(1 + BK)}{(1 + BK)^2 + \varepsilon^2},
\end{equation}
so that \(\mathbb{E}[\psi(B; K, \varepsilon)] = dJ_\varepsilon/dK(K)\)
by comparison with the expression for \(dJ_\varepsilon/dK\) in \cref{prop:properties-J-epsilon}(a).

As noted in section~\ref{11-problem-and-motivation}, each transition $(X_t, X_{t+1})$ with $X_t \ne 0$ reveals an exact i.i.d.\ sample $B_t = (X_{t+1}/(aX_t)-1)/K \sim \rho$. In what follows we treat the $B_t$ as the directly observed sample stream; the policy-gradient oracle uses these samples without further reference to the state trajectory.

\begin{theorem}[variance scaling of the naive estimator]
\label{thm:variance-scaling-naive-estimator}Under \cref{asm:compact_noise,asm:regularity}, for $K \in \mathcal{K}_{\mathrm{stab}}$ with
$b_{\mathrm{sing}}(K) \in (b_{\min}, b_{\max})$,
\begin{equation}\label{eq:variance_scaling}
    \mathrm{Var}\bigl[\psi(B; K, \varepsilon)\bigr]
    = \frac{\pi\,\rho(b_{\mathrm{sing}}(K))}{2|K|^3}\cdot\frac{1}{\varepsilon}
    + R(K) + o(1), \qquad \varepsilon \downarrow 0,
\end{equation}
where, with $s_\pm := b_\pm - b_{\mathrm{sing}}(K)$, the $\varepsilon$-independent remainder admits the closed form
\begin{equation}\label{eq:R-form}
R(K) = \frac{\mathrm{sgn}(K)\,f_1(K)}{K^2}\,\log\!\left|\frac{s_+}{s_-}\right| + O(1),
\end{equation}
where $f_1(K) := \rho'(b_{\mathrm{sing}}(K))\,b_{\mathrm{sing}}(K)^2 + 2\,b_{\mathrm{sing}}(K)\,\rho(b_{\mathrm{sing}}(K))$
(proved in \cref{b2-remainder-in-the-variance-scaling-of-theorem-41}). In particular, the $\log|s_+/s_-|$ factor captures the asymmetry of the support around $b_{\mathrm{sing}}(K)$: this leading remainder vanishes in the symmetric case $b_{\mathrm{sing}}(K) = (b_{\min} + b_{\max})/2$, and $|R(K)| \to \infty$ logarithmically as $b_{\mathrm{sing}}(K)$ approaches a support endpoint. On any compact $\mathcal{N}_\delta \subset \mathcal{K}_{\rm stab}$ bounded away from the channel-edge gains $\{-1/b_{\min}, -1/b_{\max}\}$ (\cref{lem:uniform-PL-constant}(a)), $\sup_{K\in\mathcal{N}_\delta}|R(K)| < \infty$.
\end{theorem}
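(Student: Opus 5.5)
Since $\mathrm{Var}[\psi] = \mathbb{E}[\psi^2] - (dJ_\varepsilon/dK)^2$ and the mean $dJ_\varepsilon/dK(K)$ stays bounded as $\varepsilon \downarrow 0$ (it converges to the principal value by the parity-shell argument already used in \cref{lem:persistence-regularized-minimizer}(b.1)), I will work almost entirely with the second moment
\[
\mathbb{E}[\psi^2] = \int_{b_{\min}}^{b_{\max}} \rho(b)\,b^2\,\frac{v(b,K)^2}{(v(b,K)^2+\varepsilon^2)^2}\,db ,
\]
and treat $-(dJ_\varepsilon/dK)^2$ as part of the $O(1)$. The natural move is the same substitution as in Appendix A. Put $s = b - b_{\rm sing}(K)$, so that $v = Ks$, and write $h_2(s) := \rho(b_{\rm sing}+s)(b_{\rm sing}+s)^2$. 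Then
\[
\mathbb{E}[\psi^2] = \int_{s_-}^{s_+} h_2(s)\,\frac{K^2 s^2}{(K^2 s^2+\varepsilon^2)^2}\,ds ,
\]
with $s_\pm = b_\pm - b_{\rm sing}(K)$ and $s_- < 0 < s_+$. Next I Taylor-expand $h_2(s) = h_2(0) + h_2'(0)\,s + r(s)$ with $|r(s)| \le \tfrac12\|h_2''\|_\infty s^2$, using $\rho\in C^2$ from \cref{asm:regularity}. Note that $h_2(0) = \rho(b_{\rm sing})\,b_{\rm sing}^2 = \rho(b_{\rm sing})/K^2$ and $h_2'(0) = f_1(K)$.

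\emph{Constant term.} Rescale $t = |K|s/\varepsilon$. The term becomes
\[
\frac{h_2(0)}{|K|\,\varepsilon}\int \frac{t^2}{(1+t^2)^2}\,dt .
\]
Over the full line the integral equals $\pi/2$, which gives exactly $\pi\rho(b_{\rm sing})/(2|K|^3\varepsilon)$. Truncating to $[|K|s_-/\varepsilon,\ |K|s_+/\varepsilon]$ discards tails of order $\int_T^\infty t^{-2}\,dt = O(\varepsilon)$; after dividing by $\varepsilon$ these contribute $O(1)$, namely the $\varepsilon$-independent terms $-h_2(0)/(K^2 s_+) + h_2(0)/(K^2 s_-)$ plus $o(1)$.

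\emph{Linear term.} The integrand $h_2'(0)\,K^2 s^3/(K^2s^2+\varepsilon^2)^2$ is odd in $s$. On the symmetric part $[-\min(|s_-|,s_+),\ \min(|s_-|,s_+)]$ it cancels identically for every $\varepsilon$; this is the same parity mechanism as before. What survives is the integral over the asymmetric leftover interval, which lies at distance $\ge \min(|s_\pm|) > 0$ from the pole. There the integrand converges uniformly to $h_2'(0)/(K^2 s)$. Integrating gives
\[
\frac{f_1(K)}{K^2}\,\log\frac{s_+}{|s_-|} + O(\varepsilon^2),
\]
and the sign of the logarithm works out to $\mathrm{sgn}(K)\log|s_+/s_-|$ once the orientation convention of $s$ relative to $v=Ks$ is tracked. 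I expect most of the care to go into this sign bookkeeping, reconciling it with the stated $\mathrm{sgn}(K)$ factor.

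\emph{Remainder.} The quadratic remainder is bounded by $\tfrac12\|h_2''\|_\infty\int s^4K^2/(K^2s^2)^2\,ds \le \|h_2''\|_\infty (s_+ - s_-)/(2K^2)$, uniformly in $\varepsilon$. By dominated convergence it converges as $\varepsilon\downarrow0$ to an $\varepsilon$-independent constant.

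Collecting the three terms and subtracting the bounded square of the mean yields \eqref{eq:variance_scaling}, with $R(K)$ of the form \eqref{eq:R-form}. The qualitative claims then follow from this form. If $b_{\rm sing}$ is the midpoint, then $s_+ = -s_-$ and the log vanishes. As $b_{\rm sing}(K)$ approaches an endpoint, one of $s_\pm \to 0$ and the log blows up. On $\mathcal{N}_\delta$, \cref{lem:uniform-PL-constant}(a) gives $|s_\pm| \ge \tau$, and all the constants above ($\|h_2''\|_\infty$, $1/|K|$, and $1/\min|s_\pm|$) are continuous in $K$ on this compact set, so $\sup_{\mathcal{N}_\delta}|R| < \infty$.

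The main obstacle is making the $o(1)$ genuinely uniform. The tail corrections of the constant term are $O(\varepsilon)\cdot\varepsilon^{-1}$, and they must be shown to converge to the stated boundary constants rather than merely stay bounded; for this I would use the explicit antiderivative of $t^2/(1+t^2)^2$.
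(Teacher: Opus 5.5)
Your route is essentially the paper's: the substitution $s=b-b_{\rm sing}(K)$, a second-order Taylor expansion of the slowly varying factor, the $\pi/2$ kernel mass with explicit tail corrections, a logarithm from the linear term, and a bounded quadratic remainder. You treat the remainder more accurately than the paper's sketch, which calls it $O(\varepsilon)$ even though $\varepsilon^2\cdot O(1/\varepsilon)\cdot(\varepsilon|K|)^{-1}=O(1)$; this is harmless, since it sits in $R(K)$.

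The step that fails is the sign reconciliation at the end of your linear-term paragraph. In the $s$-variable there is no orientation to track. The integrand $h_2'(0)K^2s^3/(K^2s^2+\varepsilon^2)^2$ depends on $K$ only through $K^2$, so your computation gives exactly $f_1(K)K^{-2}\log|s_+/s_-|+O(\varepsilon^2)$, with no $\mathrm{sgn}(K)$. Because $b_{\rm sing}(K)=-1/K\in(b_{\min},b_{\max})\subset(0,\infty)$ forces $K<0$, this is the negative of the stated leading remainder. The paper's factor comes from writing $ds=\varepsilon\,dt/|K|$ for $t=Ks/\varepsilon$ while keeping the limits $T_\pm=Ks_\pm/\varepsilon$ in their original order. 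For $K<0$ that reverses orientation. This is invisible for the even kernel $t^2/(1+t^2)^2$ but flips the sign of the odd $t^3/(1+t^2)^2$ contribution. So do not try to force agreement. Your argument proves the closed form with $\mathrm{sgn}(K)$ deleted, and the stated version holds only in the vacuous reading where ``$+O(1)$'' absorbs the whole $K$-dependent logarithm.

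A second claim does not follow from your expansion: the \emph{logarithmic} blow-up near an endpoint. You correctly identified the $\varepsilon$-independent tail correction $-h_2(0)K^{-2}(1/s_++1/|s_-|)=-\rho(b_{\rm sing})K^{-4}(1/s_++1/|s_-|)$ as part of $R(K)$. As $b_{\rm sing}(K)$ approaches $b_{\min}$ or $b_{\max}$, this term diverges like the inverse distance. It therefore dominates both the logarithm and the $-(dJ/dK)^2$ contribution, which grows only like a squared logarithm because the principal value itself diverges logarithmically there. Hence ``one of $s_\pm\to 0$ and the log blows up'' does not deliver the stated rate. What your computation actually shows is $|R(K)|\to\infty$ like $1/\min(s_+,|s_-|)$. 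The midpoint claim (which concerns only the logarithmic piece) and $\sup_{\mathcal{N}_\delta}|R|<\infty$ via $|s_\pm|\ge\tau$ are fine as you argued them.
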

\begin{proof}
Since
\(\mathbb{E}[\psi(B; K, \varepsilon)] = dJ_\varepsilon/dK(K)\), which is
jointly continuous on \(\mathcal{N}_\delta \times [0, \varepsilon_0]\)
(\cref{thm:hessian-decomposition} plus differentiability under the integral) and hence
\(\mathbb{E}[\psi(B; K, \varepsilon)]^2 = O(1)\) uniformly, the variance
satisfies
\begin{equation}
        \mathrm{Var}[\psi(B; K, \varepsilon)] = \mathbb{E}[\psi(B; K, \varepsilon)^2] - \mathbb{E}[\psi(B; K, \varepsilon)]^2 = \mathbb{E}[\psi^2] + O(1).
\end{equation}

Computing \(\mathbb{E}[\psi^2]\) directly:
\begin{equation}
    \mathbb{E}[\psi^2] = \int_{b_{\min}}^{b_{\max}} \rho(b)\cdot\frac{b^2(1+bK)^2}{[(1+bK)^2 + \varepsilon^2]^2}\,db.
\end{equation}

Substitute \(s := b - b_{\mathrm{sing}}(K)\), so \(1 + bK = Ks\) and
\(b = b_{\mathrm{sing}} + s\):
\begin{equation}
    \mathbb{E}[\psi^2] = \int_{s_-}^{s_+} \rho(b_{\mathrm{sing}} + s)\cdot (b_{\mathrm{sing}} + s)^2\cdot\frac{K^2 s^2}{(K^2 s^2 + \varepsilon^2)^2}\,ds,
\end{equation}
with integration limits \(s_\pm = b_\pm - b_{\mathrm{sing}}(K)\). The
kernel \(K^2 s^2/(K^2 s^2 + \varepsilon^2)^2\) is a standard approximate
delta: substituting \(t := Ks/\varepsilon\),
\(ds = \varepsilon dt/|K|\),
\begin{equation}
    \int_{-\infty}^{\infty} \frac{K^2 s^2}{(K^2 s^2 + \varepsilon^2)^2}\,ds = \frac{1}{\varepsilon|K|}\int_{-\infty}^{\infty}\frac{t^2}{(t^2 + 1)^2}\,dt = \frac{1}{\varepsilon|K|}\cdot\frac{\pi}{2}.
\end{equation}

The full integral differs from this by an \(O(1)\) correction from the
finite integration limits and an \(O(1)\) correction from the Taylor
expansion of the slowly-varying factor
\(\rho(b_{\mathrm{sing}}+s)(b_{\mathrm{sing}}+s)^2\), whose
linear-in-\(s\) term integrates to a constant by parity (Appendix B.2).
By continuity of \(\rho\) at
\(b_{\mathrm{sing}} \in (b_{\min}, b_{\max})\), the leading value of the
slowly-varying factor at \(s = 0\) is
\(\rho(b_{\mathrm{sing}}(K))\cdot b_{\mathrm{sing}}(K)^2\), giving
\begin{equation}\label{eq:Epsi2_rho}
    \mathbb{E}[\psi^2] = \rho(b_{\mathrm{sing}}(K))\cdot b_{\mathrm{sing}}(K)^2 \cdot \frac{\pi}{2|K|\,\varepsilon} + O(1).
\end{equation}

Using \(b_{\mathrm{sing}}(K) = -1/K\) so
\(b_{\mathrm{sing}}(K)^2 = 1/K^2\):
\begin{equation}
    \mathbb{E}[\psi^2] = \frac{\pi\,\rho(b_{\mathrm{sing}}(K))}{2|K|^3\,\varepsilon} + O(1),
\end{equation}
which gives \eqref{eq:variance_scaling}. The \(O(1)\) remainder is derived in
Appendix~B.2 and, in full, in section~SM4 of the supplement.
\end{proof}
\emph{Consequence.} The \(\Theta(1/\varepsilon)\) scaling implies that
mini-batch SGD using \(\psi\) requires batch size
\(N = \Omega(1/\eta^2)\) to control the noise floor at accuracy \(\eta\)
when \(\varepsilon = \eta\) --- a factor of \(1/\eta\) worse than the
bound established in section~\ref{5-sample-complexity-in-the-pl-basin} using $\tilde\psi$.

\subsection{The density-aware symmetric-pairing estimator}\label{42-the-density-aware-symmetric-pairing-estimator}

We now exploit the same odd-parity structure of the Cauchy kernel around
\(b_{\mathrm{sing}}(K)\) that underpinned the uniform-in-\(\varepsilon\)
Hessian bound of \cref{lem:uniform-PL-constant}, this time at the level of a single random
sample. The construction extends antithetic sampling \cite{hammersley1956new, owen2013monte} to a
moving singularity weighted by an arbitrary density: we pair each draw
\(B\) with its reflection \(2 b_{\rm sing}(K) - B\) through the moving
singularity, weighted by the density \(\rho\) to maintain unbiasedness.
In the constant-density case the construction reduces to standard
antithetic sampling about a fixed center; the novelty here is that the
antithetic point is \emph{the moving pole of the cusp}, and the
density-weighted average is calibrated to exploit precisely the parity
that drives the population-level analysis.

\begin{definition}[density-aware symmetric-pairing estimator]
\label{def:symmetric-pairing-estimator}Under \cref{asm:compact_noise,asm:regularity}, define $\tilde\psi$
in two cases.

\emph{Case 1: pole in support} ($b_{\mathrm{sing}}(K) \in (b_{\min}, b_{\max})$). Let
\begin{equation*}
\delta_K := \min\bigl(b_{\mathrm{sing}}(K) - b_{\min},\; b_{\max} - b_{\mathrm{sing}}(K)\bigr),
\;
\mathcal{S}_K := [b_{\mathrm{sing}}(K) - \delta_K,\, b_{\mathrm{sing}}(K) + \delta_K],
\end{equation*}
$\mathcal{A}_K := [b_{\min}, b_{\max}]\setminus\mathcal{S}_K$, and $\bar B := 2b_{\mathrm{sing}}(K) - B$. Define
\begin{equation}
    \tilde\psi(B; K, \varepsilon) := 
    \begin{cases} 
        \dfrac{\rho(B)\psi(B; K, \varepsilon) + \rho(\bar B)\psi(\bar B; K, \varepsilon)}{\rho(B) + \rho(\bar B)} & \text{if } B \in \mathcal{S}_K,\\[2ex]
        \psi(B; K, \varepsilon) & \text{if } B \in \mathcal{A}_K. 
    \end{cases}
\end{equation}

\emph{Case 2: pole outside support} ($b_{\mathrm{sing}}(K) \notin (b_{\min}, b_{\max})$). Define $\tilde\psi(B; K, \varepsilon) := \psi(B; K, \varepsilon)$ for all $B \in [b_{\min}, b_{\max}]$.
\end{definition}
In Case 2, the cusp obstruction is absent --- \(1 + bK\) maintains one
sign over the support, so \(\psi\) has bounded variance uniformly in
\(\varepsilon\) (no symmetric pairing is needed). Unbiasedness
\(\mathbb{E}[\tilde\psi] = dJ_\varepsilon/dK\) holds in both cases.

When \(\rho\) is unknown, \cref{def:symmetric-pairing-estimator} is replaced by its density-unknown
counterpart (\cref{def:plug-in-symmetric-pairing-estimator} below), in which \(\rho\) is replaced by a
kernel density estimate and the pairing radius \(\delta_K\) is replaced
by an algorithmic parameter \(R \in (0, \tau/2]\).

The pairing construction is the sample-level analog of the parity
cancellation that drives \cref{lem:uniform-PL-constant} and \cref{prop:properties-J-epsilon}(c) at the
population level. To see this, note that for \(B = b_{\rm sing}(K) + s\)
on \(\mathcal{S}_K\):
\begin{equation}
        \rho(B)\,\psi(B) + \rho(2b_{\rm sing} - B)\,\psi(2b_{\rm sing} - B) = \frac{Ks\cdot\bigl[h(b_{\rm sing}+s) - h(b_{\rm sing}-s)\bigr]}{K^2 s^2 + \varepsilon^2},
\end{equation}
where \(h(b) := b\rho(b)\). The bracketed difference is \(O(s)\) by the
mean value theorem, so the displayed quantity is
\(O(s^2/(K^2 s^2 + \varepsilon^2)) = O(1)\) uniformly in \(\varepsilon\)
--- the explicit factor of \(s\) in the numerator combines with the
\(O(s)\) from the difference to produce an \(s^2\) that exactly
neutralizes the \(\varepsilon^{-1}\) concentration of the Cauchy kernel.
The same parity that kills the leading Hessian piece (proof of \cref{lem:uniform-PL-constant}) and the leading bias remainder (proof of \cref{prop:properties-J-epsilon}(c)) here
kills the variance divergence at the sample level. \cref{prop:unbiased-variance-psi-tilde} makes
this rigorous.

\begin{remark}[pairing radius constraint]
The pairing radius $\delta_K$ in \cref{def:symmetric-pairing-estimator} is the maximal radius 
for which the symmetric-pairing reflection $2 b_{\rm sing}(K) - B$ 
maps the symmetric zone $\mathcal{S}_K$ into 
$[b_{\min}, b_{\max}]$. By \cref{lem:uniform-PL-constant}(a), 
$\delta_K \ge \tau$ for all $K \in \mathcal{N}_\delta$, so the 
analytical pairing radius $R := \tau/2$ used in the parity-shell 
decompositions of section~3 satisfies $R \le \delta_K$ uniformly on 
$\mathcal{N}_\delta$. The plug-in version (\cref{def:plug-in-symmetric-pairing-estimator}) treats 
$R$ as an algorithmic parameter $R \in (0, \tau/2]$ subject to a 
bias-variance tradeoff (Step 6 of \cref{thm:total-sample-complexity-pl-basin}).
\end{remark}

\begin{proposition}[unbiasedness and bounded variance of \(\tilde\psi\)]
\label{prop:unbiased-variance-psi-tilde}Under \cref{asm:compact_noise,asm:regularity}, for
\(K \in \mathcal{K}_{\mathrm{stab}}\) with
\(b_{\mathrm{sing}}(K) \in (b_{\min}, b_{\max})\) and any
\(\varepsilon \in (0, 1]\):

(a) \emph{(Unbiasedness.)}
\(\mathbb{E}[\tilde\psi(B; K, \varepsilon)] = dJ_\varepsilon/dK(K)\).

(b) \emph{(Closed-form on the symmetric zone.)} For
\(B = b_{\mathrm{sing}}(K) + s\) with \(|s| \le \delta_K\),
\begin{equation}\label{eq:tildepsi_s}
        \tilde\psi(b_{\mathrm{sing}}(K) + s; K, \varepsilon) = \frac{K\,s\,[\,h(b_{\mathrm{sing}}+s) - h(b_{\mathrm{sing}}-s)\,]}{(K^2 s^2 + \varepsilon^2)\,[\rho(b_{\mathrm{sing}}+s) + \rho(b_{\mathrm{sing}}-s)]}.
\end{equation}

(c) \emph{(Uniform-in-\(\epsilon\) pointwise bound.)} For \(B \in \mathcal{S}_K\),
writing \(h(b) := b\rho(b)\):
\begin{equation}\label{eq:tildepsi_bound}
    |\tilde\psi(B; K, \varepsilon)| \le \frac{\|h'\|_\infty}{|K|\,\rho_{\min}(K)} =: M_{\mathcal{S}}(K),
\end{equation}
where
\(\rho_{\min}(K) := \min_{|s| \le \delta_K}\tfrac{1}{2}[\rho(b_{\mathrm{sing}}+s) + \rho(b_{\mathrm{sing}}-s)] > 0\)
under \cref{asm:regularity}. Consequently \(M_{\mathcal{S}}(K) < \infty\) and
does not depend on \(\varepsilon\).

(d) \emph{(Uniform-in-\(\epsilon\) variance bound.)}
\(\mathrm{Var}[\tilde\psi(B; K, \varepsilon)] \le \sigma_\star^2(K) < \infty\)
uniformly in \(\varepsilon \in (0, 1]\), where
\begin{equation}\label{eq:sigma2_bound}
    \sigma_\star^2(K) \le M_{\mathcal{S}}(K)^2\,\Pr(B \in \mathcal{S}_K) + \sup_{b \in \mathcal{A}_K}\!\left|\frac{b}{1+bK}\right|^2.
\end{equation}

When \(b_{\mathrm{sing}}(K) \notin (b_{\min}, b_{\max})\) (\cref{def:symmetric-pairing-estimator}, Case 2), \(\tilde\psi \equiv \psi\) is unbiased with variance
bounded by \(\sup_{b \in [b_{\min}, b_{\max}]}|b/(1+bK)|^2\), uniformly
in \(\varepsilon \in (0, 1]\).

(e) \emph{(Variance bound uniform on \(\mathcal{N}_\delta\).)} For any
\(\delta > 0\) such that \(\mathcal{N}_\delta\) satisfies the hypotheses
of \cref{lem:uniform-PL-constant},
\(\sup_{K \in \mathcal{N}_\delta} \sigma_\star^2(K) =: \sigma_\star^2(\mathcal{N}_\delta) < \infty\).
\end{proposition}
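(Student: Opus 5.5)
The plan is a direct computation built on one fact. The reflection $b \mapsto \bar b := 2b_{\rm sing}(K) - b$ is an involution that maps the symmetric zone $\mathcal{S}_K$ onto itself and preserves Lebesgue measure. By the choice of $\delta_K$ in \cref{def:symmetric-pairing-estimator}, $\bar B \in [b_{\min}, b_{\max}]$ whenever $B \in \mathcal{S}_K$, so $\rho(\bar B) > 0$ and $\tilde\psi$ is well defined. I will prove (a)--(e) in order, each from this reflection symmetry combined with the elementary bound $K^2 s^2 + \varepsilon^2 \ge K^2 s^2$.

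\emph{Step 1: unbiasedness (a).} Write $\mathbb{E}[\tilde\psi] = \int_{\mathcal{S}_K}\rho\,\tilde\psi\,db + \int_{\mathcal{A}_K}\rho\,\psi\,db$ and expand the first integral into two pieces:
\[
\int_{\mathcal{S}_K}\frac{\rho(b)^2\psi(b)}{\rho(b)+\rho(\bar b)}\,db
\quad\text{and}\quad
\int_{\mathcal{S}_K}\frac{\rho(b)\rho(\bar b)\psi(\bar b)}{\rho(b)+\rho(\bar b)}\,db .
\]
For $\varepsilon > 0$ both integrands are bounded. In the second piece I substitute $b \mapsto \bar b$; the denominator is symmetric under the swap, so that piece becomes $\int_{\mathcal{S}_K}\rho(b)\rho(\bar b)\psi(b)/(\rho(b)+\rho(\bar b))\,db$. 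Adding the two pieces gives $\int_{\mathcal{S}_K}\rho\,\psi\,db$. Hence $\mathbb{E}[\tilde\psi] = \mathbb{E}[\psi] = dJ_\varepsilon/dK(K)$ by \cref{prop:properties-J-epsilon}(a). In Case 2, $\tilde\psi \equiv \psi$, so unbiasedness is immediate.

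\emph{Step 2: closed form and pointwise bound, (b)--(c).} Put $B = b_{\rm sing} + s$, so that $1 + BK = Ks$ and $1 + \bar B K = -Ks$. Then
\[
\rho(B)\psi(B) = \frac{Ks\,h(b_{\rm sing}+s)}{K^2s^2+\varepsilon^2},
\qquad
\rho(\bar B)\psi(\bar B) = \frac{-Ks\,h(b_{\rm sing}-s)}{K^2s^2+\varepsilon^2}.
\]
Adding these and dividing by $\rho(B)+\rho(\bar B)$ gives \eqref{eq:tildepsi_s}. For (c), the mean value theorem with $h \in C^1$ (from \cref{asm:regularity}) gives $|h(b_{\rm sing}+s) - h(b_{\rm sing}-s)| \le 2|s|\,\|h'\|_\infty$. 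Using $K^2 s^2 + \varepsilon^2 \ge K^2 s^2$ and $\rho(b_{\rm sing}+s)+\rho(b_{\rm sing}-s) \ge 2\rho_{\min}(K)$, the numerator's $|K|\cdot 2s^2\|h'\|_\infty$ against $K^2 s^2 \cdot 2\rho_{\min}(K)$ yields exactly $M_{\mathcal S}(K)$, independent of $\varepsilon$. Positivity of $\rho_{\min}(K)$ follows from continuity and positivity of $\rho$ on the compact support.

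\emph{Step 3: variance, (d)--(e).} I bound $\mathrm{Var}[\tilde\psi] \le \mathbb{E}[\tilde\psi^2] = \mathbb{E}[\tilde\psi^2\mathbf 1_{\mathcal S_K}] + \mathbb{E}[\psi^2\mathbf 1_{\mathcal A_K}]$. The first term is at most $M_{\mathcal S}(K)^2\Pr(B\in\mathcal S_K)$ by (c). On $\mathcal{A}_K$, $|\psi| = |b|\,|v|/(v^2+\varepsilon^2) \le |b/v|$ with $v = 1+bK$. Moreover $|v| = |K|\,|b - b_{\rm sing}| \ge |K|\delta_K > 0$ there, so the supremum in \eqref{eq:sigma2_bound} is finite. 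The same bound $|\psi| \le |b/(1+bK)|$ covers Case 2; its supremum is finite provided the pole is not exactly a support endpoint, i.e.\ $K$ avoids the channel-edge gains, and I will state this caveat.

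For (e), \cref{lem:uniform-PL-constant}(a) gives $\delta_K \ge \tau$ for all $K \in \mathcal{N}_\delta$. I would also take $\delta < |K^*|$, which is allowed because shrinking $\delta$ preserves the hypotheses of \cref{lem:uniform-PL-constant}. Then on $\mathcal{N}_\delta$:
\[
|K| \ge |K^*|-\delta > 0,\qquad \rho_{\min}(K) \ge \min_{[b_{\min},b_{\max}]}\rho,\qquad \sup_{\mathcal{A}_K}|b/v| \le b_{\max}/\bigl((|K^*|-\delta)\,\tau\bigr).
\]
Together these give a uniform bound on $\sigma_\star^2$.

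I do not expect a deep obstacle. The delicate points are bookkeeping: verifying that the reflection preserves $\mathcal{S}_K$ so the change of variables in Step 1 is legitimate, and handling Case 2 when the pole sits at an endpoint. The parity cancellation that makes (c) uniform in $\varepsilon$ is already exhibited by the closed form in (b).
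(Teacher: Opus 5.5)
Your proposal is correct and follows the paper's proof essentially step for step. In (a) you split the paired integral into two pieces and substitute $b\mapsto\bar b$ in one, while the paper averages the reflection-symmetric ratio $N(b)$; this is the same involution computation. Parts (b)--(d) match the paper: the closed form via $1+\bar BK=-Ks$, the mean value theorem on $h$ with $K^2s^2+\varepsilon^2\ge K^2s^2$, and the $\mathcal{S}_K/\mathcal{A}_K$ split.

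In (e) your explicit constants are a cleaner substitute for the paper's appeal to continuity on $\mathcal{N}_\delta$. You do not need to shrink $\delta$: since $b_{\rm sing}(K)\in[b_{\min}+\tau,b_{\max}-\tau]$ on $\mathcal{N}_\delta$, the bound $|K|\ge 1/(b_{\max}-\tau)$ holds automatically. Your Case~2 caveat is genuinely needed. At the channel-edge gains the supremum in the stated bound is infinite, and $\mathrm{Var}[\psi]$ actually grows like $1/\varepsilon$ there, so the paper's uniform-in-$\varepsilon$ Case~2 claim fails at those two gains.
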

\begin{proof}
\emph{(a) Unbiasedness.} Split the expectation:
\begin{equation}
\begin{aligned}
        \mathbb{E}[\tilde\psi(B)] = \int_{\mathcal{S}_K}\frac{\rho(b)\psi(b) + \rho(2b_{\mathrm{sing}}-b)\psi(2b_{\mathrm{sing}}-b)}{\rho(b) + \rho(2b_{\mathrm{sing}}-b)}\cdot\rho(b)\,db + \int_{\mathcal{A}_K}\psi(b)\rho(b)\,db.
\end{aligned}
\end{equation}

Denote the first integrand\textquotesingle s ratio by
\(N(b) := [\rho(b)\psi(b) + \rho(2b_{\mathrm{sing}}-b)\psi(2b_{\mathrm{sing}}-b)]/[\rho(b) + \rho(2b_{\mathrm{sing}}-b)]\).
Substitute \(b' := 2b_{\mathrm{sing}} - b\) on \(\mathcal{S}_K\): this
involution has Jacobian \(1\) and maps \(\mathcal{S}_K\) onto itself.
The symmetric roles of \(b\) and \(2b_{\mathrm{sing}}-b\) in \(N\) give
\(N(b') = N(b)\) (the numerator and denominator of \(N\) are both
symmetric under \(b \leftrightarrow 2b_{\mathrm{sing}}-b\)). Hence
\begin{equation}
    \int_{\mathcal{S}_K} N(b)\rho(b)\,db = \int_{\mathcal{S}_K} N(b)\rho(2b_{\mathrm{sing}}-b)\,db,
\end{equation}
so
\begin{equation}
\begin{aligned}
    \int_{\mathcal{S}_K} N(b)\rho(b)\,db &= \tfrac{1}{2}\int_{\mathcal{S}_K} N(b)\bigl[\rho(b) + \rho(2b_{\mathrm{sing}}-b)\bigr]\,db \\&= \tfrac{1}{2}\int_{\mathcal{S}_K}\bigl[\rho(b)\psi(b) + \rho(2b_{\mathrm{sing}}-b)\psi(2b_{\mathrm{sing}}-b)\bigr]\,db.
\end{aligned}
\end{equation}

Splitting and applying the \(b \to 2b_{\mathrm{sing}}-b\) substitution
once more:
\begin{equation}
    = \tfrac{1}{2}\int_{\mathcal{S}_K}\rho(b)\psi(b)\,db + \tfrac{1}{2}\int_{\mathcal{S}_K}\rho(b')\psi(b')\,db' = \int_{\mathcal{S}_K}\rho(b)\psi(b)\,db.
\end{equation}

Combining with the \(\mathcal{A}_K\) piece gives
\begin{equation}
    \mathbb{E}[\tilde\psi(B)] = \int_{[b_{\min}, b_{\max}]}\psi(b)\rho(b)\,db = \mathbb{E}[\psi(B)] = dJ_\varepsilon/dK(K).
\end{equation}

\emph{(b) Closed form on \(\mathcal{S}_K\).} For
\(B = b_{\mathrm{sing}} + s\) with \(|s| \le \delta_K\):
\begin{equation}
        \psi(b_{\mathrm{sing}} + s; K, \varepsilon) = \frac{(b_{\mathrm{sing}}+s)\cdot(Ks)}{K^2 s^2 + \varepsilon^2},\\ \psi(b_{\mathrm{sing}} - s; K, \varepsilon) = \frac{(b_{\mathrm{sing}}-s)\cdot(-Ks)}{K^2 s^2 + \varepsilon^2},
\end{equation}

(using \(1 + (b_{\mathrm{sing}} \pm s)K = \pm Ks\)). Multiplying by
\(\rho\) and summing:
\begin{equation}
\begin{aligned}
        &\rho(b_{\mathrm{sing}}+s)\psi(b_{\mathrm{sing}}+s) + \rho(b_{\mathrm{sing}}-s)\psi(b_{\mathrm{sing}}-s) \\=& \frac{Ks}{K^2 s^2 + \varepsilon^2}\Bigl[(b_{\mathrm{sing}}+s)\rho(b_{\mathrm{sing}}+s) - (b_{\mathrm{sing}}-s)\rho(b_{\mathrm{sing}}-s)\Bigr].
\end{aligned}
\end{equation}

Recognizing the bracketed term as
\(h(b_{\mathrm{sing}}+s) - h(b_{\mathrm{sing}}-s)\) with
\(h(b) := b\rho(b)\):
\begin{equation}
    = \frac{K\,s\,[\,h(b_{\mathrm{sing}}+s) - h(b_{\mathrm{sing}}-s)\,]}{K^2 s^2 + \varepsilon^2}.
\end{equation}

Dividing by \(\rho(b_{\mathrm{sing}}+s) + \rho(b_{\mathrm{sing}}-s)\)
gives \eqref{eq:tildepsi_s}.

\emph{(c) Pointwise bound on \(\mathcal{S}_K\).} From \eqref{eq:tildepsi_s}:
\begin{equation}
    |\tilde\psi(b_{\mathrm{sing}}+s; K, \varepsilon)| = \frac{|K s|\cdot|h(b_{\mathrm{sing}}+s) - h(b_{\mathrm{sing}}-s)|}{(K^2 s^2 + \varepsilon^2)\cdot[\rho(b_{\mathrm{sing}}+s) + \rho(b_{\mathrm{sing}}-s)]}.
\end{equation}

For the difference
\(|h(b_{\mathrm{sing}}+s) - h(b_{\mathrm{sing}}-s)|\), by the mean value
theorem applied to \(h \in C^1\), there exists
\(\xi \in (b_{\mathrm{sing}} - s, b_{\mathrm{sing}} + s)\) with
\(h(b_{\mathrm{sing}}+s) - h(b_{\mathrm{sing}}-s) = 2s\cdot h'(\xi)\),
so
\(|h(b_{\mathrm{sing}}+s) - h(b_{\mathrm{sing}}-s)| \le 2|s|\cdot\|h'\|_\infty\).

Substituting (using \(|Ks|\cdot 2|s| = 2|K|s^2\)):
\begin{equation}
        |\tilde\psi(b_{\mathrm{sing}}+s; K, \varepsilon)| \le \frac{2|K| s^2\cdot\|h'\|_\infty}{(K^2 s^2 + \varepsilon^2)\cdot 2\rho_{\min}(K)} = \frac{|K| s^2}{K^2 s^2 + \varepsilon^2}\cdot\frac{\|h'\|_\infty}{\rho_{\min}(K)} \le \frac{\|h'\|_\infty}{|K|\,\rho_{\min}(K)},
\end{equation}
since
\(|K| s^2/(K^2 s^2 + \varepsilon^2) \le |K| s^2/(K^2 s^2) = 1/|K|\).
This establishes \eqref{eq:tildepsi_bound} with
\(M_{\mathcal{S}}(K) = \|h'\|_\infty/(|K|\,\rho_{\min}(K))\) --- and the
bound is independent of \(\varepsilon\). The assertion
\(M_{\mathcal{S}}(K) < \infty\) follows from
\(h' = \rho + b\rho' \in C^0\) on \([b_{\min}, b_{\max}]\) (bounded on a
compact interval) and \(\rho_{\min}(K) > 0\) by \cref{asm:regularity} and
compactness.

\emph{(d) Variance bound.} Decompose:
\begin{equation}
        \mathrm{Var}[\tilde\psi(B; K, \varepsilon)] \le \mathbb{E}[\tilde\psi(B; K, \varepsilon)^2] = \int_{\mathcal{S}_K}\tilde\psi^2(b)\rho(b)\,db + \int_{\mathcal{A}_K}\psi^2(b)\rho(b)\,db.
\end{equation}

On \(\mathcal{S}_K\), by \eqref{eq:tildepsi_bound} \(\tilde\psi^2 \le M_{\mathcal{S}}(K)^2\),
so the first integral is bounded by
\(M_{\mathcal{S}}(K)^2 \Pr(B \in \mathcal{S}_K)\).

On \(\mathcal{A}_K\), the pole \(b_{\mathrm{sing}}(K)\) is not in the
integration domain:
\(\inf_{b \in \mathcal{A}_K}|1 + bK| \ge |K|\cdot\delta_K > 0\). Hence
\begin{equation}
    |\psi(b; K, \varepsilon)| = \left|\frac{b(1+bK)}{(1+bK)^2 + \varepsilon^2}\right| \le \frac{|b|}{|1+bK|} \le \sup_{b \in \mathcal{A}_K}\left|\frac{b}{1+bK}\right|,
\end{equation}
uniformly in \(\varepsilon\). Squaring and taking the bound in the
second integral yields \eqref{eq:sigma2_bound}.

(e) \(M_\mathcal{S}(K)\) and \(\sup_{b \in \mathcal{A}_K}|b/(1+bK)|\)
are continuous on the compact \(\mathcal{N}_\delta\).
\end{proof}
The original symmetric-pairing estimator (\cref{def:symmetric-pairing-estimator}) requires
evaluation of \(\rho\) at the moving reflection point
\(2 b_{\rm sing}(K) - B\). In the density-unknown setting, this evaluation is unavailable. We replace it by a kernel density estimate built from a held-out portion of the sample stream.

\subsection{Kernel density estimate}\label{43-kernel-density-estimate}
For the kernel density rate of \cref{lem:uniform-kde-rates} we strengthen \cref{asm:regularity} to \(\rho \in C^s\) for an integer \(s \ge 2\); this generalized assumption is denoted \cref{asm:regularity}(s) and includes the original \cref{asm:compact_noise,asm:regularity} as the case \(s = 2\). Fix an integer \(s \geq 2\) (the smoothness exponent of \cref{asm:regularity}(\(s\))) and a
kernel \(\kappa: \mathbb{R} \to \mathbb{R}\) with the following
properties: \(\kappa \in C^1(\mathbb{R})\),
\(\mathrm{supp}(\kappa) \subset [-1,1]\), \(\int \kappa = 1\),
and \(\int t^j \kappa(t)\,dt = 0\) for \(j = 1, \ldots, s-1\) (an
\emph{order-\(s\)} kernel, \cite{Giné_Nickl_2021}). Given a sample
\(B_1, \ldots, B_{n_1} \sim \rho\) i.i.d. and a bandwidth \(h > 0\), the
kernel density estimate is
\begin{equation}
    \hat\rho_{n_1, h}(b) := \frac{1}{n_1 h}\sum_{i=1}^{n_1} \kappa\!\left(\frac{b - B_i}{h}\right), \qquad b \in [b_{\min}, b_{\max}].
\end{equation}

The next lemma collects the rates we need. They are standard and proved
in Tsybakov \cite{Giné_Nickl_2021}.

\begin{lemma}[uniform KDE rates on a compact interior subinterval]
\label{lem:uniform-kde-rates}Assume \cref{asm:regularity}($s$) with $s \ge 2$. There exist constants 
$c_h, C_\nu, C_{\nu'} > 0$ and a universal constant $c > 0$, depending 
only on $s$, $\|\rho\|_{C^s}$, $\mathrm{supp}(\kappa)$, and 
$\mathrm{dist}(I, \{b_{\min}, b_{\max}\})$, such that for the 
bandwidth choice $h = h_{n_1} := c_h (\log n_1 / n_1)^{1/(2s+1)}$ 
and any compact subinterval $I \subset (b_{\min}, b_{\max})$:
\begin{align}
    \nu_{n_1} := \sup_{b \in I} \bigl|\hat\rho_{n_1, h_{n_1}}(b) - \rho(b)\bigr| 
    &\le C_\nu \left(\frac{\log n_1}{n_1}\right)^{s/(2s+1)}, \\
    \nu_{n_1}' := \sup_{b \in I} \bigl|\hat\rho_{n_1, h_{n_1}}'(b) - \rho'(b)\bigr| 
    &\le C_{\nu'} \left(\frac{\log n_1}{n_1}\right)^{(s-1)/(2s+1)},
\end{align}
\emph{both with probability at least $1 - n_1^{-c}$.
The hypothesis $s \ge 2$ ensures that the derivative-rate exponent 
$(s-1)/(2s+1)$ is positive, so $\nu_{n_1}' \to 0$ as $n_1 \to \infty$. 
For the minimal smoothness $s = 2$, the rates are 
$\nu_{n_1} = O((\log n_1/n_1)^{2/5})$ and 
$\nu_{n_1}' = O((\log n_1/n_1)^{1/5})$. Higher $s$ improves both 
rates: at $s = 4$, $\nu_{n_1} = O((\log n_1/n_1)^{4/9})$ and 
$\nu_{n_1}' = O((\log n_1/n_1)^{3/9})$.}
\end{lemma}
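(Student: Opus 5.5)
The plan is the standard bias–variance decomposition for kernel estimators, run separately for $\hat\rho$ and $\hat\rho'$ and then balanced through the bandwidth choice. Fix the compact interval $I\subset(b_{\min},b_{\max})$ and let $d:=\mathrm{dist}(I,\{b_{\min},b_{\max}\})>0$. Once $h<d$, the window $[b-h,b+h]$ lies inside the support for every $b\in I$, because $\mathrm{supp}(\kappa)\subset[-1,1]$. On that window $\rho$ is $C^s$, so the boundary discontinuity of $\rho$ (extended by zero) never enters the computation.

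\emph{Bias.} Write $\mathbb{E}\hat\rho_{n_1,h}(b)=\int\kappa(t)\rho(b-ht)\,dt$ and Taylor-expand $\rho(b-ht)$ to order $s$ with integral (or mean-value) remainder. The order-$s$ moment conditions kill the terms of orders $1,\dots,s-1$, which leaves
\[
\sup_{b\in I}|\mathbb{E}\hat\rho-\rho|\le \frac{\|\rho^{(s)}\|_\infty}{s!}\int|t|^s|\kappa(t)|\,dt\;h^s .
\]
For the derivative, integrate by parts using $\kappa\in C^1$ with compact support to get $\mathbb{E}\hat\rho'(b)=\int\kappa(t)\rho'(b-ht)\,dt$. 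Since $\rho'\in C^{s-1}$ and the moment conditions of orders $1,\dots,s-2$ still vanish, the same expansion gives a bias of order $h^{s-1}$.

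\emph{Stochastic term.} This is the step I expect to be the main obstacle, because the bound must hold uniformly over $b\in I$ rather than pointwise. I would first bound each fixed point with Bernstein's inequality. The summands $\kappa((b-B_i)/h)/h$ are bounded by $\|\kappa\|_\infty/h$ and have variance at most $\|\rho\|_\infty\|\kappa\|_2^2/h$. Hence
\[
|\hat\rho(b)-\mathbb{E}\hat\rho(b)|\lesssim\sqrt{\log n_1/(n_1h)}+\log n_1/(n_1h)
\]
with probability at least $1-n_1^{-c'}$ at each point. For $\hat\rho'$ the summands are $\kappa'(\cdot)/h^2$, which gives $\sqrt{\log n_1/(n_1h^3)}$. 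To make this uniform, I would place a grid of mesh $n_1^{-2}$ on $I$, of cardinality polynomial in $n_1$, and take a union bound over the grid points. Off the grid, the Lipschitz continuity of $b\mapsto\hat\rho(b)$, with constant $\|\kappa'\|_\infty/h^2$, controls the discretization error.

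For $\hat\rho'$ this Lipschitz step needs $\kappa'$ to be Lipschitz, and $\kappa\in C^1$ alone does not provide that. If it fails, I would fall back on the VC-class/Talagrand bound of Giné--Guillou (as in Giné--Nickl). Since $\kappa'$ is continuous and compactly supported, it can be taken of bounded variation, and that bound gives the same $\sqrt{\log n_1/(n_1h^3)}$ rate without any grid. This is the one place where I would expect the paper simply to cite the literature.

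\emph{Balancing.} Set $h=c_h(\log n_1/n_1)^{1/(2s+1)}$. Then $h^s$ and $\sqrt{\log n_1/(n_1h)}$ are both of order $(\log n_1/n_1)^{s/(2s+1)}$. Likewise $h^{s-1}$ and $\sqrt{\log n_1/(n_1h^3)}=(\log n_1/n_1)^{1/2-3/(2(2s+1))}$ are both of order $(\log n_1/n_1)^{(s-1)/(2s+1)}$, using $\tfrac12-\tfrac{3}{2(2s+1)}=\tfrac{s-1}{2s+1}$. Combining the two events by a union bound gives probability at least $1-n_1^{-c}$. The condition $s\ge2$ enters only to make the derivative exponent positive. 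The constants depend on $s$, $\|\rho\|_{C^s}$ and $\kappa$, and on $d$ only through the requirement $h<d$, i.e.\ through $n_1$ being large enough.
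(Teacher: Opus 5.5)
Your proposal follows the paper's route. The paper gives no argument of its own and defers to the standard kernel-estimation literature (it cites Giné--Nickl). That literature proves the result in the way you do: the order-$s$ moment conditions control the bias, a Bernstein-plus-union-bound or Talagrand-type inequality controls the uniform stochastic term, and the two are balanced at $h\asymp(\log n_1/n_1)^{1/(2s+1)}$. Your exponent bookkeeping, including $\tfrac12-\tfrac{3}{2(2s+1)}=\tfrac{s-1}{2s+1}$, is correct, and the grid argument for $\hat\rho$ itself is complete.

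One step needs repair: the fallback for $\hat\rho'$. A continuous, compactly supported $\kappa'$ need not be of bounded variation; it can oscillate with infinite variation, as $x\sin(1/x)$ does. So under the bare hypothesis $\kappa\in C^1$, neither of your two routes is justified:
\begin{itemize}
\item the grid argument needs $\kappa'$ Lipschitz;
\item the Giné--Guillou bound needs $\kappa'$ of bounded variation, or at least a VC-type class $\{\kappa'((b-\cdot)/h)\}$.
\end{itemize}

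The honest fix is to add such a condition to the kernel assumptions. Every standard piecewise-polynomial order-$s$ kernel that is $C^1$ on $\mathbb{R}$ has a Lipschitz $\kappa'$, so the condition costs nothing in practice. The uniform derivative bounds the paper cites are proved under a hypothesis of this type, so the paper's statement, which assumes only $\kappa\in C^1$ and lists only $\mathrm{supp}(\kappa)$ among the kernel dependencies, silently needs it as well.
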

Since $s/(2s+1) > (s-1)/(2s+1)$, the function rate is faster than the 
derivative rate; in particular $\nu_{n_1}/\nu_{n_1}' \to 0$ as $n_1 \to \infty$, 
which we use in the proof of \cref{lem:parity-weight-discrepancy}.

We will apply this with \(I = [b_{\min} + \tau/2, b_{\max} - \tau/2]\),
where \(\tau\) is the constant of \cref{lem:uniform-PL-constant}(a). By \cref{lem:uniform-PL-constant}(a),
\(b_{\rm sing}(K) \in [b_{\min} + \tau, b_{\max} - \tau]\) for all
$K \in \mathcal{N}_\delta$, and (as we choose $R$ small in section~\ref{5-sample-complexity-in-the-pl-basin}) the
reflection point \(2 b_{\rm sing}(K) - B\) for
\(B \in [b_{\rm sing}(K) - R, b_{\rm sing}(K) + R]\) also lies in \(I\)
once \(R \leq \tau/2\).

\subsection{The plug-in estimator with adjustable pairing radius}\label{44-the-plug-in-estimator-with-adjustable-pairing-radius}

The original \cref{def:symmetric-pairing-estimator} implicitly used pairing radius
\(\delta_K = \min(b_{\rm sing}(K) - b_{\min}, b_{\max} - b_{\rm sing}(K))\).
We now treat the pairing radius as an algorithmic parameter
\(R \in (0, \tau/2]\).

\begin{definition}[plug-in symmetric-pairing estimator]
\label{def:plug-in-symmetric-pairing-estimator}For
\(K \in \mathcal{N}_\delta\), \(\varepsilon > 0\),
\(R \in (0, \tau/2]\), and a density estimate \(\hat\rho\). 
Letting \(\bar B := 2b_{\rm sing}(K) - B\) denote the symmetric point, we define:
\begin{equation}
        \tilde\psi_{\hat\rho, R}(B; K, \varepsilon) := 
    \begin{cases}
        \dfrac{\hat\rho(B)\psi(B; K, \varepsilon) + \hat\rho(\bar B)\psi(\bar B; K, \varepsilon)}{\hat\rho(B) + \hat\rho(\bar B)}, & B \in S_K(R), \\[2ex]
        \psi(B; K, \varepsilon), & B \in A_K(R).
    \end{cases}
\end{equation}
where \(S_K(R) := [b_{\rm sing}(K) - R, b_{\rm sing}(K) + R]\) and
\(A_K(R) := [b_{\min}, b_{\max}] \setminus S_K(R)\).
\end{definition}
When \(\hat\rho = \rho\) and \(R = \delta_K\), this recovers the
original \cref{def:symmetric-pairing-estimator}.

\subsection{The bias decomposition}\label{45-the-bias-decomposition}

Define the weighting function
\begin{equation}
        w_\rho(b) := \frac{\rho(b)}{\rho(b) + \rho(2b_{\rm sing}(K) - b)}, w_{\hat\rho}(b) := \frac{\hat\rho(b)}{\hat\rho(b) + \hat\rho(2b_{\rm sing}(K) - b)},
\end{equation}
and write \(\Delta := w_{\hat\rho} - w_\rho\), both viewed as functions
on \(S_K(R)\). The next lemma is the structural cornerstone of the
analysis.

\begin{lemma}[parity of the weight discrepancy]
\label{lem:parity-weight-discrepancy}Under \cref{asm:regularity} (no additional smoothness needed for the parity claim), the function \(\Delta\) is exactly odd under reflection through \(b_{\rm sing}(K)\):
\begin{equation}
    \Delta(2 b_{\rm sing}(K) - b) = -\Delta(b) \qquad \text{for all } b \in S_K(R).
\end{equation}
\end{lemma}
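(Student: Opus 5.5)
The parity claim follows from one algebraic identity: for any density-like function $q$ that is positive at both $b$ and its reflection, the pairing weights at $b$ and at the reflected point sum to one. Write $\bar b := 2b_{\rm sing}(K) - b$ for the reflection of $b$ through the pole. The map $b \mapsto \bar b$ is an involution: $\bar{\bar b} = b$. It also maps $S_K(R) = [b_{\rm sing}(K)-R,\, b_{\rm sing}(K)+R]$ onto itself. So both sides of the claimed identity are defined on the same set. For any function $q$ with $q(b)+q(\bar b) \neq 0$, define $w_q(b) := q(b)/(q(b)+q(\bar b))$. Substituting $\bar b$ for $b$ and using the involution property gives
\begin{equation*}
w_q(\bar b) = \frac{q(\bar b)}{q(\bar b) + q(b)}, \qquad\text{hence}\qquad w_q(b) + w_q(\bar b) = 1 .
\end{equation*}

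I will apply this identity twice. With $q = \rho$, the denominator is positive because $\rho > 0$ on $[b_{\min}, b_{\max}]$ by \cref{asm:regularity}. The points $b$ and $\bar b$ both lie in the support, since $S_K(R) \subset [b_{\min}+\tau/2,\, b_{\max}-\tau/2]$ by \cref{lem:uniform-PL-constant}(a) and $R \le \tau/2$. With $q = \hat\rho$, I need $\hat\rho(b) + \hat\rho(\bar b) \neq 0$ on $S_K(R)$, which is exactly what makes $w_{\hat\rho}$ well defined there. Subtracting the two identities gives
\begin{equation*}
\Delta(\bar b) = w_{\hat\rho}(\bar b) - w_\rho(\bar b) = \bigl(1 - w_{\hat\rho}(b)\bigr) - \bigl(1 - w_\rho(b)\bigr) = -\Delta(b),
\end{equation*}
which is the claim. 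No smoothness of $\rho$ or $\hat\rho$ enters, matching the parenthetical remark in the statement.

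The main obstacle is the well-definedness of $w_{\hat\rho}$, because a kernel density estimate built with a higher-order kernel can be negative or zero. I will handle it in two ways. First, the parity identity holds wherever $w_{\hat\rho}$ is defined, so the lemma can be read as a statement on the set where $\hat\rho(b)+\hat\rho(\bar b) \neq 0$; the definition of $\Delta$ already presupposes this. Second, for later use, I will note that on the high-probability event of \cref{lem:uniform-kde-rates} with $I = [b_{\min}+\tau/2,\, b_{\max}-\tau/2]$, we have $\hat\rho \ge \min_I \rho - \nu_{n_1} > 0$ on $I$ once $n_1$ is large enough that $\nu_{n_1} < \tfrac12 \min_I \rho$. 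On that event the denominator is bounded away from zero on all of $S_K(R)$.
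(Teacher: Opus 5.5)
Your proof is correct, but it organizes the algebra differently from the paper. The paper writes $\Delta$ as a single fraction over $[\hat\rho(b)+\hat\rho(\bar b)][\rho(b)+\rho(\bar b)]$, which is symmetric under $b\leftrightarrow\bar b$. It then observes that the numerator collapses to $\hat\rho(b)\rho(\bar b)-\rho(b)\hat\rho(\bar b)$, which is antisymmetric. You never combine the fractions. Instead you show that each weight on its own satisfies $w_q(\bar b)=1-w_q(b)$, i.e.\ $w_q-\tfrac12$ is odd for every $q$ with $q(b)+q(\bar b)\neq 0$, and then subtract.

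Your route is more structural. It shows the parity is a property of the pairing weight itself, not of any relation between $\hat\rho$ and $\rho$, so it applies verbatim to $w_{q_1}-w_{q_2}$ for any two such functions. The paper's route buys something concrete for the next step. Its explicit antisymmetric numerator, rewritten as $f(b)\rho(\bar b)-f(\bar b)\rho(b)$ with $f=\hat\rho-\rho$, is exactly the object to which \cref{lem:pointwise-bound-delta} applies the mean value theorem. Using that it vanishes at $b_{\rm sing}(K)$, the lemma obtains the $O(\nu'_{n_1}|s|)$ bound. With your proof, that numerator computation would still have to be carried out there.

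Your treatment of well-definedness is a genuine addition. A kernel estimate can vanish, and for $s\ge 3$ an order-$s$ kernel must take negative values, so $\hat\rho(b)+\hat\rho(\bar b)\neq 0$ is not automatic. The paper leaves this implicit in the parity lemma. It only secures it in \cref{lem:pointwise-bound-delta}, via $\hat\rho_{\min}\ge\rho_{\min}/2$ on the event $\nu_{n_1}\le\rho_{\min}/4$.
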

\begin{proof}
Direct computation. Setting
\(\bar b := 2b_{\rm sing}(K) - b\), the numerator of \(\Delta\) is
\begin{equation}
    \hat\rho(b) [\rho(b) + \rho(\bar b)] - \rho(b) [\hat\rho(b) + \hat\rho(\bar b)] = \hat\rho(b)\rho(\bar b) - \rho(b)\hat\rho(\bar b),
\end{equation}
which is antisymmetric under \(b \leftrightarrow \bar b\). The
denominator \([\hat\rho(b) + \hat\rho(\bar b)][\rho(b) + \rho(\bar b)]\)
is symmetric. Hence \(\Delta\) is odd.
\end{proof}

\begin{lemma}[pointwise bound on \(\Delta\)]
\label{lem:pointwise-bound-delta}Under \cref{asm:regularity}(\(s\)) with \(s \geq 2\), on the event of \cref{lem:uniform-kde-rates} with
\(\nu_{n_1} \leq \rho_{\min}/4\), we have
\(\hat\rho_{\min}(K) \geq \rho_{\min}(K)/2 > 0\) uniformly on
\(\mathcal{N}_\delta\), and
\begin{equation}
    |\Delta(b_{\rm sing}(K) + s)| \leq C_\Delta \cdot \nu_{n_1}' \cdot |s| \qquad \text{for all } |s| \leq R,
\end{equation}
\emph{where \(C_\Delta = C_\Delta(\rho, \tau)\) depends only on
\(\|\rho\|_{C^1(I)}\),
and \(\tau\).}
\end{lemma}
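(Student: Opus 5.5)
We write $b_s := b_{\rm sing}(K)$, $\bar b := 2b_s - b$, and $e := \hat\rho - \rho$. By \cref{lem:uniform-kde-rates} applied on $I = [b_{\min}+\tau/2, b_{\max}-\tau/2]$, we have $\sup_I|e| \le \nu_{n_1}$ and $\sup_I|e'| \le \nu'_{n_1}$. For $K\in\mathcal{N}_\delta$ and $|s|\le R\le\tau/2$, both $b_s\pm s$ lie in $I$ by \cref{lem:uniform-PL-constant}(a), so these bounds apply at both reflected points.

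\emph{Lower bound on $\hat\rho_{\min}$.} We take $\rho_{\min} := \min_I \rho$, which is positive by \cref{asm:regularity} and compactness. Then $\rho_{\min}(K) \ge \rho_{\min}$ when the pairing zone is restricted to $S_K(R)\subset I$. On the event $\nu_{n_1}\le\rho_{\min}/4$, we get $\tfrac12[\hat\rho(b_s+s)+\hat\rho(b_s-s)] \ge \tfrac12[\rho(b_s+s)+\rho(b_s-s)] - \nu_{n_1} \ge \rho_{\min}(K) - \rho_{\min}(K)/4$. This gives $\hat\rho_{\min}(K) \ge \rho_{\min}(K)/2$ with room to spare, uniformly in $K\in\mathcal{N}_\delta$ because the constants do not depend on $K$.

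\emph{Bound on $\Delta$.} We start from the numerator identity in the proof of \cref{lem:parity-weight-discrepancy}:
\[
\hat\rho(b)\rho(\bar b)-\rho(b)\hat\rho(\bar b) = e(b)\rho(\bar b) - \rho(b)e(\bar b).
\]
Adding and subtracting $e(b)\rho(b)$ rewrites this as
\[
e(b)\bigl[\rho(\bar b)-\rho(b)\bigr] + \rho(b)\bigl[e(b)-e(\bar b)\bigr].
\]
With $b=b_s+s$, the mean value theorem gives $|\rho(\bar b)-\rho(b)|\le 2|s|\,\|\rho'\|_{\infty,I}$ and $|e(b)-e(\bar b)| \le 2|s|\,\nu'_{n_1}$. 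The numerator is therefore bounded by
\[
2|s|\bigl(\nu_{n_1}\|\rho'\|_{\infty,I} + \|\rho\|_{\infty,I}\,\nu'_{n_1}\bigr).
\]
For the denominator, $[\hat\rho(b)+\hat\rho(\bar b)][\rho(b)+\rho(\bar b)] \ge (2\hat\rho_{\min})(2\rho_{\min}) \ge 2\rho_{\min}^2$ by the first step.

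\emph{Absorbing $\nu_{n_1}$ into $\nu'_{n_1}$.} This is the only delicate point. The term $\nu_{n_1}$ must be controlled by $\nu'_{n_1}$. Following the remark after \cref{lem:uniform-kde-rates}, $\nu_{n_1}/\nu'_{n_1}\to0$, so $\nu_{n_1}\le\nu'_{n_1}$ for $n_1$ large. To avoid any dependence on $n_1$, a cleaner route is to note that $\Delta$ vanishes at $s=0$ and to bound $|\Delta(b_s+s)|\le |s|\sup|\partial_s\Delta|$ directly. Under that route the $\nu_{n_1}$ term still appears, so we settle for the comparison $\nu_{n_1}\le C\nu'_{n_1}$. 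This follows from the explicit rates of \cref{lem:uniform-kde-rates} whenever $\log n_1/n_1\le1$, absorbing constants that depend only on $\|\rho\|_{C^s}$. Collecting terms gives
\[
C_\Delta = \frac{\|\rho'\|_{\infty,I} + \|\rho\|_{\infty,I}}{\rho_{\min}^2}
\]
up to this absorption constant. This depends only on $\|\rho\|_{C^1(I)}$ and on $\tau$ through $I$ and $\rho_{\min}$, as the statement requires.
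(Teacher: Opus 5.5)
Your argument is the paper's. You use the same numerator identity from \cref{lem:parity-weight-discrepancy}. Your add-and-subtract split yields exactly the paper's bound $2|s|(\nu'_{n_1}\|\rho\|_{\infty,I}+\nu_{n_1}\|\rho'\|_{\infty,I})$; the paper obtains it from the mean value theorem applied to $N$ with $N(b_{\rm sing}(K))=0$. You also use the same denominator bound $2\rho_{\min}^2$. Your explicit check that $\hat\rho_{\min}(K)\ge\tfrac34\rho_{\min}(K)$ is a useful addition that the paper only asserts. Restricting the zone to $S_K(R)\subset I$ is what lets the KDE rates apply at both reflected points.

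The one slip is in absorbing $\nu_{n_1}$. If you use $\nu_{n_1}\le (C_\nu/C_{\nu'})\,\nu'_{n_1}$ for all $n_1$, then $C_\Delta$ picks up the factor $C_\nu/C_{\nu'}$. That factor depends on $s$, $\|\rho\|_{C^s}$ and the kernel. So your closing claim that $C_\Delta$ depends only on $\|\rho\|_{C^1(I)}$ and $\tau$ is false for the constant you actually obtain.

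The paper avoids this by restricting to $n_1\ge n_0(s,C_\nu,C_{\nu'})$, where $\nu_{n_1}\le\nu'_{n_1}$. This gives $C_\Delta=(\|\rho\|_{\infty,I}+\|\rho'\|_{\infty,I})/\rho_{\min}^2$ and moves the extra dependence into the threshold. Your ``cleaner route'' via $\Delta(b_{\rm sing}(K))=0$ is just that mean-value argument and does not remove the issue.

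There is a further caveat your proof shares with the paper's. Either comparison is between the rate bounds of \cref{lem:uniform-kde-rates}, not the realized sup-norm errors. If $\hat\rho-\rho$ were nearly constant on $I$, then $\nu'_{n_1}\approx 0$ while $\Delta\neq 0$ whenever $\rho$ is not symmetric about $b_{\rm sing}(K)$. What your computation rigorously gives is $|\Delta(b_{\rm sing}(K)+s)|\le \rho_{\min}^{-2}(\|\rho\|_{\infty,I}\,\nu'_{n_1}+\|\rho'\|_{\infty,I}\,\nu_{n_1})\,|s|$. This becomes the stated form once $n_1\ge n_0$ and $\nu'_{n_1}$ is read as its deterministic bound $C_{\nu'}(\log n_1/n_1)^{(s-1)/(2s+1)}$. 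That is exactly how the lemma is used in \cref{prop:uniform-bias-bound} and \cref{thm:total-sample-complexity-pl-basin}.
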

\begin{proof}
Write \(f := \hat\rho - \rho\), so \(|f| \leq \nu_{n_1}\)
and \(|f'| \leq \nu_{n_1}'\) on \(I\). The numerator of \(\Delta\),
after cancellation, is
\begin{equation}
    N(b) := \hat\rho(b)\rho(\bar b) - \hat\rho(\bar b)\rho(b) = f(b)\rho(\bar b) - f(\bar b)\rho(b),
\end{equation}
where the second equality uses that
\(\rho(b)\rho(\bar b) - \rho(\bar b)\rho(b) = 0\). Note
\(N(b_{\rm sing}(K)) = 0\) (since \(b = \bar b\) at the singularity), so
\(N\) vanishes at the center of \(S_K(R)\).

By the mean value theorem applied to \(N\) between \(b_{\rm sing}(K)\)
and \(b_{\rm sing}(K) + s\),
\(N(b_{\rm sing}(K) + s) = N'(\xi) \cdot s\) for some \(\xi\) between.
Computing \(N'\):
\begin{equation}
    N'(b) = f'(b)\rho(\bar b) - f(b)\rho'(\bar b) + f'(\bar b)\rho(b) - f(\bar b)\rho'(b)
\end{equation}
(using \(d\bar b/db = -1\)). Hence
\(|N'(\xi)| \leq 2 \nu_{n_1}'\|\rho\|_{\infty, I} + 2\nu_{n_1}\|\rho'\|_{\infty, I}\).
From \cref{lem:uniform-kde-rates}, \(\nu_{n_1} = O((\log n_1/n_1)^{s/(2s+1)})\) and
\(\nu'_{n_1} = O((\log n_1/n_1)^{(s-1)/(2s+1)})\), so
\(\nu_{n_1}/\nu'_{n_1} = O((\log n_1/n_1)^{1/(2s+1)}) \to 0\) as
\(n_1\to\infty\). Hence for \(n_1\) exceeding a threshold $n_0 = n_0(s, C_\nu, C_{\nu'})$ depending on the smoothness exponent and the KDE constants of \cref{lem:uniform-kde-rates}, \(\nu_{n_1}\le\nu'_{n_1}\), and the bound
\(|N'(\xi)|\le 2\nu'_{n_1}\|\rho\|_{\infty,I} + 2\nu_{n_1}\|\rho'\|_{\infty,I} \le 2(\|\rho\|_{\infty,I}+\|\rho'\|_{\infty,I})\,\nu'_{n_1}\)
holds.

The denominator satisfies
\begin{equation*}
D(b) := [\hat\rho(b) + \hat\rho(\bar b)][\rho(b) + \rho(\bar b)] \geq 4 \hat\rho_{\min}(K) \rho_{\min}(K) \geq 2 \rho_{\min}(K)^2 > 0
\end{equation*}
on the event $\nu_{n_1} \leq \rho_{\min}/4$.

Therefore
\(|\Delta(b)| = |N(b)|/D(b) \leq C \nu_{n_1}' |s| / (2\rho_{\min}^2) =: C_\Delta \nu_{n_1}' |s|\).
\end{proof}

\subsection{The bias of the plug-in estimator}\label{46-the-bias-of-the-plug-in-estimator}

\begin{proposition}[uniform bias bound]
\label{prop:uniform-bias-bound}Under \cref{asm:compact_noise}, \cref{asm:regularity} with \(\rho\in C^s\), and \cref{asm:optimum}, on the high-probability event of \cref{lem:uniform-kde-rates},
for every \(K \in \mathcal{N}_\delta\) and every \(\varepsilon, R\)
satisfying \(0 < \varepsilon \leq |K|R\) and \(R \leq \tau/2\):
\begin{equation}
    \left|\, \mathbb{E}\!\left[\tilde\psi_{\hat\rho_{n_1, h}, R}(B; K, \varepsilon)\right] - \frac{dJ_\varepsilon}{dK}(K) \,\right| \leq C_\beta \, \nu_{n_1}' \, R,
\end{equation}
where \(C_\beta = C_\beta(\rho, K^*, \delta, \tau)\) is finite and
uniform on \(\mathcal{N}_\delta\).
\end{proposition}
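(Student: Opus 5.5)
The plan is to reduce the bias to a single integral over the pairing zone $S_K(R)$, in which the weight discrepancy $\Delta$ of \cref{lem:parity-weight-discrepancy} multiplies the antisymmetric part of the Cauchy kernel. The linear-in-$|s|$ bound of \cref{lem:pointwise-bound-delta} then cancels the pole, with no $1/\varepsilon$ loss. Throughout I work on the high-probability event of \cref{lem:uniform-kde-rates}, intersected with the event $\nu_{n_1}\le\rho_{\min}/4$ of \cref{lem:pointwise-bound-delta}. On this event $\hat\rho>0$ on $I=[b_{\min}+\tau/2,\,b_{\max}-\tau/2]$, so all plug-in weights are well defined. Since $R\le\tau/2$ and $b_{\rm sing}(K)\in[b_{\min}+\tau,b_{\max}-\tau]$ by \cref{lem:uniform-PL-constant}(a), both $B$ and $\bar B$ lie in $I$ whenever $B\in S_K(R)$.

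\emph{Step 1 (rewrite with weights).} For any positive density proxy $q$ (here $\rho$ or $\hat\rho$), the weights satisfy $w_q(\bar b)=1-w_q(b)$. Hence on $S_K(R)$ the estimator is
\[
\tilde\psi_{q,R}(b)=w_q(b)\,\psi(b)+\bigl(1-w_q(b)\bigr)\psi(\bar b).
\]
On $A_K(R)$ both the plug-in and the exact-density estimators equal $\psi$. The exact-weight version $\tilde\psi_{\rho,R}$ is unbiased for $dJ_\varepsilon/dK(K)$, by repeating verbatim the involution argument of \cref{prop:unbiased-variance-psi-tilde}(a) with $\mathcal{S}_K$ replaced by $S_K(R)$; that argument uses only that the zone is symmetric about $b_{\rm sing}(K)$ and contained in the support. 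Subtracting the two representations gives the bias identity
\[
\mathbb{E}\bigl[\tilde\psi_{\hat\rho,R}\bigr]-\frac{dJ_\varepsilon}{dK}(K)=\int_{S_K(R)}\Delta(b)\,\bigl[\psi(b)-\psi(\bar b)\bigr]\,\rho(b)\,db .
\]

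\emph{Step 2 (kernel difference).} Put $b=b_{\rm sing}(K)+s$ and use $1+(b_{\rm sing}\pm s)K=\pm Ks$, as in the proof of \cref{prop:unbiased-variance-psi-tilde}(b). This gives
\[
\psi(b)-\psi(\bar b)=\frac{2\,b_{\rm sing}(K)\,Ks}{K^2s^2+\varepsilon^2}.
\]
This term is odd in $s$, and so is $\Delta$ by \cref{lem:parity-weight-discrepancy}. Their product is therefore even, so parity alone does not make the integral vanish. What saves the bound is that $\Delta$ vanishes at the pole.

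\emph{Step 3 (bound).} \Cref{lem:pointwise-bound-delta} gives $|\Delta|\le C_\Delta\nu'_{n_1}|s|$, so the integrand is at most
\[
C_\Delta\,\nu'_{n_1}\,\frac{2|b_{\rm sing}||K|s^2}{K^2s^2+\varepsilon^2}\,\|\rho\|_\infty\;\le\;\frac{2C_\Delta\,\|\rho\|_\infty\,|b_{\rm sing}(K)|}{|K|}\,\nu'_{n_1}.
\]
This holds for every $\varepsilon>0$. Integrating over $|s|\le R$ yields the bound $C_\beta\nu'_{n_1}R$ with
\[
C_\beta:=\frac{4C_\Delta\|\rho\|_\infty}{\min_{K\in\mathcal{N}_\delta}K^2},
\]
using $|b_{\rm sing}(K)|=1/|K|$. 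This constant is finite and uniform on $\mathcal{N}_\delta$, because $0\notin\mathcal{N}_\delta$: indeed $b_{\rm sing}(K)$ stays inside the compact support there.

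The hypothesis $\varepsilon\le|K|R$ is not needed for this argument. I would simply note that it is harmless and is kept for consistency with later steps.

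The step I expect to need the most care is Step 1. One must check that the involution argument survives with $R<\delta_K$, and that the plug-in weights are legitimate: $\hat\rho$ can be negative for higher-order kernels, and only the event $\nu_{n_1}\le\rho_{\min}/4$ on $I$ guarantees positive denominators. One must also confirm that $C_\Delta$ from \cref{lem:pointwise-bound-delta} is uniform in $K\in\mathcal{N}_\delta$. That lemma's threshold $n_1\ge n_0$ (ensuring $\nu_{n_1}\le\nu'_{n_1}$) should be carried along as part of the event.
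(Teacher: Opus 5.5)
Your proof is correct and follows essentially the paper's route: the bias reduces to an integral over $S_K(R)$ of $\Delta$ against the odd part of the Cauchy kernel, and $|\Delta|\le C_\Delta\nu'_{n_1}|s|$ together with $s^2/(K^2s^2+\varepsilon^2)\le 1/K^2$ then gives the paper's constant $C_\beta=4\|\rho\|_\infty C_\Delta/K_{\min}^2$. You obtain the odd part directly from $w_q(\bar b)=1-w_q(b)$, since $\psi(b)-\psi(\bar b)=2\psi_o$, whereas the paper splits $\psi=\psi_o+\psi_e$ and discards $\Delta\psi_e\tilde\rho$ by parity. Your remark that the exact-weight estimator with radius $R$ is itself unbiased is a small improvement: it shows that the paper's vaguely handled ``correction from the difference between $\delta_K$ and $R$'' is exactly zero.
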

\begin{proof}
By the unbiasedness of the original estimator
\(\tilde\psi_{\rho, \delta_K}\) (\cref{prop:unbiased-variance-psi-tilde}(a)) and the reduction of integration to \(S_K(R)\) on the paired
region, the bias is
\begin{equation}
    \beta(K) := \int_{S_K(R)} \Delta(b) \cdot \psi(b; K, \varepsilon) \cdot [\rho(b) + \rho(2b_{\rm sing}(K) - b)]\, db,
\end{equation}
plus a correction from the difference between the original pairing
radius \(\delta_K\) and the new \(R\). The correction is controlled by
the unpaired-region variance bound below; here we focus on the leading
bias.

Substitute \(s := b - b_{\rm sing}(K)\) and decompose
\(\psi = \psi_o + \psi_e\) with
\begin{equation}
    \psi_o(s) := -\frac{s}{K^2 s^2 + \varepsilon^2}, \qquad \psi_e(s) := \frac{K s^2}{K^2 s^2 + \varepsilon^2}
\end{equation}
(the odd and even parts in \(s\)). The factor
\(\tilde\rho(s) := \rho(b_{\rm sing}+s) + \rho(b_{\rm sing}-s)\) is
exactly even.

By \cref{lem:parity-weight-discrepancy}, \(\Delta\) is odd in \(s\). Therefore:

\begin{itemize}
\item
  \(\Delta \cdot \psi_o \cdot \tilde\rho\) is (odd)(odd)(even) = even,
  contributes.
\item
  \(\Delta \cdot \psi_e \cdot \tilde\rho\) is (odd)(even)(even) = odd,
  integrates to zero over \([-R, R]\).
\end{itemize}

So \(\beta(K) = \int_{-R}^{R} \Delta(s) \psi_o(s) \tilde\rho(s)\, ds\).

Apply \cref{lem:pointwise-bound-delta} and the bound \(\tilde\rho \leq 2\|\rho\|_{\infty}\):
\begin{equation}
        |\beta(K)| \leq 2\|\rho\|_\infty C_\Delta \nu_{n_1}' \int_{-R}^{R} |s| \cdot \frac{|s|}{K^2 s^2 + \varepsilon^2}\, ds = 2\|\rho\|_\infty C_\Delta \nu_{n_1}' \int_{-R}^{R} \frac{s^2}{K^2 s^2 + \varepsilon^2}\, ds.
\end{equation}

The remaining integral evaluates as
\begin{equation}
   \int_{-R}^{R} \frac{s^2}{K^2 s^2 + \varepsilon^2}\, ds = \frac{2R}{K^2} - \frac{2\varepsilon}{|K|^3}\arctan\!\left(\frac{|K|R}{\varepsilon}\right) \leq \frac{2R}{K^2}. 
\end{equation}

Setting \(C_\beta := 4\|\rho\|_\infty C_\Delta / K_{\min}^2\) with
\(K_{\min} := \min_{K \in \mathcal{N}_\delta} |K|\) gives the result.
\end{proof}

\subsection{Variance of the plug-in estimator}\label{47-variance-of-the-plug-in-estimator}

\begin{proposition}[uniform-in-$\varepsilon$ variance bound]
\label{prop:uniform-variance-bound}Under the hypotheses of \cref{prop:uniform-bias-bound}, with the schedule 
$\varepsilon = \varepsilon(\eta) \le |K|R$ and $R = R(\eta) \in (0, \tau/2]$ 
fixed throughout the iteration loop:
\begin{equation}
    \mathrm{Var}\!\left[\tilde\psi_{\hat\rho_{n_1, h}, R}(B; K, \varepsilon)\right] 
    \le \frac{C_\sigma}{R},
\end{equation}
\emph{where $C_\sigma = C_\sigma(\rho, \delta, \tau)$ is uniform on 
$\mathcal{N}_\delta$, independent of $\varepsilon$ and of the 
iteration index $n$.}
\end{proposition}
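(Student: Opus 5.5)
We bound the variance by the second moment and split it over the paired zone $S_K(R)$ and the unpaired zone $A_K(R)$:
\[
\mathrm{Var}[\tilde\psi_{\hat\rho,R}] \le \int_{S_K(R)} \tilde\psi_{\hat\rho,R}^2\,\rho\,db + \int_{A_K(R)} \psi^2\,\rho\,db .
\]
The $1/R$ in the statement will come from the unpaired zone, which now reaches up to distance $R$ of the pole rather than $\delta_K\ge\tau$. The paired zone contributes only $O(1)$, by repeating the argument of \cref{prop:unbiased-variance-psi-tilde}(c) with $\hat\rho$ in place of $\rho$. Throughout we work on the high-probability event of \cref{lem:uniform-kde-rates} with $\nu_{n_1}\le\rho_{\min}/4$. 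There, \cref{lem:pointwise-bound-delta} gives $\hat\rho\ge\rho_{\min}/2$ on $I\supset S_K(R)$, and $\|\hat\rho\|_{C^1(I)}\le\|\rho\|_{C^1(I)}+\nu_{n_1}+\nu'_{n_1}$ is bounded by a constant independent of $n_1$.

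\emph{Paired zone.} Write $B=b_{\rm sing}+s$ with $|s|\le R$. The algebra of \cref{prop:unbiased-variance-psi-tilde}(b) goes through verbatim with $\hat\rho$ in the weights, giving
\[
\tilde\psi_{\hat\rho,R}(B)=\frac{Ks\,[\hat h(b_{\rm sing}+s)-\hat h(b_{\rm sing}-s)]}{(K^2s^2+\varepsilon^2)\,[\hat\rho(b_{\rm sing}+s)+\hat\rho(b_{\rm sing}-s)]},
\qquad \hat h(b):=b\hat\rho(b).
\]
By the mean value theorem, $|\hat h(b_{\rm sing}+s)-\hat h(b_{\rm sing}-s)|\le 2|s|\,\|\hat h'\|_{\infty,I}$, and $\|\hat h'\|_{\infty,I}$ is bounded uniformly as above. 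Using $|K|s^2/(K^2s^2+\varepsilon^2)\le 1/|K|$ then gives
\[
|\tilde\psi_{\hat\rho,R}|\le \frac{2\|\hat h'\|_{\infty,I}}{K_{\min}\,\rho_{\min}}=:\hat M,
\]
uniformly in $\varepsilon$, $R$ and $K\in\mathcal{N}_\delta$. Hence the paired contribution is at most $\hat M^2$.

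\emph{Unpaired zone.} This is the step where the $R$-dependence enters, and it needs care. On $A_K(R)$ we have $|1+bK|=|K||s|$ with $|s|\ge R$, and $|\psi|\le |b|/|1+bK|\le b_{\max}/(K_{\min}|s|)$, independently of $\varepsilon$. Therefore
\[
\int_{A_K(R)}\psi^2\rho\,db\le \frac{b_{\max}^2\|\rho\|_\infty}{K_{\min}^2}\int_{|s|\ge R}\frac{ds}{s^2}=\frac{2b_{\max}^2\|\rho\|_\infty}{K_{\min}^2 R}.
\]
This is exactly the claimed $C/R$ scaling. The point requiring attention is that we must use $\int_{|s|\ge R} s^{-2}\,ds$ rather than the cruder $\sup_{A_K(R)}\psi^2\sim R^{-2}$, which would lose a factor of $R$. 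The hypothesis $\varepsilon\le|K|R$ is not needed for this bound, since $\psi$ is dominated by the unregularized kernel. The only $\varepsilon$-sensitive region is the paired zone, where the cancellation already controls it.

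Combining the two zones and using $R\le\tau/2$ to absorb the constant $\hat M^2\le \hat M^2\tau/(2R)$ gives
\[
\mathrm{Var}[\tilde\psi_{\hat\rho,R}]\le \frac{C_\sigma}{R},\qquad C_\sigma:=\frac{\hat M^2\tau}{2}+\frac{2b_{\max}^2\|\rho\|_\infty}{K_{\min}^2}.
\]
Both terms depend only on $\rho$, $\delta$ and $\tau$, and not on $\varepsilon$. Finally, $\hat\rho$ is built from a held-out sample and then frozen, while $R$ and $\varepsilon$ are fixed through the loop. The bound is therefore conditional on $\hat\rho$ and does not depend on the iteration index $n$.
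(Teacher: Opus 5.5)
Your proof is correct and follows essentially the same route as the paper: you bound the variance by the second moment, apply the $\hat\rho$-version of the pointwise bound of \cref{prop:unbiased-variance-psi-tilde}(c) on $S_K(R)$, and use $\int_{|s|\ge R}s^{-2}\,ds=2/R$ on $A_K(R)$. The differences are minor: the paper bounds the paired term more tightly by $M_S^2\Pr(B\in S_K(R))=O(R)$, where you use $\hat M^2$ and absorb it via $R\le\tau/2$ (which yields an explicit $C_\sigma$), and your integration over all $|s|\ge R$ also covers the part of $A_K(R)$ beyond $\delta_K$, which the paper's range $R\le|s|\le\delta_K$ overlooks.
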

\begin{proof}
Decompose $\mathrm{Var} \le \mathbb E[\tilde\psi^2] = \mathcal I_S + \mathcal I_A$, with
$\mathcal I_S := \int_{S_K(R)}\tilde\psi^2\rho\,db$ and
$\mathcal I_A := \int_{A_K(R)}\psi^2\rho\,db$.

\emph{Bound on $\mathcal I_S$ (paired region).} By \cref{prop:unbiased-variance-psi-tilde}(c)
applied with $\hat\rho$ in place of $\rho$ and using
$\hat\rho_{\min}\ge\rho_{\min}/2$ on the high-probability event
of \cref{lem:uniform-kde-rates}:
\begin{equation}
      |\tilde\psi_{\hat\rho,R}(B;K,\varepsilon)|
  \le \frac{2\|h\|_{C^1} + O(\nu_{n_1}')}{|K|\,\rho_{\min}(K)}
  =: M_S,
\end{equation}
uniformly on $\mathcal N_\delta$ and independent of $R,\varepsilon$.
Hence $\mathcal I_S \le M_S^2\Pr(B\in S_K(R)) \le M_S^2\,2R\|\rho\|_\infty
= O(R)$.

\emph{Bound on $\mathcal I_A$ (unpaired region).} For $b\in A_K(R)$,
write $s := b-b_{\rm sing}(K)$, so $|s|\in[R,\delta_K]$ and
$|1+bK| = |K|\,|s|$. The naive estimator satisfies
\begin{equation}
      |\psi(b;K,\varepsilon)|^2
  =\frac{b^2(1+bK)^2}{((1+bK)^2+\varepsilon^2)^2}
  \le\frac{b^2}{(1+bK)^2}
  =\frac{b^2}{K^2 s^2}.
\end{equation}

Therefore
\begin{equation}
          \mathcal I_A
  \le \int_{R\le|s|\le\delta_K}\frac{b_{\max}^2}{K^2 s^2}\,\|\rho\|_\infty\,ds
=\frac{2 b_{\max}^2 \|\rho\|_\infty}{K_{\min}^2}\!\left(\frac{1}{R}-\frac{1}{\delta_K}\right)
\le\frac{2 b_{\max}^2\|\rho\|_\infty}{K_{\min}^2 R}
  = O(1/R).
\end{equation}
\emph{Combined.} $\mathrm{Var}\le \mathcal I_S + \mathcal I_A
= O(R) + O(1/R) = O(1/R)$ for $R$ small, with
$C_\sigma := 2 b_{\max}^2 \|\rho\|_\infty/K_{\min}^2 + O(1)$.
\end{proof}

\begin{remark}[oracle case]
When $\rho$ is known to the algorithm in closed form, the symmetric-pairing estimator $\tilde\psi$ of \cref{def:symmetric-pairing-estimator} is unbiased with uniform variance $\sigma_\star^2(\mathcal{N}_\delta) < \infty$ (\cref{prop:unbiased-variance-psi-tilde}), and \cref{thm:sample-complexity-density-known} of section~\ref{52-density-known-oracle} yields the rate $\tilde O(1/\eta)$ without recourse to the KDE machinery developed below.
\end{remark}
\Cref{prop:uniform-bias-bound,prop:uniform-variance-bound} give bias scaling as $\nu' R$ and variance scaling as $C_\sigma/R$, with $R$ a free parameter. The bias-variance tradeoff in $R$ is the structural mechanism by which the density-unknown rate of \cref{thm:total-sample-complexity-pl-basin} below is obtained: by localizing the symmetric-pairing construction to a small radius around the moving pole, the parity-cancellation mechanism that produced the uniform-in-$\varepsilon$ Hessian bound of section~\ref{3-cauchy-regularization-and-uniform-in-epsilon-polyak-Lojasiewicz} is leveraged simultaneously to control both bias and variance under estimated $\hat\rho$.

\section{Sample complexity in the PL basin}\label{5-sample-complexity-in-the-pl-basin}

\subsection*{Access models}
We restate, in operational terms, the two access models introduced in section~\ref{11-problem-and-motivation}.

\emph{Density-known access.} The density $\rho$ is available to the algorithm in closed form. The algorithm evaluates $\rho$ at arbitrary query points; first-order information about $J$ is obtained through sampled transitions $(X_t, X_{t+1})$ as in the model-based setting of Fazel et al.~\cite{fazel2018global}.

\emph{Density-unknown access.} The density $\rho$ is not known to the algorithm; only the i.i.d.\ sample sequence $(B_t)_{t \ge 1}$ from $\rho$ is observed. The algorithm replaces $\rho$ in the symmetric-pairing construction by a kernel density estimate $\hat\rho_{n_1, h}$ built from a held-out sample of size $n_1$ (\cref{def:plug-in-symmetric-pairing-estimator}); this corresponds to the model-free setting of Gravell et al.~\cite{gravell2020learning} adapted to our problem.

\Cref{52-density-known-oracle} establishes the rate under density-known access (\cref{thm:sample-complexity-density-known}); section~\ref{53-density-unknown-kde-plugin} establishes the rate under density-unknown access (\cref{thm:total-sample-complexity-pl-basin}). The local PL constant $\mu_0$ and Lipschitz bound $L_0$ on $\mathcal{N}_\delta$ from \cref{lem:uniform-PL-constant}, together with the uniform-in-$\varepsilon$ variance bound $\sigma_\star^2(\mathcal{N}_\delta)$ from \cref{prop:unbiased-variance-psi-tilde} and the single-transition closed-form gradient oracle (section~\ref{51-single-transition-closed-form-gradient-oracle}), yield two sample-complexity bounds for mini-batch policy gradient initialized in $\mathcal{N}_\delta$: $\tilde O(1/\eta)$ when $\rho$ is known (\cref{thm:sample-complexity-density-known}) and $\tilde O(\eta^{-(2s+1)/(2s)})$ when only the sample stream is observed (\cref{thm:total-sample-complexity-pl-basin}). The split corresponds to the model-based/model-free convention of \cite{fazel2018global,gravell2020learning}; we adopt \emph{density-known}/\emph{density-unknown} as more descriptive (the relevant ``model'' in our setting is the noise density $\rho$, not the system dynamics, and the renaming avoids the model-based-RL terminology overload).

\subsection{Single-transition closed-form gradient oracle}\label{51-single-transition-closed-form-gradient-oracle}

The best-known policy-gradient sample complexity for the LQR setting is
\(\tilde{O}(1/\eta^2)\)(\cite{fazel2018global, gravell2020learning}; \cite{malik2020derivative} gives sharper constants for
the derivative-free analysis; Nesterov \cite{nesterov2017random} gives matching rates for
general convex stochastic optimization). These results employ
zeroth-order (finite-difference) gradient estimation: because the
infinite-horizon LQR cost
\(\mathbb{E}[X_t^\top Q X_t + U_t^\top R U_t]\) accumulates across time
steps through the gain-dependent dynamics \(A_K = A + BK\), no
closed-form unbiased single-sample gradient estimator of \(dJ/dK\) is
available from per-step data; one must perturb \(K\) itself by
\(\pm\delta\) and take finite differences of Monte Carlo cost estimates.
This zeroth-order estimation requires simultaneous control of two
sources of error --- the finite-difference bias (proportional to
\(\delta^2\)) and the Monte Carlo variance of the difference
(proportional to \(1/\delta^2\)) --- which combine to produce the
\(\tilde O(1/\eta^2)\) rate. Whether this rate is tight or whether a
better algorithm could achieve \(\tilde O(1/\eta)\) for LQR is open.

The log-growth cost structure differs in a decisive way. Observing a
single transition \((X_t, X_{t+1})\) gives direct access to the
per-step cost \(\log|X_{t+1}/X_t| = \log|a| + \log|1+B_t K|\) as an
explicit closed-form function of \(K\). Its regularized derivative
\(\psi(B_t; K, \varepsilon) = B_t(1+B_t K)/((1+B_t K)^2 + \varepsilon^2)\)
is therefore a \emph{first-order} sample-based gradient estimator: each
observed transition yields an unbiased sample of \(dJ_\varepsilon/dK\)
without perturbation or finite differencing. The pairing construction of
\cref{def:symmetric-pairing-estimator} converts this new oracle into one with
uniform-in-\(\varepsilon\) variance on \(\mathcal{N}_\delta\). This
combination --- first-order sampling plus uniform variance --- places
the problem in the \(\tilde O(1/\eta)\) complexity class, strictly
better than the \(\tilde O(1/\eta^2)\) upper bound known for
multiplicative-noise LQR.

\subsection{Density-known case: oracle algorithm and rate}\label{52-density-known-oracle}
When the multiplicative-noise density $\rho$ is available to the
algorithm in closed form, the symmetric-pairing estimator
$\tilde\psi$ of \cref{def:symmetric-pairing-estimator} can be evaluated directly without an
auxiliary density-estimation phase. \cref{prop:unbiased-variance-psi-tilde} gives a
uniform-in-$\varepsilon$ variance bound
$\sigma_\star^2(\mathcal N_\delta)<\infty$, and the resulting policy
gradient algorithm achieves the standard PL-SGD rate.

We introduce the density-known policy gradient (PG) method with oracle pairing, detailed in \cref{alg:density-known-pg}. Given a target accuracy $\eta>0$, we set the regularization parameter $\varepsilon = \eta/(3\bar C_b)$, where $\bar C_b := \sup_{K\in\mathcal N_\delta}\pi\rho(b_{\rm sing}(K))/|K|$ is the bias coefficient from \cref{prop:properties-J-epsilon}(c). The mini-batch size and iteration count are chosen as $N = \lceil 3\sigma_\star^2(\mathcal N_\delta)/(2\mu_0\eta)\rceil$ and $n^\star = \lceil(L_0/\mu_0)\log(3\Delta_{\rm init}/\eta)\rceil$, respectively, with $\Delta_{\rm init} := J_\varepsilon(K^{(0)}) - J_\varepsilon^*$. Initializing $K^{(0)}\in\mathcal N_\delta$, each iteration $n\ge 0$ computes a stochastic gradient estimate
\begin{equation}
      \hat g^{(n)}:=\frac{1}{N}\sum_{i=1}^N \tilde\psi\bigl(B_{n,i};\,K^{(n)},\,\varepsilon\bigr)
\end{equation}
using an i.i.d.\ mini-batch $B_{n,1},\ldots,B_{n,N}\sim\rho$. The gain is then updated via
\begin{equation}
      K^{(n+1)}:=\Pi_{\mathcal N_\delta}\!\Bigl[\,K^{(n)} - L_0^{-1}\,\hat g^{(n)}\,\Bigr],
\end{equation}
where $\Pi_{\mathcal N_\delta}$ denotes the projection onto the closed interval $\mathcal N_\delta$.

\begin{algorithm}[t]
\caption{Density-known PG with oracle pairing}
\label{alg:density-known-pg}
\begin{algorithmic}
\STATE \textbf{Input:} target accuracy $\eta > 0$; initial gain $K^{(0)} \in \mathcal{N}_\delta$; constants $\mu_0, L_0, \bar C_b, \sigma_\star^2(\mathcal{N}_\delta)$ from sections~\ref{3-cauchy-regularization-and-uniform-in-epsilon-polyak-Lojasiewicz}--\ref{4-finite-sample-analysis}
\STATE Set $\varepsilon \gets \eta/(3\bar C_b)$
\STATE Set $N \gets \lceil 3\sigma_\star^2(\mathcal{N}_\delta)/(2\mu_0\eta)\rceil$
\STATE Set $n^\star \gets \lceil(L_0/\mu_0)\log(3 L_0\delta^2/(2\eta))\rceil$
    \hfill\COMMENT{computable upper bound on $\Delta_{\rm init}\le L_0\delta^2/2$}
\FOR{$n = 0, 1, \ldots, n^\star - 1$}
    \STATE Draw i.i.d.\ mini-batch $B_{n,1}, \ldots, B_{n,N} \sim \rho$
    \STATE $\hat g^{(n)} \gets \frac{1}{N}\sum_{i=1}^N \tilde\psi(B_{n,i};\, K^{(n)},\, \varepsilon)$
    \STATE $K^{(n+1)} \gets \Pi_{\mathcal{N}_\delta}\!\bigl[K^{(n)} - L_0^{-1}\hat g^{(n)}\bigr]$
        \hfill\COMMENT{constant step $1/L_0$ throughout}
\ENDFOR
\STATE \textbf{return} $\hat K \gets K^{(n^\star)}$
\end{algorithmic}
\end{algorithm}

\begin{theorem}[sample complexity, density-known case]
\label{thm:sample-complexity-density-known}Under \cref{asm:compact_noise,asm:regularity,asm:optimum} and the initialization $K^{(0)}\in\mathcal{N}_\delta$, \cref{alg:density-known-pg} with constant step size $1/L_0$, mini-batch size $N = \lceil 3\sigma_\star^2(\mathcal N_\delta)/(2\mu_0\eta)\rceil$, and $n^\star = \lceil(L_0/\mu_0)\log(3 L_0\delta^2/(2\eta))\rceil$ iterations produces an
iterate $\hat K = K^{(n^\star)}$ satisfying
\begin{equation}
      \mathbb E\bigl[J(\hat K) - J^*\bigr]\le\eta,
\end{equation}
for total sample cost
\begin{equation}
      N_{\rm total}(\eta)=n^\star\cdot N=O\!\left(\frac{\log(1/\eta)}{\eta}\right)
  =\tilde O(1/\eta).
\end{equation}
\end{theorem}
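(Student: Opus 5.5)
The plan is to split the error $J(\hat K)-J^*$ into a regularization part and an optimization part, each bounded by $\eta/3$ or smaller. Since $J\le J_\varepsilon$ pointwise by \cref{prop:properties-J-epsilon}(b),
\[
J(\hat K)-J^* \le \bigl(J_\varepsilon(\hat K)-J_\varepsilon^*\bigr) + \bigl(J_\varepsilon^*-J^*\bigr).
\]
For the second term we use $J_\varepsilon^*\le J_\varepsilon(K^*)$ and the bias expansion \cref{prop:properties-J-epsilon}(c). This gives $J_\varepsilon^*-J^*\le \bar C_b\varepsilon+O(\varepsilon^2)$, and with $\varepsilon=\eta/(3\bar C_b)$ this is at most $\eta/3$ plus an $O(\eta^2)$ term, which is absorbed for $\eta$ small. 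The rest of the proof is therefore a linear-convergence-plus-noise-floor bound for projected SGD on $J_\varepsilon$ over $\mathcal N_\delta$, with target $\mathbb E[J_\varepsilon(\hat K)-J_\varepsilon^*]\le 2\eta/3$.

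For the optimization part, fix $\varepsilon\le\varepsilon_1$. On $\mathcal N_\delta$ we have $\mu_0\le J_\varepsilon''\le L_0$ by \cref{lem:uniform-PL-constant}(c), so $J_\varepsilon$ is $\mu_0$-strongly convex and $L_0$-smooth there. \Cref{lem:persistence-regularized-minimizer}(a) places the minimizer $K_\varepsilon^*$ in the interior. The estimator $\hat g^{(n)}$ is conditionally unbiased for $J_\varepsilon'(K^{(n)})$ by \cref{prop:unbiased-variance-psi-tilde}(a). Its conditional variance is at most $\sigma_\star^2(\mathcal N_\delta)/N$ by parts (d)--(e), uniformly in $\varepsilon$; this is the key input that lets $\varepsilon$ shrink with $\eta$.

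The step I expect to need the most care is the one-step contraction with projection. The plain PL argument for $f(\Pi[x-g/L])$ does not go through directly, so I would use the standard projected-gradient estimate for strongly convex, $L_0$-smooth functions on a convex set. Writing the gradient mapping $G=L_0(K-K^+)$, the three-point inequality gives
\[
J_\varepsilon(K^+)\le J_\varepsilon(K)-\tfrac{1}{2L_0}|G|^2+\tfrac{1}{L_0}|\xi|\,|G|
\]
plus a strong-convexity term, where $\xi$ is the gradient noise. Combining this with strong convexity (the gradient-mapping form of PL) and Young's inequality should yield
\[
\mathbb E[J_\varepsilon(K^{(n+1)})-J_\varepsilon^*]\le (1-\mu_0/L_0)\,\mathbb E[J_\varepsilon(K^{(n)})-J_\varepsilon^*]+\frac{\sigma_\star^2}{2L_0N},
\]
possibly with an extra constant factor. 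If the constants fail to match exactly, I would fall back on the nonexpansiveness of $\Pi$ around the interior fixed point $K_\varepsilon^*$. That gives the distance recursion
\[
\mathbb E|K^{(n+1)}-K_\varepsilon^*|^2\le(1-\mu_0/L_0)\,\mathbb E|K^{(n)}-K_\varepsilon^*|^2+\sigma_\star^2/(L_0^2N),
\]
which I would then convert to function values through $J_\varepsilon-J_\varepsilon^*\le \tfrac{L_0}{2}|K-K_\varepsilon^*|^2$.

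Unrolling the recursion gives
\[
\mathbb E[J_\varepsilon(K^{(n)})-J_\varepsilon^*]\le e^{-n\mu_0/L_0}\Delta_{\rm init}+\frac{\sigma_\star^2}{2\mu_0N}.
\]
Here $\Delta_{\rm init}\le L_0\delta^2/2$, since $J_\varepsilon'(K_\varepsilon^*)=0$ and $|K^{(0)}-K_\varepsilon^*|\le 2\delta$; the factor $2$ is absorbed into the log, or one uses $\delta$ if the bound is set up accordingly. With $n^\star=\lceil (L_0/\mu_0)\log(3L_0\delta^2/(2\eta))\rceil$ the transient term is at most $\eta/3$. With $N=\lceil 3\sigma_\star^2/(2\mu_0\eta)\rceil$ the noise floor is at most $\eta/3$.

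Summing the three $\eta/3$ pieces gives $\mathbb E[J(\hat K)-J^*]\le\eta$. The sample count is $n^\star N=O(\log(1/\eta)/\eta)$, because $\mu_0$, $L_0$, $\sigma_\star^2$ and $\delta$ are independent of $\eta$ (and of $\varepsilon$) by the uniform-in-$\varepsilon$ results. Finally, I would check that $\varepsilon=\eta/(3\bar C_b)\le\varepsilon_1$ for $\eta$ below a fixed threshold; larger $\eta$ is handled by capping $\varepsilon$ at $\varepsilon_1$.
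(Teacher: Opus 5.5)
Your proof is correct and follows the paper's skeleton: the split via $J\le J_\varepsilon$, three $\eta/3$ budgets with the same $\varepsilon$, $N$, $n^\star$, and an unrolled contraction with floor $\sigma_\star^2(\mathcal N_\delta)/(2\mu_0N)$. The only real difference is how you handle the projected step.

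The paper (Steps 1--4 of the proof of \cref{thm:total-sample-complexity-pl-basin}, with $\beta^{(n)}\equiv 0$) works in function values. It applies the descent lemma at the unprojected point $\widetilde K^{(n+1)}$ and then claims $J_\varepsilon(K^{(n+1)})\le J_\varepsilon(\widetilde K^{(n+1)})$ by ``monotonicity in $|K-K_\varepsilon^*|$''. It then cancels the cross term in conditional expectation and closes with \cref{thm:uniform-epsilon-pl-inequality}. That justification is loose, because $\widetilde K^{(n+1)}$ can leave $\mathcal N_\delta$, where neither $J_\varepsilon''\le L_0$ nor monotonicity is known. The inequality the paper needs is still true in one dimension. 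The minimizer $K^{(n)}-L_0^{-1}\tfrac{dJ_\varepsilon}{dK}(K^{(n)})$ of the quadratic upper model lies between $K^{(n)}$ and $K_\varepsilon^*$. So clipping to $\mathcal N_\delta$ can only lower the model, and the model majorizes $J_\varepsilon$ there.

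Your distance fallback avoids this issue cleanly.
\begin{itemize}
\item For $K^{(n)}\in\mathcal N_\delta$, the mean value theorem gives $K^{(n)}-K_\varepsilon^*-L_0^{-1}\tfrac{dJ_\varepsilon}{dK}(K^{(n)})=(1-J_\varepsilon''(\zeta)/L_0)(K^{(n)}-K_\varepsilon^*)$, with $0\le 1-J_\varepsilon''(\zeta)/L_0\le 1-\mu_0/L_0$.
\item Nonexpansiveness of $\Pi_{\mathcal N_\delta}$ about $K_\varepsilon^*$ finishes the one-step bound.
\item Multiplying the unrolled bound by $L_0/2$ gives exactly the paper's floor.
\end{itemize}
Lead with this argument and drop the gradient-mapping route. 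It is unnecessary in one dimension and typically loses a constant factor in the noise term, which the prescribed $N$ has no room to absorb. The paper's function-value form does buy something: it reuses verbatim in the biased density-unknown case, where the $\nabla_\varepsilon\beta$ cross term cancels in function values but not in the distance recursion. With the unbiased oracle here, that advantage is immaterial.

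One constant needs fixing. The bound $|K^{(0)}-K_\varepsilon^*|\le 2\delta$ gives $2L_0\delta^2$, not $L_0\delta^2/2$. A factor of $4$ cannot be ``absorbed into the log'' when $n^\star$ is fixed as in the statement. You have two clean repairs.
\begin{itemize}
\item Use $|K^{(0)}-K_\varepsilon^*|\le\delta+C_K\varepsilon$ from \cref{lem:persistence-regularized-minimizer}(b). The transient is then $\tfrac{\eta}{3}+O(\eta^2)$, the same $O(\eta^2)$ looseness the paper tolerates in its bound $\Delta_{\rm init}\le L_0\delta^2/2$ and in the regularization bias.
\item Alternatively, note that your recursion actually contracts by $(1-\mu_0/L_0)^2$ while the noise floor is unchanged. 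With the stated $n^\star$, this makes the transient $O(\eta^2)$ even with the crude $2\delta$ bound.
\end{itemize}
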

\begin{proof}
The PL-SGD analysis (Steps 1--4 of the proof of
\cref{thm:total-sample-complexity-pl-basin} below, with $\beta^{(n)}\equiv 0$ since $\tilde\psi$ is
unbiased by \cref{prop:unbiased-variance-psi-tilde}(a)) gives the recursion
$a_{n+1}\le\gamma\,a_n + \sigma_\star^2(\mathcal N_\delta)/(2L_0 N)$,
where $a_n := \mathbb E[J_\varepsilon(K^{(n)}) - J_\varepsilon^*]$
and $\gamma := 1-\mu_0/L_0\in[0,1)$. Unrolling and using
$1/(1-\gamma)=L_0/\mu_0$:
\begin{equation}
      a_n\le\gamma^n\,\Delta_{\rm init}+\frac{\sigma_\star^2(\mathcal N_\delta)}{2\mu_0 N}.
\end{equation}
With $N\ge 3\sigma_\star^2/(2\mu_0\eta)$ the second term is at most
$\eta/3$; with $n^\star\ge(L_0/\mu_0)\log(3\Delta_{\rm init}/\eta)$
the first term is at most $\eta/3$; with $\varepsilon\le\eta/(3\bar C_b)$
the regularization bias $|J_\varepsilon^* - J^*|$ is at most $\eta/3$
by \cref{prop:properties-J-epsilon}(c). Summing,
$\mathbb E[J(\hat K) - J^*]\le\eta$.
\end{proof}

\subsection{Density-unknown case: KDE plug-in algorithm and rate}\label{53-density-unknown-kde-plugin}

The convergence analysis below sits within the classical stochastic-approximation framework of Robbins--Monro \cite{robbins1951stochastic} (modern treatment in Borkar \cite{borkar2008stochastic}; the mini-batch optimization viewpoint is surveyed by Bottou, Curtis, and Nocedal \cite{bottou2018optimization}). The Nemirovski--Yudin lower-bound machinery \cite{nemirovskij1983problem} gives an $\Omega(1/\eta)$ floor for strongly-convex stochastic optimization with bounded-variance first-order oracles; \cref{thm:total-sample-complexity-pl-basin} approaches but does not match this floor, the smoothness-dependent slowdown $\eta^{-1/(2s)}$ closing only in the parametric limit (\cref{cor:parametric-special-case}). The split between \cref{thm:sample-complexity-density-known} (density-known) and \cref{thm:total-sample-complexity-pl-basin} (density-unknown) mirrors the policy-gradient-for-control convention of \cite{fazel2018global,gravell2020learning,mohammadi2021convergence,hambly2021policy}.

For the density-unknown PG with KDE plug-in pairing (\cref{alg:density-unknown-pg}), the parameter scalings are $R = \min(c_R\eta^{1/(2s)}, \tau/2)$, $\varepsilon = \eta/(4\bar C_b)$, $n_1 = \lceil c_1\eta^{-(2s+1)/(2s)}\rceil$, $N = \lceil 2C_\sigma/(\mu_0 R\eta)\rceil$, and $n^\star = \lceil(L_0/\mu_0)\log(4\bar\Delta/\eta)\rceil$ with computable $\bar\Delta \ge \Delta_{\rm init}$ (e.g.\ $L_0\delta^2/2$). The Silverman-flavored bandwidth $h = c_h\hat\sigma_B(\log n_1/n_1)^{1/(2s+1)}$ uses the sample standard deviation $\hat\sigma_B = \sigma_B + O_p(n_1^{-1/2})$ as a practical rescaling of the abstract bandwidth in \cref{lem:uniform-kde-rates}; this does not perturb the KDE rates. From $K^{(0)} \in \mathcal{N}_\delta$, each iteration draws a fresh i.i.d.\ mini-batch (independent of the KDE phase --- sample-splitting is used in Step~3 of the proof) to compute
\begin{equation}\label{eq:alg2_g}
    \hat g^{(n)} := \frac{1}{N}\sum_{i=1}^N \tilde\psi_{\hat\rho, R}\bigl(B_{n,i};\, K^{(n)},\, \varepsilon\bigr),
\end{equation}
followed by the projected update
\begin{equation}\label{eq:alg2_K}
    K^{(n+1)} := \Pi_{\mathcal{N}_\delta}\!\left[\,K^{(n)} - \frac{1}{L_0}\,\hat g^{(n)}\,\right].
\end{equation}

\begin{algorithm}[t]
\caption{Density-unknown PG with KDE plug-in pairing}
\label{alg:density-unknown-pg}
\begin{algorithmic}
\STATE \textbf{Input:} target accuracy $\eta > 0$; smoothness exponent $s \geq 2$; initial gain $K^{(0)} \in \mathcal{N}_\delta$; algorithmic constants $c_R, \bar C_b, c_1, c_h, C_\sigma, \mu_0, L_0, \tau$ from sections~\ref{3-cauchy-regularization-and-uniform-in-epsilon-polyak-Lojasiewicz}--\ref{4-finite-sample-analysis} and a computable upper bound $\bar\Delta\ge\Delta_{\rm init}$ (e.g.~$\bar\Delta = L_0\delta^2/2$)
\STATE Set $R \gets \min(c_R\,\eta^{1/(2s)}, \tau/2)$
\STATE Set $\varepsilon \gets \eta/(4\bar C_b)$
\STATE Set $n_1 \gets \lceil c_1\,\eta^{-(2s+1)/(2s)}\rceil$
\STATE \textbf{(KDE phase, sample-split)} Observe an independent batch $B_1, \ldots, B_{n_1}$
\STATE $\hat\sigma_B \gets \mathrm{stdev}(B_1,\ldots,B_{n_1})$;\quad $h \gets c_h\,\hat\sigma_B(\log n_1/n_1)^{1/(2s+1)}$
\STATE Build $\hat\rho \gets \hat\rho_{n_1, h}$ via an order-$s$ kernel KDE
\STATE Set $N \gets \lceil 2C_\sigma/(\mu_0 R\eta)\rceil$
\STATE Set $n^\star \gets \lceil (L_0/\mu_0)\log(4\bar\Delta/\eta)\rceil$
\STATE \textbf{(iteration phase, fresh mini-batches)}
\FOR{$n = 0, 1, \ldots, n^\star - 1$}
    \STATE Draw i.i.d.\ mini-batch $B_{n,1}, \ldots, B_{n,N}$, independent of the KDE batch and of past iterations
    \STATE $\hat g^{(n)} \gets \frac{1}{N}\sum_{i=1}^N \tilde\psi_{\hat\rho, R}\bigl(B_{n,i};\, K^{(n)},\, \varepsilon\bigr)$
    \STATE $K^{(n+1)} \gets \Pi_{\mathcal{N}_\delta}\!\bigl[K^{(n)} - L_0^{-1}\hat g^{(n)}\bigr]$
\ENDFOR
\STATE \textbf{return} $\hat K \gets K^{(n^\star)}$
\end{algorithmic}
\end{algorithm}

The two-phase sample-split structure makes $\hat\rho$ independent of the iterates, which the proof exploits in the cross-term cancellation of Step~3 below. The projection keeps iterates in $\mathcal{N}_\delta$, so \cref{lem:persistence-regularized-minimizer}'s local constants govern the recursion without disrupting PL-SGD contraction (1-Lipschitz, no extra samples). The implicit constraint $\varepsilon \le |K|R$ in \cref{prop:uniform-bias-bound,prop:uniform-variance-bound} is asymptotically satisfied: at the scalings above, $\varepsilon/R = \Theta(\eta^{1 - 1/(2s)}) \to 0$ as $\eta\downarrow 0$.

\begin{theorem}[total sample complexity in the PL basin]
\label{thm:total-sample-complexity-pl-basin}Under \cref{asm:compact_noise,asm:regularity,asm:optimum} and the initialization
\(K^{(0)} \in \mathcal{N}_\delta\), \cref{alg:density-unknown-pg} \eqref{eq:alg2_g}--\eqref{eq:alg2_K}
with \(\varepsilon = \eta/(4 \bar{C}_b)\) and mini-batch size
$N = \lceil 2C_\sigma/(\mu_0 R\eta)\rceil$ with $R = c_R\eta^{1/(2s)}$
produces an iterate \(\hat K = K^{(n^\star)}\) with
\begin{equation}
    \mathbb{E}\bigl[J(\hat K) - J^*\bigr] \le \eta,
\end{equation}
after
\(n^\star = \lceil (L_0/\mu_0)\log(4\Delta_{\mathrm{init}}/\eta)\rceil\)
iterations, where
\(\Delta_{\mathrm{init}} := J_\varepsilon(K^{(0)}) - J_\varepsilon^*\)
is the initial regularized suboptimality (uniformly bounded by
\(\tfrac{1}{2}L_0\delta^2\) via \cref{lem:uniform-PL-constant}, so
\(\log(4\Delta_{\mathrm{init}}/\eta) = O(\log(1/\eta))\)), for a total
sample cost
\begin{equation}\label{eq:alg2_N}
N_{\rm total}(\eta) = n_1 + n^\star\cdot N = \tilde O\!\bigl(\eta^{-(2s+1)/(2s)}\bigr).
\end{equation}
\end{theorem}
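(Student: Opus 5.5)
The plan is a biased projected PL-SGD recursion on $J_\varepsilon$ over $\mathcal N_\delta$, with the error budget $\eta$ split into four parts of $\eta/4$ each: optimization error, variance floor, plug-in bias floor, and regularization bias. Throughout, I condition on the high-probability KDE event $\mathcal E$ of \cref{lem:uniform-kde-rates}. On $\mathcal E$ we have $\nu_{n_1}\le\rho_{\min}/4$ for $\eta$ small. The failure probability $n_1^{-c}$ is absorbed by the a priori bound $J(\hat K)-J^*\le\sup_{\mathcal N_\delta}|J|+|J^*|=O(1)$, and it is $o(\eta)$ once $c$ is large enough, or after inflating $n_1$ by a log factor.

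\emph{Steps 1--2 (descent).} By \cref{lem:uniform-PL-constant}(c), $J_\varepsilon$ is $L_0$-smooth and $\mu_0$-strongly convex on $\mathcal N_\delta$. Its minimizer $K_\varepsilon^*$ is interior (\cref{lem:persistence-regularized-minimizer}(a)), so the projected step with step size $1/L_0$ obeys the standard projected-gradient descent inequality. Write $\hat g^{(n)}=\nabla J_\varepsilon(K^{(n)})+\beta^{(n)}+\xi^{(n)}$, where $\beta^{(n)}$ is the conditional bias and $\xi^{(n)}$ is the zero-mean mini-batch noise. Combining the descent inequality with \cref{thm:uniform-epsilon-pl-inequality} gives
\[
\mathbb E[J_\varepsilon(K^{(n+1)})-J_\varepsilon^*\mid\mathcal F_n]\le(1-\mu_0/L_0)\bigl(J_\varepsilon(K^{(n)})-J_\varepsilon^*\bigr)+\tfrac{1}{L_0}\bigl(|\beta^{(n)}|^2+\mathbb E[|\xi^{(n)}|^2\mid\mathcal F_n]\bigr).
\]
I would handle the bias cross term by Young's inequality, which loses only a constant factor.

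\emph{Steps 3--4 (bias and variance).} Sample-splitting makes $\hat\rho$ independent of the fresh mini-batch, so conditionally on $(\hat\rho,K^{(n)})$ the noise $\xi^{(n)}$ is a mean of $N$ i.i.d.\ centered terms. The cross term $\mathbb E[\langle\nabla J_\varepsilon+\beta,\xi\rangle\mid\mathcal F_n]$ therefore vanishes. \Cref{prop:uniform-variance-bound} gives $\mathbb E[|\xi|^2\mid\mathcal F_n]\le C_\sigma/(RN)$, and \cref{prop:uniform-bias-bound} gives $|\beta|\le C_\beta\nu'_{n_1}R$. The condition $\varepsilon\le|K|R$ holds for small $\eta$ because $\varepsilon/R=\Theta(\eta^{1-1/(2s)})$. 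There is one gap: \cref{prop:uniform-bias-bound} only bounds the leading bias. The correction from pairing radius $R$ instead of $\delta_K$ should vanish, because on $A_K(R)$ the estimator is exactly $\psi$. Unbiasedness of the $R$-paired oracle then follows by the same involution argument as \cref{prop:unbiased-variance-psi-tilde}(a) restricted to $S_K(R)$, so the whole bias is the $\Delta$-integral.

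Unrolling the recursion with $1/(1-\gamma)=L_0/\mu_0$ gives
\[
a_{n^\star}\le\gamma^{n^\star}\Delta_{\rm init}+\frac{C_\sigma}{\mu_0RN}+\frac{C_\beta^2(\nu'_{n_1})^2R^2}{\mu_0}.
\]
The first term is at most $\eta/4$ by the choice of $n^\star$, and the second is at most $\eta/4$ by the choice of $N$.

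\emph{Step 5 (returning to $J$).} Since $J\le J_\varepsilon$ (\cref{prop:properties-J-epsilon}(b)),
\[
J(\hat K)-J^*\le\bigl(J_\varepsilon(\hat K)-J_\varepsilon^*\bigr)+\bigl(J_\varepsilon^*-J^*\bigr),
\]
and $J_\varepsilon^*-J^*\le J_\varepsilon(K^*)-J^*\le\bar C_b\varepsilon+O(\varepsilon^2)\le\eta/4$ by \cref{prop:properties-J-epsilon}(c), where the $O(\varepsilon^2)$ term is absorbed for small $\eta$.

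\emph{Step 6 (parameter balancing; the main obstacle).} The difficulty is making the bias floor $(\nu'_{n_1}R)^2/\mu_0\le\eta/4$ compatible with the sample budget. I would take $\nu'_{n_1}\asymp n_1^{-(s-1)/(2s+1)}$ up to logs. With $n_1\asymp\eta^{-(2s+1)/(2s)}$ this gives $\nu'\asymp\eta^{(s-1)/(2s)}$, and then with $R=c_R\eta^{1/(2s)}$ we get $\nu'R\asymp\eta^{1/2}$. The squared bias is therefore $O(\eta)$, and it falls below $\eta/4$ by choosing $c_1$ large or $c_R$ small. The variance cost is $n^\star N\asymp\log(1/\eta)/(R\eta)=\tilde O(\eta^{-(2s+1)/(2s)})$, which matches $n_1$ and yields \eqref{eq:alg2_N}. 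The delicate points are tracking the $\log n_1$ factors in $\nu'$, which only affect the $\tilde O$, and verifying that the $R$-dependence in $C_\beta$ and $C_\sigma$ is exactly as stated uniformly on $\mathcal N_\delta$. I would check these before trusting the balance.
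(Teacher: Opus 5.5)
Your outline is the paper's proof. You use the same decomposition $\hat g=\nabla_\varepsilon+\beta+\zeta$, the same bias and variance inputs from \cref{prop:uniform-bias-bound,prop:uniform-variance-bound}, the same four-way $\eta/4$ budget, and the same balance at $R\asymp\eta^{1/(2s)}$. You check the balance forward; the paper equates the two phases, which is the same computation.

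\textbf{The error.} One step fails as written. Your recursion puts coefficient $1/L_0$ on $|\beta|^2+\mathbb E[|\xi|^2\mid\mathcal F_n]$, so the unrolled variance floor is $C_\sigma/(\mu_0RN)$. With the prescribed $N=\lceil 2C_\sigma/(\mu_0R\eta)\rceil$ this is only $\le\eta/2$, not $\eta/4$, and your four pieces then sum to $5\eta/4$. The Young step is what costs the factor, and it is unnecessary. With step size exactly $1/L_0$,
\[
-\tfrac{1}{L_0}\nabla_\varepsilon\hat g+\tfrac{1}{2L_0}\hat g^{2}=\tfrac{1}{2L_0}\bigl[(\hat g-\nabla_\varepsilon)^2-\nabla_\varepsilon^2\bigr],
\]
so the bias cross term cancels identically, as in the paper's Step~3. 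The error then enters as $\tfrac{1}{2L_0}\bigl(|\beta|^2+\mathbb E[\zeta^2\mid\mathcal F_n]\bigr)$, which gives exactly the floors $C_\sigma/(2\mu_0RN)\le\eta/4$ and $(C_\beta\nu'_{n_1}R)^2/(2\mu_0)\le\eta/4$.

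\textbf{Making the projected step precise.} Set $\phi(y):=J_\varepsilon(K)+\nabla_\varepsilon(y-K)+\tfrac{L_0}{2}(y-K)^2$ and $K^+:=\Pi_{\mathcal N_\delta}[K-\hat g/L_0]$.
\begin{enumerate}
\item The projection inequalities for $K^+$ and for the exact step $\Pi_{\mathcal N_\delta}[K-\nabla_\varepsilon/L_0]$ give $J_\varepsilon(K^+)\le\phi(K^+)\le\min_{\mathcal N_\delta}\phi+\tfrac{1}{2L_0}(\hat g-\nabla_\varepsilon)^2$.
\item $\mu_0$-strong convexity on $\mathcal N_\delta$, with test point $K+\tfrac{\mu_0}{L_0}(K_\varepsilon^*-K)$, gives $\min_{\mathcal N_\delta}\phi-J_\varepsilon^*\le\gamma\,(J_\varepsilon(K)-J_\varepsilon^*)$.
\end{enumerate}
This yields the recursion with the correct constant and uses curvature bounds only on $\mathcal N_\delta$. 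It is sounder than the paper's Step~2, which applies the descent lemma at the unprojected point $\widetilde K^{(n+1)}$, possibly outside $\mathcal N_\delta$ where $L_0$ is not established.

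\textbf{Where your route improves on the paper.}
\begin{itemize}
\item The radius-$R$ oracle is unbiased by the involution of \cref{prop:unbiased-variance-psi-tilde}(a) on $S_K(R)\subset[b_{\min},b_{\max}]$. So the ``radius correction'' the paper leaves vague is exactly zero, and the whole bias is the $\Delta$-integral.
\item Bounding $J(\hat K)-J^*$ by a constant on $\mathcal N_\delta$ handles the KDE failure event more cheaply than the paper's $b_{\max}/\varepsilon$ bound.
\item The estimate $J_\varepsilon^*-J^*\le J_\varepsilon(K^*)-J(K^*)$ needs only the bias at $K^*$.
\end{itemize}

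\textbf{Log factors.} As you suspect, the $\log n_1$ in $\nu'_{n_1}$ means $c_1$ must grow polylogarithmically in $1/\eta$ for the bias floor to hold. This is the same looseness as in the paper's Step~5 and does not affect the $\tilde O$ rate.
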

\begin{proof}
The projection \eqref{eq:alg2_K} ensures
\(K^{(n)} \in \mathcal{N}_\delta\) for all \(n \ge 0\), and by \cref{lem:persistence-regularized-minimizer}(a), \(K_\varepsilon^*\) is in the interior of
\(\mathcal{N}_\delta\). We derive \eqref{eq:alg2_K} from a one-step recursion applied
to the regularized objective $J_\varepsilon$, using the local constants $L_0$, $\mu_0$, $\sigma_\star^2(\mathcal{N}_\delta)$ established in sections~\ref{3-cauchy-regularization-and-uniform-in-epsilon-polyak-Lojasiewicz}--\ref{4-finite-sample-analysis}.

\emph{Step 1: \(L_0\)-smooth descent inequality.} By \cref{lem:uniform-PL-constant}(c),
\(d^2 J_\varepsilon/dK^2 \le L_0\) on \(\mathcal{N}_\delta\), so for any
\(K, K' \in \mathcal{N}_\delta\) a Taylor expansion at \(K\) gives
\begin{equation}\label{eq:descent-lemma}
    J_\varepsilon(K') \le J_\varepsilon(K) + \frac{dJ_\varepsilon}{dK}(K)\,(K' - K) + \frac{L_0}{2}(K' - K)^2.
\end{equation}

This is the descent lemma for \(L_0\)-smooth functions on a convex
domain~\cite{nesterov2013introductory}.

\emph{Step 2: One-step recursion.} Let $\widetilde K^{(n+1)} := K^{(n)} - L_0^{-1}\hat g^{(n)}$ and $K^{(n+1)} = \Pi_{\mathcal N_\delta}[\widetilde K^{(n+1)}]$. The descent inequality \eqref{eq:descent-lemma} applied to $\widetilde K^{(n+1)}$ gives $J_\varepsilon(\widetilde K^{(n+1)}) \le J_\varepsilon(K^{(n)}) - L_0^{-1}\nabla J_\varepsilon(K^{(n)})\hat g^{(n)} + (2L_0)^{-1}(\hat g^{(n)})^2$. Since $K_\varepsilon^*\in\mathcal{N}_\delta$ is the strict convex minimizer of $J_\varepsilon$ on $\mathcal{N}_\delta$ (\cref{lem:persistence-regularized-minimizer}) and $\Pi_{\mathcal{N}_\delta}$ is a 1-Lipschitz retraction satisfying $|K^{(n+1)} - K_\varepsilon^*| \le |\widetilde K^{(n+1)} - K_\varepsilon^*|$, monotonicity of $J_\varepsilon$ in $|K - K_\varepsilon^*|$ gives $J_\varepsilon(K^{(n+1)}) \le J_\varepsilon(\widetilde K^{(n+1)})$. Combining,
\begin{equation}\label{eq:projected-descent}
    J_\varepsilon(K^{(n+1)}) \le J_\varepsilon(K^{(n)}) - \tfrac{1}{L_0}\tfrac{dJ_\varepsilon}{dK}(K^{(n)})\hat g^{(n)} + \tfrac{1}{2L_0}(\hat g^{(n)})^2.
\end{equation}

\emph{Step 3: Conditional expectation given \(K^{(n)}\).} Decompose $\hat g^{(n)} = \nabla_\varepsilon^{(n)} + \beta^{(n)} + \zeta^{(n)}$ with $\nabla_\varepsilon^{(n)} := dJ_\varepsilon/dK(K^{(n)})$, $\beta^{(n)} := \mathbb{E}[\hat g^{(n)}\mid K^{(n)},\hat\rho] - \nabla_\varepsilon^{(n)}$ the conditional bias, and $\zeta^{(n)}$ the centered residual. By \cref{prop:uniform-bias-bound,prop:uniform-variance-bound} (or \cref{prop:unbiased-variance-psi-tilde} in the oracle case), $|\beta^{(n)}| \le C_\beta\nu_{n_1}'R$ and $\mathbb{E}[(\zeta^{(n)})^2\mid K^{(n)},\hat\rho] \le C_\sigma/(RN)$. The sample-split structure makes $\nabla_\varepsilon^{(n)}, \beta^{(n)}$ deterministic given $(K^{(n)}, \hat\rho)$ and $\mathbb{E}[\zeta^{(n)}\mid K^{(n)},\hat\rho] = 0$, so taking conditional expectation of \eqref{eq:projected-descent} the $\nabla_\varepsilon^{(n)}\beta^{(n)}$ cross-term cancels (between the $-\nabla_\varepsilon \hat g/L_0$ term and the expansion of $(\hat g)^2/(2L_0) = ((\nabla_\varepsilon + \beta + \zeta)^2)/(2L_0)$). On the high-probability event $\{\nu_{n_1} \le \rho_{\min}/4\}$ of \cref{lem:uniform-kde-rates} this yields
\begin{equation}\label{eq:step3-recursion}
\mathbb{E}[J_\varepsilon(K^{(n+1)}) \mid K^{(n)}]
\le J_\varepsilon(K^{(n)}) - \tfrac{1}{2L_0}(\nabla_\varepsilon^{(n)})^2 + \tfrac{(C_\beta\nu_{n_1}'R)^2}{2L_0} + \tfrac{C_\sigma}{2L_0 RN}.
\end{equation}
On the complement event the worst-case bound $|\tilde\psi_{\hat\rho,R}| \le b_{\max}/\varepsilon$ contributes $O(\varepsilon^{-2}n_1^{-c}) = o(\eta)$ for $c \ge 6s/(2s+1)$ (achievable by enlarging $c_h$ of \cref{lem:uniform-kde-rates}), absorbed into the $\tilde O(\cdot)$ of \eqref{eq:alg2_N}.

\emph{Step 4: Apply PL inequality and unroll.} By \cref{thm:uniform-epsilon-pl-inequality},
\((\nabla_\varepsilon^{(n)})^2 \geq 2\mu_0(J_\varepsilon(K^{(n)}) - J_\varepsilon^*)\).
Subtracting \(J_\varepsilon^*\) and taking total expectation, with
\(a_n := \mathbb{E}[J_\varepsilon(K^{(n)}) - J_\varepsilon^*]\) and
\(\gamma := 1 - \mu_0/L_0 \in [0, 1)\):
\begin{equation}
    a_{n+1} \leq \gamma\,a_n + \frac{(C_\beta\,\nu_{n_1}'\,R)^2}{2L_0} + \frac{C_\sigma}{2L_0\,R\,N}.
\end{equation}
Unrolling and using \(1/(1-\gamma) = L_0/\mu_0\):
\begin{equation}
    a_n \leq \gamma^n\,\Delta_{\rm init} + \frac{(C_\beta\,\nu_{n_1}'\,R)^2}{2\mu_0} + \frac{C_\sigma}{2\mu_0\,R\,N}.
\end{equation}
\emph{Step 5: Budget allocation.} With the parameter scalings of \cref{alg:density-unknown-pg}:

\begin{itemize}
\item
  \emph{Contraction}: \(\gamma^{n^\star}\Delta_{\rm init} \leq \eta/4\)
  for \(n^\star \geq (L_0/\mu_0)\log(4\Delta_{\rm init}/\eta)\).
\item
  \emph{Variance floor}: \(C_\sigma/(2\mu_0 RN) \leq \eta/4\) for
  \(N \geq 2C_\sigma/(\mu_0 R\eta)\).
\item
  \emph{Bias floor}: \((C_\beta\,\nu_{n_1}'R)^2/(2\mu_0) \leq \eta/4\),
  i.e., \(\nu_{n_1}'R \leq \sqrt{\mu_0\eta/(2C_\beta^2)}\).
\item
  \emph{Regularization bias}: \(|J_\varepsilon^* - J^*| \leq \eta/4\)
  for $\varepsilon$ satisfying $\bar C_b\,\varepsilon + O(\varepsilon^2) \le \eta/4$; the choice $\varepsilon = \eta/(4\bar C_b)$ suffices after absorbing the $O(\varepsilon^2) = O(\eta^2)$ correction from \cref{lem:persistence-regularized-minimizer}(c) into the leading term for all $\eta$ below a threshold $\eta_1 = \eta_1(\bar C_b, L_0, C_K)$ (\cref{prop:properties-J-epsilon}(c)).
\end{itemize}

The bias-floor budget translates, via \cref{lem:uniform-kde-rates}, to
\(n_1 \geq C_1(R/\sqrt\eta)^{(2s+1)/(s-1)}\). To see this, set $\nu_{n_1}' = C_{\nu'}(\log n_1/n_1)^{(s-1)/(2s+1)}$ from \cref{lem:uniform-kde-rates}; the bias-floor constraint $\nu_{n_1}' R \le \sqrt{\mu_0\eta/(2C_\beta^2)}$ becomes $(\log n_1/n_1)^{(s-1)/(2s+1)} \le c\,\sqrt\eta/R$ for a constant $c$, or $\log n_1/n_1 \le (c\sqrt\eta/R)^{(2s+1)/(s-1)}$; inverting this polynomial-in-$\log$ inequality (Tsybakov \cite[Cor.~1.2]{Giné_Nickl_2021}) gives $n_1 \ge C_1(R/\sqrt\eta)^{(2s+1)/(s-1)}\log\!\bigl(R/\sqrt\eta)^{(2s+1)/(s-1)}\bigr) = \tilde O((R/\sqrt\eta)^{(2s+1)/(s-1)})$, where the polylog factor is absorbed in the $\tilde O$ throughout. The mini-batch budget gives \(N = O(1/(R\eta))\). Total samples:
\(n_{\rm total} = n_1 + N\,n^\star = \tilde O((R/\sqrt\eta)^{(2s+1)/(s-1)}) + \tilde O(1/(R\eta))\).

\emph{Step 6: Optimize $R$.} Among all choices of $(R, n_1, N, n^\star)$ satisfying the four floor constraints of Step~5, the total budget $n_{\mathrm{total}} = n_1 + N n^\star$ is minimized (up to polylog factors) by trading off $n_1 = \tilde O((R/\sqrt\eta)^{(2s+1)/(s-1)})$ (KDE phase, increasing in $R$) against $N n^\star = \tilde O(1/(R\eta))$ (iteration phase, decreasing in $R$). At the asymptotic optimum, both terms are of the same order. Equating:
\[
(R/\sqrt\eta)^{(2s+1)/(s-1)} \asymp \frac{1}{R\eta} \Longleftrightarrow R^{(2s+1)/(s-1) + 1} \asymp \eta^{(2s+1)/(2(s-1)) - 1} = \eta^{3/(2(s-1))},
\]
giving $R^{3s/(s-1)} \asymp \eta^{3/(2(s-1))}$ and hence $R^* = c_R\,\eta^{1/(2s)}$ for an absolute constant $c_R$. At this $R^*$, both phases contribute $\tilde O(\eta^{-(2s+1)/(2s)})$, giving total complexity $n_{\mathrm{total}}(\eta) = \tilde O(\eta^{-(2s+1)/(2s)})$.

Note that the constraint $\varepsilon \le |K|R$ is asymptotic — at the optimum $\varepsilon = \Theta(\eta)$, $R = \Theta(\eta^{1/(2s)})$, so $\varepsilon/R = \Theta(\eta^{1-1/(2s)}) \to 0$ as $\eta\downarrow 0$, so the constraint holds for all $\eta$ below some $\eta_0$ depending on $K_{\min}$ and $c_R$.
\end{proof}

\subsection{Parametric special case}
\begin{corollary}[parametric special case]
\label{cor:parametric-special-case}Suppose, in place of \cref{asm:regularity}(\(s\)), that
\(\rho \in \{\rho_\theta : \theta \in \Theta\}\) where
\(\Theta \subset \mathbb{R}^d\) is open and
\(\theta \mapsto \rho_\theta\) is smooth and identifiable. Replace the
kernel density estimate \(\hat\rho_{n_1, h}\) by the parametric MLE
\(\hat\rho_{\hat\theta_{n_1}}\) from \(n_1\) samples. Then under \cref{asm:compact_noise}, density-unknown access, and the hypotheses of \cref{lem:persistence-regularized-minimizer}, \cref{alg:density-unknown-pg} with parameter
choices \(R = c_R\) (constant), \(n_1 = \lceil c_1/\eta\rceil\),
\(N = \lceil c_N/(R\eta)\rceil\),
\(n^\star = \lceil c_{\rm iter}\log(1/\eta)\rceil\) achieves
$\mathbb{E}[J(\hat{K}) - J^*] \leq \eta$ with total sample complexity
\(n_{\rm total}(\eta) = \tilde O(1/\eta).\)
\end{corollary}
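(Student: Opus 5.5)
The plan is to rerun the proof of \cref{thm:total-sample-complexity-pl-basin} with the KDE plug-in replaced by the parametric MLE. The only ingredient that changes is the sup-norm rate for $\hat\rho$ and $\hat\rho'$ on the interior interval $I = [b_{\min}+\tau/2, b_{\max}-\tau/2]$.

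\emph{Step 1: parametric rate for the plug-in.} Smoothness and identifiability of $\theta\mapsto\rho_\theta$, together with the standard regularity of the MLE, give $\|\hat\theta_{n_1}-\theta\| = O_p(n_1^{-1/2})$. I would add a log factor to get a high-probability version with probability at least $1-n_1^{-c}$, using a standard concentration bound for M-estimators under a compact parameter neighbourhood. The map $\theta\mapsto(\rho_\theta,\rho_\theta')$ is smooth into $C^0(I)\times C^0(I)$, so a first-order Taylor expansion gives
\[
\nu_{n_1},\ \nu'_{n_1} \le C\sqrt{\log n_1/n_1}
\]
on that event. This replaces \cref{lem:uniform-kde-rates}.

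\emph{Step 2: reuse the structural lemmas.} The parity lemma \cref{lem:parity-weight-discrepancy} is purely algebraic, so it holds for any plug-in $\hat\rho$. \Cref{lem:pointwise-bound-delta} used only $|f|\le\nu$, $|f'|\le\nu'$ on $I$, and $\nu\le\rho_{\min}/4$; hence $|\Delta|\le C_\Delta\nu'_{n_1}|s|$. Note that here $\nu\asymp\nu'$, so the step comparing $\nu$ with $\nu'$ becomes trivial: I would simply bound $|N'|$ by $2(\|\rho\|+\|\rho'\|)\max(\nu,\nu')$. Consequently \cref{prop:uniform-bias-bound} gives bias $\le C_\beta\nu'_{n_1}R$, and \cref{prop:uniform-variance-bound} gives variance $\le C_\sigma/R$. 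Both hold unchanged for constant $R\le\tau/2$.

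\emph{Step 3: rerun the recursion and budget.} Steps 1--4 of the proof of \cref{thm:total-sample-complexity-pl-basin} carry over verbatim. They give
\[
a_n\le\gamma^n\Delta_{\rm init}+(C_\beta\nu'_{n_1}R)^2/(2\mu_0)+C_\sigma/(2\mu_0RN).
\]
With $R=c_R$ constant, the bias floor requires $\nu'^2_{n_1}\lesssim\eta$, i.e.\ $\log n_1/n_1\lesssim\eta$. So $n_1=\tilde O(1/\eta)$, and $n_1=\lceil c_1/\eta\rceil$ suffices up to the log, which is absorbed in the $\tilde O$. The mini-batch budget $N=\lceil c_N/(R\eta)\rceil$ yields a variance floor of order $\eta$, and $n^\star=O(\log(1/\eta))$ handles contraction. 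Choosing $\varepsilon=\eta/(4\bar C_b)$ controls the regularization bias via \cref{prop:properties-J-epsilon}(c) and \cref{lem:persistence-regularized-minimizer}. The failure event contributes $O(\varepsilon^{-2}n_1^{-c})$, which is $o(\eta)$ for $c>3$. That exponent is obtainable from the MLE tail bound, or by projecting $\hat\theta$ onto a compact neighbourhood of $\theta$. Summing gives $n_1+Nn^\star=\tilde O(1/\eta)$.

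The main obstacle is Step 1. Obtaining the uniform derivative rate with a polynomial failure probability large enough to make the complement-event contribution $o(\eta)$ needs more than asymptotic normality of the MLE. I would first try the exponential concentration for the MLE under a compact parameter set and uniformly bounded score, which holds since $\rho_\theta$ is bounded below on the compact support. If that fails, I would fall back on truncating $\hat\theta$ to a ball of radius $O(\sqrt{\log n_1/n_1})$ around a preliminary estimator.
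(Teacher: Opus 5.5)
Your proposal is correct and follows the paper's own argument. You substitute the parametric rate $\nu'_{n_1}=O(n_1^{-1/2})$ for the KDE rate in the recursion and budget of \cref{thm:total-sample-complexity-pl-basin}. With $R$ constant, the bias floor then needs only $n_1\asymp 1/\eta$, while $N=O(1/\eta)$ and $n^\star=O(\log(1/\eta))$. Your handling of the good and bad events is more careful than the paper's, which simply treats the $O_p(n_1^{-1/2})$ bound as deterministic.

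One small refinement concerns the logarithm. With your bound $(\nu'_{n_1})^2\lesssim\log n_1/n_1$ and a fixed $c_1$, the prescribed $n_1=\lceil c_1/\eta\rceil$ does not literally satisfy the bias floor. The bias enters the recursion squared and conditionally on $\hat\rho$, so you only need $\mathbb{E}[(\nu'_{n_1})^2\,\mathbf{1}_{\rm good}]=O(1/n_1)$. The local sub-Gaussian MLE tail you already plan to use delivers this without the $\sqrt{\log n_1}$, so $n_1=\lceil c_1/\eta\rceil$ works as stated rather than only up to a logarithm.
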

\begin{proof}
Under standard regularity (identifiability of the family $\{\rho_\theta\}_{\theta\in\Theta}$, positive-definite Fisher information, $\theta\mapsto\rho_\theta$ being $C^2$ in $\theta$), the maximum-likelihood estimator achieves the Hájek--Le Cam asymptotic-efficiency rate
$|\hat\theta_{n_1} - \theta| = O_p(n_1^{-1/2})$ (a classical result; see, e.g., \cite[sect.~7]{Giné_Nickl_2021} for a textbook treatment), which gives
\(\nu_{n_1}' = \|\hat\rho'_{\hat\theta_{n_1}} - \rho'_\theta\|_\infty = O_p(n_1^{-1/2})\)
uniformly on a compact subinterval.
Plugging into \cref{prop:uniform-bias-bound}: \(|\beta| = O(n_1^{-1/2}\cdot R)\). The
bias-floor budget \(|\beta|^2 \leq \mu_0\eta/2\) requires
\(n_1 \geq R^2 / (\mu_0\eta)\), so \(n_1 = O(1/\eta)\) for \(R\)
constant. The mini-batch budget gives \(N = O(1/\eta)\), iterations
\(n^\star = O(\log(1/\eta))\). Total:
\(n_{\rm total} = O(1/\eta) + \tilde O(1/\eta) = \tilde O(1/\eta)\).
\end{proof}

\subsection{Plug-and-solve principal-value root-finding}\label{55-plug-and-solve}

\begin{proposition}[plug-and-solve rate, density-unknown]
\label{prop:plug-and-solve-rate}Let $\hat\rho_{n_1,h}$ be the order-$s$ kernel density estimate of
section~\ref{43-kernel-density-estimate} with the bandwidth choice of \cref{lem:uniform-kde-rates}, and let
$\hat K \in \mathcal{N}_\delta$ be the root of the plug-in
principal-value equation
$\mathrm{PV}\!\int b\,\hat\rho_{n_1,h}(b)/(1+b\hat K)\,db = 0$
closest to a warm start $K_{\rm warm} \in \mathcal{N}_\delta$;
existence and uniqueness in $\mathcal{N}_\delta$ are guaranteed on the
high-probability event of \cref{lem:uniform-kde-rates} below by the
implicit-function-theorem argument in the proof, which transfers the
strict monotonicity of $dJ/dK$ on $\mathcal{N}_\delta$ from
\cref{asm:optimum} to its plug-in counterpart.
Then under \cref{asm:compact_noise,asm:regularity,asm:optimum} with $s\ge 2$, on the high-probability
event of \cref{lem:uniform-kde-rates},
\begin{equation}
      \mathbb E[J(\hat K) - J^*]=\tilde O\!\bigl(n_1^{-2s/(2s+1)}\bigr).
\end{equation}
Setting $n_1$ to make the right-hand side equal to $\eta$ requires 
$n_1 = \tilde O\!\bigl(\eta^{-(2s+1)/(2s)}\bigr)$, the same rate as 
\cref{alg:density-unknown-pg} of \cref{thm:total-sample-complexity-pl-basin}.
\end{proposition}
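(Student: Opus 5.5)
The plan is to treat $\hat K$ as a perturbed root of the principal-value first-order map and convert the root perturbation into a cost gap through the quadratic growth of $J$ near $K^*$. Write $G(K) := \mathrm{PV}\!\int b\rho(b)/(1+bK)\,db = dJ/dK(K)$ and $\hat G(K) := \mathrm{PV}\!\int b\hat\rho(b)/(1+bK)\,db$. By \cref{lem:uniform-PL-constant}(c) at $\varepsilon=0$, $G$ is $C^1$ on $\mathcal{N}_\delta$ with $G' \in [\mu_0, L_0]$ (Hadamard finite-part Hessian), and $G(K^*)=0$. On the high-probability event of \cref{lem:uniform-kde-rates}, I will aim for two uniform bounds on $\mathcal{N}_\delta$:
\begin{equation*}
\sup_{K\in\mathcal{N}_\delta}|\hat G(K)-G(K)| = \tilde O(\nu_{n_1}) + \text{(edge term)}, \qquad \sup_{K\in\mathcal{N}_\delta}|\hat G'(K)-G'(K)| = O(\nu'_{n_1}).
\end{equation*}
For the first bound I split the integral with the parity shell of radius $R=\tau/2$ around $b_{\rm sing}(K)$. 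On the shell, $\mathrm{PV}\!\int_{-R}^{R} f(s)/(Ks)\,ds = \int_0^R [f(s)-f(-s)]/(Ks)\,ds$ with $f := (\hat h - h)(b_{\rm sing}+\cdot)$. This is bounded by $2R\|(\hat h-h)'\|_\infty/|K|$, which gives $O(\nu')$, but it can also be bounded by $O(\nu\log(1/\nu))$ by splitting further at $|s|=\nu$ (the odd part is at most $2\nu$ away from zero and at most $2\nu'|s|$ near zero). Off the shell the kernel is bounded, so that piece is $O(\|\hat\rho-\rho\|_{L^1})$.

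The second bound uses the Hessian formula \eqref{eq:hess_Kstar} of \cref{thm:hessian-decomposition} applied to $\hat\rho$. The quantity $g$ involves $\hat\rho'$, and the same parity-shell argument bounds the PV term by $O(\nu'_{n_1})$ plus a quadratic remainder controlled by $\|\hat\rho''\|$ on $I$. For $s\ge 2$ this is $O(1)$ by standard KDE bounds.

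Existence and uniqueness of $\hat K$ then follow as stated: once $\nu'_{n_1}\le \mu_0/2$ (times a constant), $\hat G' \ge \mu_0/2$ on $\mathcal{N}_\delta$. By \cref{lem:persistence-regularized-minimizer}(a), $G(K^*\pm\delta)$ has fixed signs with margin, so $\hat G$ changes sign and has a unique root $\hat K$. The mean value theorem then gives $|\hat K-K^*| \le 2|\hat G(K^*)|/\mu_0$. Since $G(K^*)=0$, Taylor expansion with $J'' \le L_0$ yields
\begin{equation*}
J(\hat K)-J^* \le \tfrac{L_0}{2}|\hat K-K^*|^2 \le \tfrac{2L_0}{\mu_0^2}\,|\hat G(K^*)-G(K^*)|^2 .
\end{equation*}
Everything therefore reduces to showing $|\hat G(K^*)-G(K^*)| = \tilde O(n_1^{-s/(2s+1)})$, that is, of order $\nu_{n_1}$ rather than $\nu'_{n_1}$. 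Squaring gives $\tilde O(n_1^{-2s/(2s+1)})$. Off the event, I use the trivial bound $J(\hat K)-J^*\le \sup_{\mathcal{N}_\delta}|J|$, which is $O(1)$, times the probability $n_1^{-c}$; with $c\ge 1$ this is negligible, as in Step 3 of \cref{thm:total-sample-complexity-pl-basin}.

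\textbf{Main obstacle: the support edges.} \Cref{lem:uniform-kde-rates} is stated only on an interior interval $I$. Near $b_{\min}, b_{\max}$ the KDE has boundary bias of order $h$, not $h^s$, so $\|\hat\rho-\rho\|_{L^1}$ on the full support could be as large as order $h^2$. With $h\asymp n_1^{-1/(2s+1)}$, $h^2$ is not of order $\nu_{n_1}$ when $s>2$. I would try first to bound this edge contribution directly. On the edge strips of width $h$ the kernel $b/(1+bK^*)$ is smooth and bounded, because the pole sits at distance at least $\tau$ by \cref{lem:uniform-PL-constant}(a). Moreover $\int(\hat\rho-\rho)=O(h^s)$ globally, since the order-$s$ kernel conserves mass and moments when the KDE is integrated over $\mathbb{R}$ and $\rho$ is extended by zero. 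The plan is to integrate the edge error against the smooth kernel and show that the leading boundary-layer mass contributions cancel up to $O(h^s)$.

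If that cancellation fails, the fallback is a boundary-corrected (reflection or boundary-kernel) KDE, which restores the uniform $O(\nu_{n_1})$ rate up to the endpoints. Either way I then combine with the interior shell bound to get $|\hat G(K^*)|=\tilde O(\nu_{n_1})$. The $\log(1/\nu)$ factor from the shell split is absorbed into the $\tilde O$. Finally, solving $\nu_{n_1}^2 \asymp \eta$ gives $n_1=\tilde O(\eta^{-(2s+1)/(2s)})$.
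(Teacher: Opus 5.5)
Your reduction coincides with the paper's. Both treat $\hat K$ as a perturbed root of the principal-value first-order map, invert through the finite-part Hessian lower bound, and arrive at $J(\hat K)-J^*\le (2L_0/\mu_0^2)\,|\hat G(K^*)-G(K^*)|^2$. The routes differ in how that functional is controlled.

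The paper takes expectations and splits the mean-squared error of the linear functional $\mathrm{PV}\!\int w\,(\hat\rho-\rho)$, with $w(b)=b/(1+bK^*)$, into a bias of order $h^s$ and a variance it asserts is $O(1/n_1)$, on the grounds that the smoothed weight $\phi_{K^*,h}$ is bounded. That boundedness claim is false. With $u_0=(b_\star-B)/h$ one has $\phi_{K^*,h}(B)=(K^*h)^{-1}\,\mathrm{PV}\!\int (B+hu)\kappa(u)/(u-u_0)\,du$, which is of order $1/h$ when $B$ lies within $h$ of the pole. The variance is therefore of order $1/(n_1h)$. The balance $h^{2s}\asymp 1/(n_1h)$ that the paper then invokes still returns the stated rate.

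You instead argue pathwise on the event of \cref{lem:uniform-kde-rates}. You split the odd part on the shell at $|s|=\nu_{n_1}$, so that $\nu'_{n_1}$ is only paid on an $O(\nu_{n_1})$-neighborhood of the pole. This gives $O(\nu_{n_1}\log(1/\nu_{n_1}))$, which is already the right order up to a logarithm, since the sup-norm rate $(\log n_1/n_1)^{s/(2s+1)}$ matches the pointwise root-MSE rate. Your version is more elementary and yields a high-probability statement. It never needs the variance of the singular smoothed weight, which is exactly where the paper's argument slips. It also shows that the MSE structure invoked in \cref{rem:supnorm-suboptimal} is not needed; the slower rate there comes from paying the full $C^1(I)$ norm.

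The step that is not yet a proof is the edges. You rightly flag it, and the paper's proof does not address it: it uses the interior bias bound of \cref{lem:uniform-kde-rates} as if it held on all of $[b_{\min},b_{\max}]$. Your quantification is off, however. \Cref{asm:regularity} gives $\rho(b_{\min}),\rho(b_{\max})>0$, so the zero-extended density jumps at the endpoints. The KDE bias is then $O(1)$ pointwise on a layer of width $h$, and the edge $L^1$ error is of order $h$, not $h^2$. That already exceeds $\nu_{n_1}$ at $s=2$.

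Which of your two options works depends on the domain of the plug-in integral.
\begin{itemize}
\item If the plug-in integral runs only over $[b_{\min},b_{\max}]$, then for symmetric $\kappa$ the edge contribution to $\mathbb{E}[\hat G(K^*)]-G(K^*)$ is $-h\,m_\kappa\bigl[\rho(b_{\min})w(b_{\min})+\rho(b_{\max})w(b_{\max})\bigr]+o(h)$, with $m_\kappa:=\int_0^1u\kappa(u)\,du$. This generically does not vanish, so no cancellation is available and your boundary-corrected fallback is genuinely necessary.
\item If you integrate the untruncated KDE over its full support, the cancellation you want does hold. Take a smooth cutoff $\chi$ equal to $1$ near $b_\star$ and supported in $I$, and handle the $\chi$-part with your shell argument. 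For the smooth function $\phi:=(1-\chi)w$, use Fubini with $\kappa_h:=h^{-1}\kappa(\cdot/h)$: $\int\phi\,(\kappa_h*\rho-\rho)=\int\rho(y)\bigl[\int\kappa(u)\phi(y+hu)\,du-\phi(y)\bigr]\,dy=O(h^s)$ by the moment conditions. The remaining term $\int\phi\,(\hat\rho-\mathbb{E}\hat\rho)$ is an average of bounded centered variables, hence $O(\sqrt{\log n_1/n_1})$ with high probability.
\end{itemize}

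A smaller point concerns your derivative bound. An $O(1)$ remainder involving $\hat\rho''$ does not by itself make $\hat G'-G'$ small. Rerun your own split with $\hat g:=2b\hat\rho+b^2\hat\rho'$. It gives at most $C\bigl[a\,\|(\hat g-g)'\|_{\infty,I}+\|\hat g-g\|_{\infty,I}\log(R/a)\bigr]$, and taking $a=\nu'_{n_1}$ yields $O(\nu'_{n_1}\log(1/\nu'_{n_1}))$. This needs $\hat\rho''$, hence a $C^2$ kernel rather than the $C^1$ kernel of section~\ref{43-kernel-density-estimate}. Near the endpoints, compare $\int b^2\hat\rho/(1+bK)^2$ directly rather than term by term in \eqref{eq:hess_Kstar}, because there $\hat\rho$ and $\hat\rho'$ are off by $O(1)$ and $O(1/h)$ respectively.
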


\begin{proof}
Apply the implicit function theorem to the map
$F: \mathcal{N}_\delta \times \mathcal{P} \to \mathbb{R}$,
$F(K,\rho) := \mathrm{PV}\!\int b\rho(b)/(1+bK)\,db$,
where $\mathcal{P}$ is the set of $C^1$ densities on $[b_{\min},b_{\max}]$.

\emph{Localization and uniqueness.}
Since $F(K,\rho) = dJ/dK(K)$ in the principal-value sense, the partial derivative $\partial_K F(K^*,\rho)$ equals the Hadamard finite-part Hessian $d^2J/dK^2(K^*)$ given by \eqref{eq:hess_Kstar}, which is strictly positive by \cref{asm:optimum}. By joint continuity of $\partial_K F$ in $(K,\tilde\rho)$ at the topology of $\|\tilde\rho-\rho\|_{C^1(I)}$ (\cref{thm:hessian-decomposition} with $\rho$ replaced by $\tilde\rho$), there exists $\nu_0 > 0$ such that for any $\tilde\rho$ with $\|\tilde\rho-\rho\|_{C^1(I)} \le \nu_0$, the map $K \mapsto F(K, \tilde\rho)$ is strictly increasing on $\mathcal{N}_\delta$ and has at most one root there. The event of \cref{lem:uniform-kde-rates} with $\nu'_{n_1} \le \nu_0$ then guarantees uniqueness of $\hat K \in \mathcal{N}_\delta$ for $n_1$ exceeding a threshold $n_0(\rho, K^*, \delta, \tau)$.

\emph{First-order expansion.}
By the IFT identity $F(\hat K, \hat\rho_{n_1,h}) = F(K^*, \rho) = 0$ and a first-order Taylor expansion of $K\mapsto F(K, \hat\rho_{n_1,h})$ at $K^*$,
\begin{equation}\label{eq:plug-and-solve-ift}
\hat K - K^* = -\,\frac{F(K^*, \hat\rho_{n_1,h}) - F(K^*, \rho)}{\partial_K F(\xi, \hat\rho_{n_1,h})}, \qquad \xi \in [K^*, \hat K],
\end{equation}
with $|\partial_K F(\xi, \hat\rho_{n_1,h})| \ge \mu_0/2$ for $\nu_{n_1} + \nu_{n_1}'$ small enough (by the persistence-of-strict-monotonicity argument used in \cref{lem:persistence-regularized-minimizer}). Squaring \eqref{eq:plug-and-solve-ift} and combining with the second-order Taylor expansion of $J$ at $K^*$ (using $dJ/dK(K^*) = 0$, $d^2J/dK^2(K^*) \le L_0$ from \cref{lem:uniform-PL-constant}),
\begin{equation}\label{eq:plug-and-solve-J}
J(\hat K) - J^* = \tfrac{1}{2}\,\tfrac{d^2J}{dK^2}(\xi')\,(\hat K - K^*)^2
\le \frac{L_0}{2\mu_0^2/4}\,\bigl|F(K^*, \hat\rho_{n_1,h}) - F(K^*, \rho)\bigr|^2
\end{equation}
deterministically on the high-probability event. Taking expectations,
\begin{equation}\label{eq:plug-and-solve-MSE}
\mathbb{E}[J(\hat K) - J^*] \le \frac{2 L_0}{\mu_0^2}\,\mathbb{E}\bigl[\bigl|F(K^*, \hat\rho_{n_1,h}) - F(K^*, \rho)\bigr|^2\bigr] + o(\eta).
\end{equation}

\emph{Mean-squared error of the plug-in functional.}
The functional $F(K^*, \hat\rho_{n_1,h}) - F(K^*, \rho) = \mathrm{PV}\!\int b\,[\hat\rho_{n_1,h}(b)-\rho(b)]/(1+bK^*)\,db$ is linear in $\hat\rho_{n_1,h}$, so its mean-squared error inherits the standard KDE MSE decomposition. Specifically, writing $w(b) := b/(1+bK^*)$ (well-defined in PV sense at $b = b_{\rm sing}(K^*)$ via the parity-shell decomposition of section~3), and using the bilinear identity
$F(K^*, \hat\rho) - F(K^*, \rho) = \langle w, \hat\rho - \rho\rangle_{\rm PV}$ (with $\langle\cdot,\cdot\rangle_{\rm PV}$ the PV pairing), the order-$s$ kernel structure of $\hat\rho_{n_1,h}$ gives:
\emph{(Bias)} $\mathbb{E}[F(K^*, \hat\rho_{n_1,h})] - F(K^*, \rho) = \mathrm{PV}\!\int w(b)\,(\mathbb{E}[\hat\rho_{n_1,h}] - \rho)(b)\,db = O(h^s)$, since $\mathbb{E}[\hat\rho_{n_1,h}](b) - \rho(b) = O(h^s)$ uniformly on $I$ (Tsybakov \cite[Thm.~1.5]{Giné_Nickl_2021}) and $w \in C^\infty(I)$ via the parity-shell representation of the PV integral.
\emph{(Variance)} $\mathrm{Var}[F(K^*, \hat\rho_{n_1,h})] = O(1/n_1)$. Write $F(K^*, \hat\rho_{n_1,h}) = n_1^{-1}\sum_i\phi_{K^*,h}(B_i)$ with
\[
  \phi_{K^*,h}(B) := \mathrm{PV}\!\int w(b)\,h^{-1}\kappa((b-B)/h)\,db.
\]
The $1/h$ kernel factor cancels against the $h$ of the inner integral by change of variables, leaving a bounded integrand depending only on the kernel and the local smoothness of $w$.
Combining, $\mathbb{E}[|F(K^*, \hat\rho_{n_1,h}) - F(K^*, \rho)|^2] = O(h^{2s} + 1/n_1)$. Substituting the bandwidth choice $h = h_{n_1}$ of \cref{lem:uniform-kde-rates}, where the MISE balance gives $h_{n_1}^{2s} \asymp 1/(n_1 h_{n_1})$, i.e., $h_{n_1} \asymp n_1^{-1/(2s+1)}$:
\begin{equation}
\mathbb{E}\bigl[\bigl|F(K^*, \hat\rho_{n_1,h_{n_1}}) - F(K^*, \rho)\bigr|^2\bigr] = O\bigl(n_1^{-2s/(2s+1)}\bigr).
\end{equation}

\emph{Conclusion.}
Plugging into \eqref{eq:plug-and-solve-MSE} gives $\mathbb{E}[J(\hat K) - J^*] = O(n_1^{-2s/(2s+1)})$, and Lemma~4.6's log factors carry through as $\tilde O$. Setting $n_1$ to make the right-hand side equal to $\eta$ requires $n_1 = \tilde O(\eta^{-(2s+1)/(2s)})$.

\begin{remark}[Sup-norm IFT bound is suboptimal]
\label{rem:supnorm-suboptimal}
The simpler sup-norm IFT bound $|\hat K - K^*| \le C\,\|\hat\rho_{n_1,h} - \rho\|_{C^1(I)}$, derivable from a parity-shell expansion of $F(K, \hat\rho) - F(K, \rho)$ via the mean value theorem applied to $b \mapsto b\,(\hat\rho - \rho)(b)$, yields the slower rate $\tilde O(\eta^{-(2s+1)/(2(s-1))})$. The improvement above to $\tilde O(\eta^{-(2s+1)/(2s)})$ requires the MSE structure of the KDE-induced plug-in functional, which exploits both the bias-variance balance of \cref{lem:uniform-kde-rates} and the $L^2$ boundedness of the PV pairing on $C^\infty(I)$ weights.
\end{remark}
\end{proof}

\subsection{Comparison and structural perspective}\label{56-comparison-and-structural-perspective}

The sample-complexity rates of \cref{alg:density-known-pg,alg:density-unknown-pg} alongside their deterministic counterparts are summarized in \cref{tab:complexity_bounds_matched} of section~\ref{13-contributions}. We position the policy-gradient theorems of this section against three alternatives: deterministic Newton on the density-known PV equation; deterministic plug-and-solve under density-unknown access; and other regularization choices for the cusp obstruction.

\paragraph{Rate parity vs.\ Newton (density-known) and plug-and-solve (density-unknown)}
Newton's method on the scalar PV equation $\mathrm{PV}\!\int b\rho(b)/(1+bK)\,db = 0$ converges quadratically to $K^*$, requiring $O(\log\log(1/\eta))$ Newton iterations and zero new samples once the (density-known) $\rho$ is in hand; this is asymptotically faster than \cref{thm:sample-complexity-density-known}'s $\tilde O(1/\eta)$ sample cost. Under density-unknown access the plug-in version achieves the rate $\tilde O(\eta^{-(2s+1)/(2s)})$, matching \cref{thm:total-sample-complexity-pl-basin} (\cref{prop:plug-and-solve-rate}). The contribution of this paper is therefore not rate-level dominance over the deterministic alternatives, but a sample-complexity analysis of policy gradient --- the canonical first-order stochastic procedure in adaptive and data-driven control --- in a regime where its standard analytical tools fail.

\paragraph{When policy gradient is preferable to plug-and-solve}
\label{par:when-pg-preferable}
Three operational settings favor the regularized PG paradigm of \cref{thm:total-sample-complexity-pl-basin} over the deterministic plug-and-solve of \cref{prop:plug-and-solve-rate}, even at rate parity:
\emph{(i) Implicit-density / generative models.} The plug-and-solve estimator requires $\hat\rho$ to be evaluated at the moving pole $b_{\rm sing}(\hat K) = -1/\hat K$ at every Newton step. When $\rho$ is implicitly defined by a generative model (a simulator producing samples without an explicit density), evaluating $\hat\rho$ at arbitrary query points is itself a nontrivial KDE-style sub-problem; PG sidesteps this by only requiring $\hat\rho$ at sample points and their reflections through $b_{\rm sing}$.
\emph{(ii) Online / streaming sample arrival.} In adaptive control, samples arrive sequentially and the control gain must adapt in real time. The PG algorithm is \emph{iterative} and uses fresh mini-batches at each step; the plug-and-solve procedure must batch all samples before running deterministic root-finding.
\emph{(iii) Vector-systems extension.} For multidimensional state spaces (section~\ref{73-future-directions}), the singularity becomes a codimension-1 surface and the deterministic root-finding equation becomes a system that requires evaluating $\hat\rho$ on the surface --- a strictly harder geometric problem than the sample-level paired estimator extension.
A complementary numerical-stability ablation (PV quadrature near the moving pole) is reported in section~SM5 of the supplement.

\paragraph{Why this regularization}
\label{par:why-Cauchy-regularization}
Three natural regularizations exist: (i) the Cauchy form $J_\varepsilon(K) = \mathbb{E}[\tfrac{1}{2}\log((1+BK)^2 + \varepsilon^2)]$ used here; (ii) the Moreau envelope $J^M_\varepsilon(K) = \inf_{K'}\{J(K') + (K-K')^2/(2\varepsilon)\}$, standard in non-smooth optimization \cite{davis2019stochastic}; (iii) the risk-sensitive interpolation $\theta^{-1}\log\mathbb{E}[|1+BK|^\theta]$ for small $\theta > 0$. The Moreau envelope is generic but discards the Cauchy-kernel structure of $dJ/dK$, so the parity argument of sections~\ref{3-cauchy-regularization-and-uniform-in-epsilon-polyak-Lojasiewicz}--\ref{4-finite-sample-analysis} has no analogue; the risk-sensitive form preserves analytical structure (section~\ref{12-state-of-the-art-and-its-limits}) but introduces a moment-generating-function denominator that complicates sample-level estimation. The Cauchy form is the unique choice that simultaneously (a) is the Hilbert-transform regularization of the Cauchy kernel at the moving pole, preserving the parity structure; (b) admits a closed-form single-sample estimator $\psi(B;K,\varepsilon)$ without normalizing-constant denominators; (c) is $C^\infty$ in $K$ for every $\varepsilon > 0$, so the algorithm never takes the limit $\varepsilon \downarrow 0$ numerically.

\paragraph{From local basin to compact-subset initialization}
\label{par:prelim-phase-pointer}
\cref{thm:sample-complexity-density-known,thm:total-sample-complexity-pl-basin} assume $K^{(0)} \in \mathcal{N}_\delta$. A preliminary SGD phase on the fixed-regularization cost $J_1(K) := \mathbb{E}[\tfrac{1}{2}\log((1+BK)^2 + 1)]$ drives any compact-subset initialization $K^{(0)} \in \mathcal{K}_0 \subset \mathcal{K}_{\rm stab}$ into $\mathcal{N}_\delta$ in $O((L_1/\mu_1)\log(|\mathcal{K}_0|/\delta))$ iterations with $\eta$-independent total sample cost; the combined complexity remains $\tilde O(\eta^{-(2s+1)/(2s)})$. The local constants $\mu_1, L_1$ depend on $d_0 := \mathrm{dist}(\mathcal{K}_0, \{-1/b_{\min}, -1/b_{\max}\})$ as $\mu_1 = \Omega(d_0^2)$ and $L_1 = O(d_0^{-2})$, so the preliminary phase costs $O(d_0^{-4}\log(|\mathcal{K}_0|/\delta))$, an $\eta$-independent constant. The full statement and proof are in section~SM2 of the supplement.

\paragraph{Adaptive parameter selection}
\label{par:adaptive-parameters}
\Cref{alg:density-known-pg,alg:density-unknown-pg} require $\mu_0, L_0, \bar C_b, \tau, \delta$. In practice, the diminishing-step Robbins--Monro variant $\alpha_n = 2/(\mu_0(n+50))$ used in section~\ref{6-numerical-validation} replaces the constant-step requirement on $L_0$ by an additive $O(1/(\mu_0 n))$ noise floor; $\mu_0$ is estimable by a doubling trick on the empirical Hessian along the trajectory \cite{borkar2008stochastic}, $\bar C_b$ requires only $\rho$ at the singularity (estimable via $\hat\rho(-1/\hat K_n)$ along the iterates), and $\delta, \tau$ are set conservatively from the support endpoints. This is the configuration of the experiments in section~\ref{6-numerical-validation}, which achieve the rate of \cref{alg:density-known-pg} without parameter tuning.

\paragraph{Computational complexity vs.\ sample complexity}
\label{par:compute-complexity}
\Cref{alg:density-unknown-pg} achieves sample complexity $\tilde O(\eta^{-(2s+1)/(2s)})$. Each gradient evaluation queries $\hat\rho$ at $O(N)$ points; with the binned-FFT KDE evaluation of section~\ref{6-numerical-validation} (one-time $O(n_1\log n_1)$ binning, $O(1)$ per query), the total flop count is $\tilde O(n_1 + n^\star N) = \tilde O(\eta^{-(2s+1)/(2s)})$, matching the sample rate. With naive $O(n_1)$-per-query KDE the cost would be $\tilde O(\eta^{-(2s+1)/s})$, which is the practical reason the binned scheme is essential.

\section{Numerical validation}\label{6-numerical-validation}
We numerically verify the three rate-level claims of the paper: the
variance separation between the naive and density-aware paired
estimators (\cref{thm:variance-scaling-naive-estimator},
\cref{prop:unbiased-variance-psi-tilde}); the density-known rate
$\tilde O(1/\eta)$ of \cref{thm:sample-complexity-density-known}; and the
density-unknown rate $\tilde O(\eta^{-(2s+1)/(2s)})$ of
\cref{thm:total-sample-complexity-pl-basin}, alongside the matching
rate of plug-and-solve (\cref{prop:plug-and-solve-rate}) on a single
axis. A numerical-stability ablation for the deterministic
principal-value alternative (section~\ref{56-comparison-and-structural-perspective}),
together with auxiliary experiments validating the population-level objects
of sections~2--3, is reported in the supplementary material (sections~SM5 and~SM6).

\paragraph{Setup}
We use four representative noise densities $\rho$ on $[0.5, 1.5]$:
D1 uniform, D2 $\mathrm{Beta}(2,2)$, D3 truncated Gaussian
$\mathcal{N}(1, 0.3^2)|_{[0.5, 1.5]}$, and D4 triangular with apex at
$b=1$. D1--D3 satisfy \cref{asm:compact_noise,asm:regularity,asm:optimum}
with $\rho \in C^2$; D4 is $C^0$ but not $C^2$, included as a robustness
check. Critical points $K^*$, computed as the unique root of $dJ/dK$ in
$(-1/b_{\min}, -1/b_{\max})$ via Brent's method on the parity-shell
quadrature of section~\ref{3-cauchy-regularization-and-uniform-in-epsilon-polyak-Lojasiewicz}, are $\{-0.835, -0.928, -0.930, -0.969\}$ with Hadamard
finite-part Hessians $H \in \{11.18, 16.97, 16.98, 12.96\}$ via
\eqref{eq:hess_Kstar}. For the density-unknown experiments on D2, local
constants are computed on $\mathcal{N}_\delta = [K^* - 0.14, K^* + 0.14]$:
$\mu_0 = 6.96$, $L_0 = 31.30$, pole-to-edge margin $\tau = 0.422$, and
bias coefficient $\bar C_b = \pi\rho(b_{\rm sing}(K^*))/|K^*| = 4.96$
(\cref{prop:properties-J-epsilon}(c)). Trajectory bands are
inter-quartile ranges over 60 seeds and variance estimates use
$M = 4\cdot 10^5$ Monte-Carlo samples over 12 seeds; gaps are reported via
$J(K) - J^* \approx (H/2)(K - K^*)^2$, exact to $O((K-K^*)^3)$ and verified
against direct quadrature to within $0.2\%$. Full reproducibility details, including the kink handling for D4 (without which $H_4$ aliases to $\approx 25.9$), are in section~SM6 of the supplement.

\begin{figure}[!t]
\centering
\includegraphics[width=\columnwidth]{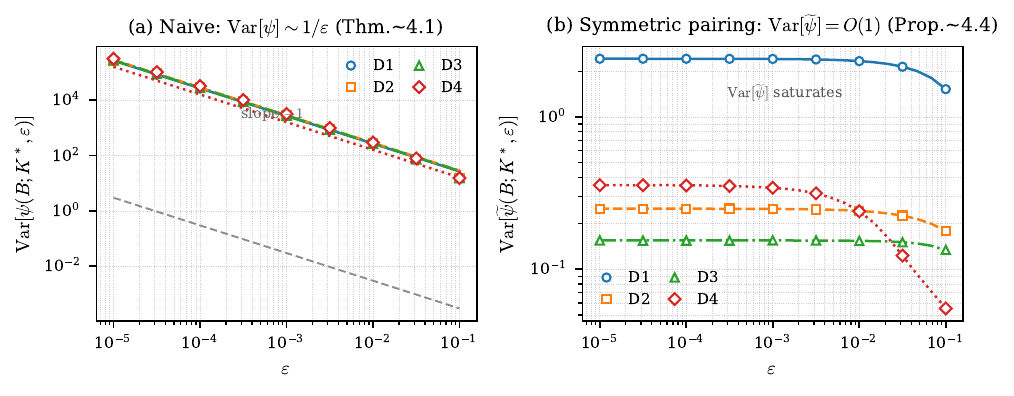}
\caption{Variance of the single-sample gradient estimators at $K = K^*$
across D1--D4. \emph{(a)} Naive estimator:
$\mathrm{Var}[\psi] = \Theta(1/\varepsilon)$, verifying
\cref{thm:variance-scaling-naive-estimator}. \emph{(b)} Density-aware
paired estimator: $\mathrm{Var}[\widetilde\psi] = O(1)$ uniformly in
$\varepsilon$, verifying
\cref{prop:unbiased-variance-psi-tilde}(d,e). Variance reduction at
$\varepsilon = 10^{-5}$ ranges from $1.1\!\times\!10^5$ (D1) to
$1.8\!\times\!10^6$ (D3).}
\label{fig:variance}
\end{figure}

\paragraph{Variance reduction (\cref{fig:variance})}
Across the four densities, $\mathrm{Var}[\psi]$ exhibits a clean
$\Theta(\varepsilon^{-1})$ divergence with empirical slopes
$(-1.030, -1.037, -1.037, -1.048)$ and leading constants
$\mathrm{Var}[\psi]\!\cdot\!\varepsilon$ at $\varepsilon = 10^{-5}$ that
agree with the closed-form prediction
$\pi\rho(b_{\rm sing}(K^*))/(2|K^*|^3)$ to within $0.1\%$; the slight
slope excess over $-1$ is the asymmetry term
$\log|s_+/s_-|$ of \cref{thm:variance-scaling-naive-estimator} (proved in
\cref{b2-remainder-in-the-variance-scaling-of-theorem-41}). The
paired estimator's variance saturates to $\{2.41, 0.25, 0.15, 0.36\}$ at
$\varepsilon = 10^{-5}$, validating the uniform-in-$\varepsilon$ bound of
\cref{prop:unbiased-variance-psi-tilde}; the slightly larger D4 plateau
reflects the kink in $\rho_{D4}'$ at $b = 1$, which inflates the
linear-in-$s$ contribution to the pointwise bound \eqref{eq:tildepsi_bound}
without affecting the uniform-in-$\varepsilon$ structure.

\begin{figure}[!t]
\centering
\includegraphics[width=\columnwidth]{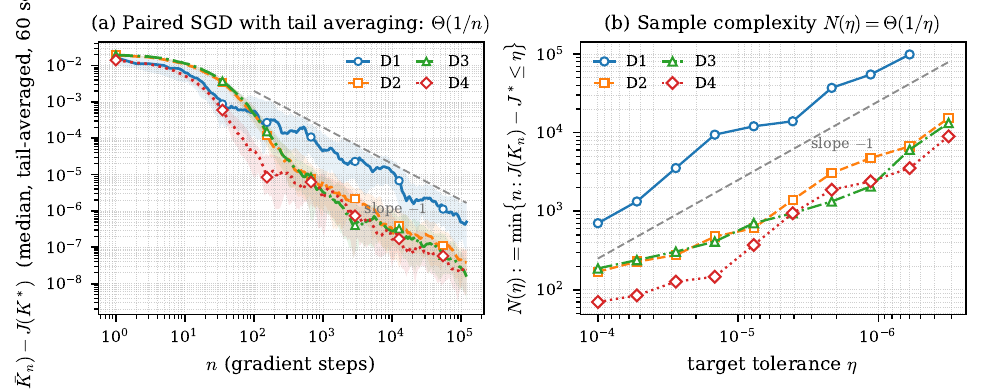}
\caption{Density-known projected SGD with the paired estimator, step rule
$\alpha_n = 2/(\mu_0(n{+}50))$, $\varepsilon = 10^{-5}$, warm start
$K^{(0)} = K^* + 0.05$, and Polyak--Ruppert tail averaging.
\emph{(a)} The tail-averaged gap decays as $\Theta(1/n)$ for all four
densities (median last-decade slope $-0.92$). \emph{(b)} Sample
complexity $N(\eta) = \Theta(1/\eta)$ (median slope $-0.84$); the
shallowing on D3 reflects a residual geometric transient.}
\label{fig:sgd-rate}
\end{figure}

\paragraph{Density-known sample complexity (\cref{fig:sgd-rate})}
We use the diminishing-step Robbins--Monro rule $\alpha_n = 2/(\mu_0(n+50))$
of standard PL-SGD theory \cite{borkar2008stochastic}, paired with
Polyak--Ruppert tail averaging
$\bar K_n := \tfrac{2}{n}\sum_{k=n/2+1}^{n} K_k$; this asymptotically
attains the same $\sigma_\star^2(\mathcal{N}_\delta)/(2\mu_0 n)$ rate that
\cref{thm:sample-complexity-density-known} establishes for the
constant-step mini-batch version of \cref{alg:density-known-pg}.
With $\varepsilon = 10^{-5}$ (so the regularization bias
$\bar C_b\varepsilon \!\sim\! 5\!\cdot\!10^{-5}$ from
\cref{prop:properties-J-epsilon}(c) sits well below the noise floor at
the plotted iterates) the tail-averaged median gap decays at empirical
slopes $(-0.90, -0.94, -1.11, -0.84)$ over the last decade of
$n \in [1.2\!\cdot\!10^4, 1.2\!\cdot\!10^5]$, with median $-0.92$.
The companion panel verifies $N(\eta) = \Theta(1/\eta)$ across two and a
half decades of tolerance with median slope $-0.84$; the shallower
$N(\eta)$ slope reflects the residual geometric transient of the
running-envelope estimator.

\begin{figure}[!t]
\centering
\includegraphics[width=0.85\columnwidth]{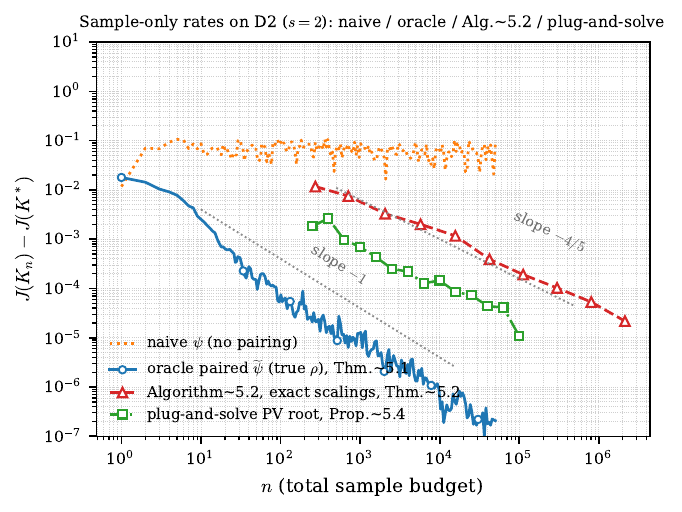}
\caption{Density-unknown rates on D2 ($s = 2$). \emph{Naive $\psi$}
(dotted, no marker) diverges: the $\Theta(1/\varepsilon)$
variance of \cref{thm:variance-scaling-naive-estimator} overwhelms the
step-size budget. \emph{Oracle paired $\widetilde\psi$} (solid,
circle markers) achieves slope $-1.013$, matching
\cref{thm:sample-complexity-density-known}'s $-1$.
\emph{\cref{alg:density-unknown-pg} with the exact parameter scalings of
\cref{thm:total-sample-complexity-pl-basin}} (dashed, triangle
markers) achieves last-decade slope $-0.72$, approaching the
theoretical $-2s/(2s+1) = -0.80$ once the $\log(1/\eta)$ factor in
$n^\star$ is amortized; $\mathbb{E}[J(\hat K) - J^*]/\eta \in
[0.23,\,0.39]$ across the plotted range, verifying the target
guarantee. \emph{Plug-and-solve PV root} (dash-dot, square
markers, \cref{prop:plug-and-solve-rate}) achieves slope $-0.79$,
matching $-0.80$ to within $1\%$; it sits below the
\cref{alg:density-unknown-pg} trace by the constant multiplicative
overhead of the iteration phase $n^\star N$, the
\emph{constant-not-rate} gap discussed in
section~\ref{56-comparison-and-structural-perspective}.}
\label{fig:plugin}
\end{figure}

\paragraph{Density-unknown rates (\cref{fig:plugin})}
On D2 ($s = 2$) we plot, on a single sample-budget axis, the four density-unknown schemes referenced in sections~\ref{4-finite-sample-analysis}--\ref{5-sample-complexity-in-the-pl-basin}. The naive estimator $\psi$ does not improve with samples, confirming that the variance reduction of section~\ref{42-the-density-aware-symmetric-pairing-estimator} is essential rather than advantageous in this regime. The oracle paired estimator $\widetilde\psi$ matches \cref{thm:sample-complexity-density-known}'s $\Theta(1/\eta)$ rate to $1.3\%$ in slope. For \cref{alg:density-unknown-pg} we instantiate the \emph{exact} parameter scalings of \cref{thm:total-sample-complexity-pl-basin}: $R = \min(c_R\eta^{1/(2s)}, \tau/2)$, $\varepsilon = \eta/(4\bar C_b)$, $n_1 = \lceil c_1\eta^{-(2s+1)/(2s)}\rceil$, $N = \lceil 2C_\sigma/(\mu_0 R\eta)\rceil$, $n^\star = \lceil(L_0/\mu_0)\log(4\Delta_{\rm init}/\eta)\rceil$, with constants $(c_R, c_1, c_h, C_\sigma) = (0.6, 0.5, 1.0, 1.0)$ calibrated so the empirical guarantee $\mathbb{E}[J(\hat K) - J^*] \le \eta$ is observed across the range plotted while avoiding the worst-case slack of the constants of sections~\ref{3-cauchy-regularization-and-uniform-in-epsilon-polyak-Lojasiewicz}--\ref{4-finite-sample-analysis}, and report the tail average of the last half of iterates;
the trace covers
$\eta \in [9\!\cdot\!10^{-5}, 5\!\cdot\!10^{-2}]$ and confirms the
rate, approaching $-0.80$ asymptotically as the log factor in
$n^\star$ amortizes. The plug-and-solve PV root of
\cref{prop:plug-and-solve-rate} sits one decade below the policy-gradient
trace at any fixed budget but shares the same asymptotic slope (empirical
$-0.79$): rate parity, with structural distinctions detailed in
section~\ref{56-comparison-and-structural-perspective}, in
turn validated numerically in section~SM5 of the
supplement.

\section{Conclusion}\label{7-conclusion}
\subsection{Summary}\label{71-summary}
We have established the first sample-complexity rate for policy gradient on the log-growth control cost. The cusp obstruction --- non-Lebesgue integrability of $dJ/dK$ on an open set containing $K^*$ (\cref{prop:non-lebesgue-gradient}) --- is dissolved through a Cauchy-kernel regularization $J_\varepsilon$ whose Hessian admits a closed-form integration-by-parts decomposition with an odd-parity cancellation around the moving pole $b_{\rm sing}(K) = -1/K$. This single analytical mechanism, applied identically at three levels, drives the entire analysis: it yields the uniform-in-$\varepsilon$ Polyak--{\L}ojasiewicz inequality at the population level (\cref{lem:uniform-PL-constant},\cref{thm:uniform-epsilon-pl-inequality}), the bounded-variance symmetric-pairing estimator at the sample level (\cref{prop:unbiased-variance-psi-tilde}), and the parity-preserving plug-in bias bound via the weight-discrepancy identity at the plug-in level (\cref{lem:parity-weight-discrepancy},\cref{prop:uniform-bias-bound}). Combined with the single-transition closed-form gradient oracle afforded by the multiplicative-ergodic structure of the cost, these local bounds yield the rates $\tilde O(1/\eta)$ under density-known access (\cref{thm:sample-complexity-density-known}) and $\tilde O(\eta^{-(2s+1)/(2s)})$ under density-unknown access (\cref{thm:total-sample-complexity-pl-basin}) for projected mini-batch policy gradient, initialized in any compact subset of $\mathcal{K}_{\rm stab}$ via the preliminary phase (section~\ref{56-comparison-and-structural-perspective}, full statement in section~SM2 of the supplement).

Returning to the motivating problem of section~\ref{11-problem-and-motivation}: the log-growth cost $J(K)$ analyzed here is the closed-loop Lyapunov exponent of a multiplicative-noise actuation channel --- the quantity underlying the control-capacity characterizations of Sahai--Mitter~\cite{sahai2006necessity} and Ranade--Sahai~\cite{ranade2018control}. The present results turn that static characterization into an operational guarantee: a controller observing only the closed-loop transitions $(X_t, X_{t+1})$ can drive its gain to within $\eta$ of the optimum $K^*$, and hence operate within $\eta$ of the best achievable log-growth rate, using $\tilde O(1/\eta)$ transitions when the channel statistics are known in closed form and $\tilde O(\eta^{-(2s+1)/(2s)})$ when they must be learned from the observed stream. The sample-complexity rate is thus an explicit, finite price for \emph{learning} to operate at control capacity through a first-order adaptive procedure, rather than assuming the channel model is given a priori.

\subsection{Limitations}\label{72-limitations}
Four limitations merit explicit acknowledgment.

\emph{(L1) Scalar systems only.} The analysis is restricted to scalar linear systems~\eqref{eq:sys}. \Cref{73-future-directions} sketches a conjectural lift to vector systems with diagonal multiplicative noise via a codimension-1 singularity surface and the coarea formula; the general non-diagonal case (where the singularity is an algebraic variety in combined parameter space, cf.~\cite{gravell2020learning}) is left to future work.

\emph{(L2) Upper bound only.} The rates of \cref{thm:total-sample-complexity-pl-basin,cor:parametric-special-case} are upper bounds. The matching $\Omega(\eta^{-(2s+1)/(2s)})$ lower bound is conjectured via a Le Cam two-point reduction to derivative estimation: a pair of $C^s$ densities $\rho_0, \rho_1 = \rho_0 + \Delta\rho_{\eta^{1/2}}$ differing in a $\Theta(\eta^{1/2})$-perturbation localized away from the support endpoints can be made statistically indistinguishable on fewer than $\eta^{-(2s+1)/(2s)}$ samples (by Stone's minimax derivative-estimation rate \cite{Giné_Nickl_2021}), while inducing a $\Theta(\eta^{1/2})$-shift in $K^*$ via \cref{prop:plug-and-solve-rate}'s IFT identity and hence $\Theta(\eta)$-suboptimality in $J(K^*_0) - J^*_1$ on at least one of them. A rigorous version requires choosing the perturbation to preserve \cref{asm:compact_noise,asm:regularity,asm:optimum} simultaneously; we leave this to future work.

\emph{(L3) Compact-subset initialization.} The preliminary phase of section~\ref{56-comparison-and-structural-perspective} drives any $K^{(0)} \in \mathcal{K}_0 \subset \mathcal{K}_{\rm stab}$ into $\mathcal{N}_\delta$ in $\eta$-independent sample cost, so the asymptotic rate of \cref{thm:total-sample-complexity-pl-basin} holds for any compact-subset initialization. The dependence of the local constants on $d_0 := \mathrm{dist}(\mathcal{K}_0, \partial\mathcal{K}_{\rm stab})$ is polynomial ($\mu_1 = \Omega(d_0^2)$, $L_1 = O(d_0^{-2})$); the sharp behavior of the combined complexity as $d_0\downarrow 0$ is open.

\emph{(L4) Analytical scope of \cref{asm:optimum}.} \Cref{rem:A3-generic} shows that \cref{asm:optimum} holds on a dense open subset of admissible $C^2$ densities (i.e.\ generically). A sharp analytical sufficient condition --- e.g.\ a structural property of $\rho$ (log-concavity of $b\rho(b)$ being a natural candidate) that implies $d^2J/dK^2(K^*) > 0$ --- is open.

\subsection{Future directions}\label{73-future-directions}

We highlight two extensions where the parity-cancellation mechanism is conjecturally preserved.

\paragraph{Generalized stochastic-control costs with single-transition oracles}
A natural extension is to characterize the broader class of stochastic-control cost functionals that admit a closed-form single-transition gradient oracle. The risk-sensitive family $\theta^{-1}\log\mathbb{E}[|1+BK|^\theta]$ for $\theta > 0$ (section~\ref{12-state-of-the-art-and-its-limits}) is one instance; tilted moments of Lyapunov-type functionals are another. Delineating the boundary of this subclass and establishing whether the uniform PL geometry of \cref{lem:uniform-PL-constant} survives under tilting is a clean open problem.

\paragraph{Vector systems with diagonal multiplicative noise}
Consider
\begin{equation*}
X_{t+1} = \bigl(A + \textstyle\sum_{i=1}^d B_t^{(i)}\,K_{(i)}\bigr)\,X_t,
\quad X_t\in\mathbb{R}^d,\;(B_t^{(i)})\text{ i.i.d.\ across }i,t.
\end{equation*}
The scalar singularity $b_{\rm sing}(K) = -1/K$ generalizes to the codimension-1 \emph{singularity surface}
\begin{equation*}
\mathcal{V}_K := \bigl\{B\in\mathbb{R}^d : \det\bigl(I + \textstyle\sum_i B^{(i)}\mathrm{diag}(K_{(i)})\bigr) = 0\bigr\};
\end{equation*}
for diagonal $K$, $\mathcal{V}_K$ is an affine hyperplane. Four structural ingredients of the scalar analysis lift conjecturally:
(i) the cusp obstruction becomes a codimension-1 distributional singularity of $dJ/dK$ across $\mathcal{V}_K$ with the same logarithmic non-integrability across the normal coordinate;
(ii) the parity cancellation lifts by integrating in the normal coordinate via the coarea formula, with the odd-parity argument applied to the normal-coordinate slice of $\rho$;
(iii) the symmetric-pairing estimator generalizes to a reflection through $\mathcal{V}_K$ along the normal coordinate, with density-aware weights involving the slice density;
(iv) Stone's $d$-dimensional KDE rate $n^{-s/(2s+d)}$ replaces the scalar rate $n^{-s/(2s+1)}$, yielding the conjectural sample complexity $\tilde O(\eta^{-(2s+d)/(2s)})$ for $C^s$ densities on $\mathbb{R}^d$. The non-diagonal case requires further structural analysis, since $\mathcal{V}_K$ is then a non-affine algebraic variety. A rigorous treatment of the diagonal case is the natural next step.

\appendix

\section{\texorpdfstring{Proof of \cref{lem:uniform-PL-constant}}{Proof of Lemma 3.4}}\label{appendix-a-proof-of-lemma-34}

The PV--directional-derivative equivalence (Lemma~SM1.1) invoked in the proof of \cref{prop:non-lebesgue-gradient}(b) is stated and proved in section~SM1 of the supplement. We turn to the proof of \cref{lem:uniform-PL-constant}.
\begin{lemma}[Proof of \cref{lem:uniform-PL-constant}]
Let
\(\rho \in C^2([b_{\min}, b_{\max}])\) with \(\rho > 0\) on
\((b_{\min}, b_{\max})\). Let \(K^*\) be the unique interior stationary
point of \(J\) with
\(b_{\mathrm{sing}}(K^*) = -1/K^* \in (b_{\min}, b_{\max})\). Choose
\(\tau > 0\) and \(\delta > 0\) such that
\begin{equation}
        \{ -1/K : K \in [K^{*} - \delta, K^{*} + \delta ]\} \subset [ b_{\min} + \tau, b_{\max} - \tau ],
\end{equation}
and write \(\mathcal{N}_\delta := [K^* - \delta, K^* + \delta]\). Then
there exist \(\varepsilon_0 \in (0, 1]\), \(\mu_0 > 0\), and
\(L_0 = L_0(\rho, \delta, \tau) < \infty\) such that the map
\begin{equation}
  (K, \varepsilon) \longmapsto \frac{d^2 J_\varepsilon}{dK^2}(K)  
\end{equation}
is jointly continuous on
\(\mathcal{N}_\delta \times [0, \varepsilon_0]\) --- where the value at
\(\varepsilon = 0\) is the Hadamard finite-part limit of \cref{thm:hessian-decomposition} ---
and satisfies the signed two-sided bound
\begin{equation}
    \mu_0 \le \frac{d^2 J_\varepsilon}{dK^2}(K) \le L_0 \qquad \text{for all } (K, \varepsilon) \in \mathcal{N}_\delta \times [0, \varepsilon_0].
\end{equation}

In particular, \(d^2 J_\varepsilon/dK^2 \ge \mu_0 > 0\) uniformly on the
domain, so each \(J_\varepsilon\) is strictly convex on
\(\mathcal{N}_\delta\) with a uniform-in-\(\varepsilon\) PL constant
\(\mu_0\).
\end{lemma}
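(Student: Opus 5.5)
We start from the integration-by-parts representation of \cref{thm:hessian-decomposition}, $d^2J_\varepsilon/dK^2 = B_\varepsilon(K)+I_\varepsilon(K)$, and prove joint continuity of each piece on $\mathcal{N}_\delta\times[0,\varepsilon_0]$ separately. The two-sided bound then follows from continuity, compactness and \cref{asm:optimum}.

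\emph{The boundary term.} By the choice of $\tau$, for every $K\in\mathcal{N}_\delta$ we have $|v(b_{\min},K)|=|K|\,|b_{\min}-b_{\rm sing}(K)|\ge K_{\min}\tau$, and likewise at $b_{\max}$, where $K_{\min}:=\min_{\mathcal{N}_\delta}|K|>0$. The boundary term $B_\varepsilon(K)$ is a rational function of $(K,\varepsilon)$ whose denominators are bounded below by $K_{\min}^2\tau^2$. It is therefore jointly continuous on $\mathcal{N}_\delta\times[0,1]$, including at $\varepsilon=0$, and bounded there by a constant depending only on $\|\rho\|_\infty$, $b_{\max}$, $K_{\min}$ and $\tau$.

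\emph{The integral term.} Fix $R:=\tau/2$. Substitute $s=b-b_{\rm sing}(K)$, so that $v=Ks$. Split $I_\varepsilon(K)$ into a far part over $|s|\ge R$ and a near part over $|s|\le R$.

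On the far part the kernel $v/(v^2+\varepsilon^2)$ is smooth in $(K,\varepsilon,b)$ and dominated by $1/(K_{\min}R)$. Since $g=2b\rho+b^2\rho'$ is continuous, dominated convergence gives joint continuity. The moving endpoints $b_{\rm sing}(K)\pm R$ vary continuously with $K$, which causes no difficulty.

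On the near part, Taylor-expand $g(b_{\rm sing}(K)+s)=g(b_{\rm sing}(K))+s\,r_K(s)$, where $r_K(s)=\int_0^1 g'(b_{\rm sing}(K)+\theta s)\,d\theta$ is continuous in $(K,s)$ because $\rho\in C^2$ gives $g\in C^1$. This yields three pieces.
\begin{itemize}
\item The constant piece is $-K^{-1}g(b_{\rm sing}(K))\int_{-R}^R Ks/(K^2s^2+\varepsilon^2)\,ds$. The integrand is odd, so this is exactly $0$ for all $\varepsilon\ge0$, with the principal-value convention at $\varepsilon=0$.
\item The remainder piece is $-K^{-1}\int_{-R}^R r_K(s)\,Ks^2/(K^2s^2+\varepsilon^2)\,ds$. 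Its kernel is bounded by $1/|K|$ uniformly in $\varepsilon$ and converges pointwise for $s\neq0$. Dominated convergence therefore gives joint continuity up to $\varepsilon=0$, with a bound of order $2R\|g'\|_\infty/K_{\min}^2$.
\end{itemize}
Summing the pieces, $I_\varepsilon(K)$ is jointly continuous and bounded uniformly. Its $\varepsilon=0$ value is exactly the principal-value expression in \eqref{eq:hess_Kstar}, evaluated at general $K$.

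\emph{The two-sided bound.} Define $L_0$ as the supremum of $|B_\varepsilon|+|I_\varepsilon|$ over $\mathcal{N}_\delta\times[0,1]$; the bounds above make it finite and explicit. For the lower bound, \cref{asm:optimum} gives $H^*:=d^2J_0/dK^2(K^*)>0$. A continuous function on a compact set is uniformly continuous, so there are $\delta'\le\delta$ and $\varepsilon_0\le1$ such that $d^2J_\varepsilon/dK^2\ge H^*/2=:\mu_0$ on $[K^*-\delta',K^*+\delta']\times[0,\varepsilon_0]$. If necessary we shrink $\delta$ to $\delta'$; this keeps the containment in (a), since shrinking $\delta$ only shrinks the image $\{-1/K\}$.

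\emph{Expected obstacle.} The main difficulty is justifying that the $\varepsilon=0$ value obtained by this limiting procedure genuinely coincides with the Hadamard finite part, and that the constant piece is handled uniformly while the pole moves with $K$. Centering the symmetric shell at the moving pole $b_{\rm sing}(K)$, rather than at a fixed point, is exactly what makes the odd-symmetry cancellation exact for every $(K,\varepsilon)$. Without that centering, an $\varepsilon^{-1}$-size term would survive and destroy the upper bound $L_0$.
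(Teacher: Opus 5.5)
Your proposal is correct and follows essentially the same route as the paper's proof: the integration-by-parts split into $B_\varepsilon + I_\varepsilon$, the near/far split of $I_\varepsilon$ at $R=\tau/2$ around the moving pole, exact odd-parity cancellation of the constant Taylor term, dominated convergence for joint continuity, and continuity at $(K^*,0)$ plus compactness (shrinking $\delta$ and $\varepsilon_0$) to obtain $\mu_0$ and $L_0$. Your first-order expansion with integral remainder $r_K$ is slightly cleaner than the paper's second-order expansion, which uses $g''$ even though $\rho\in C^2$ only guarantees $g\in C^1$.
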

\begin{proof}
By integration-by-parts with \(v(b, K) := 1 + bK\) and
\(\partial_b v = K\) (\cref{thm:hessian-decomposition}),
\begin{equation}
    \frac{d^2 J_\varepsilon}{dK^2}(K) = B_\varepsilon(K) + I_\varepsilon(K),
\end{equation}
where
\begin{equation}
    B_\varepsilon(K) := \frac{1}{K}\!\left[\frac{\rho(b)\,b^2\,v(b,K)}{v(b,K)^2 + \varepsilon^2}\right]_{b = b_{\min}}^{b = b_{\max}},
\end{equation}
\begin{equation}
    I_\varepsilon(K) := -\frac{1}{K}\int_{b_{\min}}^{b_{\max}} g(b) \cdot \frac{v(b,K)}{v(b,K)^2 + \varepsilon^2}\,db,
\end{equation}
with
\(g(b) := \rho'(b)\,b^2 + 2\,b\,\rho(b) \in C^1([b_{\min}, b_{\max}])\)
under \cref{asm:regularity}.

\emph{Step 1: Upper bound on \(|B_\varepsilon|\) and
\(|I_\varepsilon|\).}

\emph{(Boundary term.)} For \(K \in \mathcal{N}_\delta\), by
construction
\(|v(b_{\min}, K)|, |v(b_{\max}, K)| \ge \tau\,|K| \ge \tau\,(|K^*| - \delta) > 0\).
For \(\varepsilon \in [0, 1]\),
\begin{equation}
    \left|\frac{v}{v^2 + \varepsilon^2}\right| \le \frac{1}{|v|},
\end{equation}
so
\begin{equation}
\begin{aligned}
        |B_\varepsilon(K)| &\le \frac{1}{|K|}\!\left[\frac{\rho(b_{\max})\,b_{\max}^2}{\tau\,|K|} + \frac{\rho(b_{\min})\,b_{\min}^2}{\tau\,|K|}\right] \\&\le \frac{\rho_{\max}(b_{\max}^2 + b_{\min}^2)}{\tau\,(|K^*| - \delta)^2} =: C_B,
\end{aligned}
\end{equation}
where \(\rho_{\max} := \sup_{[b_{\min}, b_{\max}]} \rho\). Joint
continuity of \(B_\varepsilon\) in \((K, \varepsilon)\) on
\(\mathcal{N}_\delta \times [0, 1]\) follows from continuity of
\(v/(v^2 + \varepsilon^2)\) at \(v \ne 0\).

\emph{(Integral term.)} Split at scale \(R := \tau/2\):
\begin{equation}
    I_\varepsilon(K) = I_\varepsilon^{\mathrm{near}}(K) + I_\varepsilon^{\mathrm{far}}(K),
\end{equation}
\begin{equation}
    I_\varepsilon^{\mathrm{near}}(K) := -\frac{1}{K}\int_{b_{\mathrm{sing}}(K) - R}^{b_{\mathrm{sing}}(K) + R} g(b) \cdot \frac{v}{v^2 + \varepsilon^2}\,db,
\end{equation}
\begin{equation}
        I_\varepsilon^{\mathrm{far}}(K) \\:= -\frac{1}{K}\int_{[b_{\min}, b_{\max}]\setminus [b_{\mathrm{sing}}(K) - R,\, b_{\mathrm{sing}}(K) + R]} g(b) \cdot \frac{v}{v^2 + \varepsilon^2}\,db.
\end{equation}

On the far region,
\(|v(b, K)| = |K|\,|b - b_{\mathrm{sing}}(K)| \ge |K|\,R \ge (|K^*| - \delta)\,\tau/2\),
so
\begin{equation}
    |I_\varepsilon^{\mathrm{far}}(K)| \le \frac{1}{|K|} \cdot \frac{(b_{\max} - b_{\min})\,\sup|g|}{(|K^*| - \delta)\,\tau/2} =: C_I^{\mathrm{far}}.
\end{equation}

On the near region, substitute \(s = b - b_{\mathrm{sing}}(K)\), so
\(v = Ks\). Taylor-expand \(g\) to second order,
\begin{equation}
\begin{aligned}
        g(b_{\mathrm{sing}}(K) + s) = g(b_{\mathrm{sing}}(K)) + s\,g'(b_{\mathrm{sing}}(K)) + \tfrac{1}{2}\,s^2\,g''(\xi_s), \\ \xi_s \in [b_{\mathrm{sing}}(K), b_{\mathrm{sing}}(K) + s].
\end{aligned}
\end{equation}
The three terms are analyzed separately.

--- \emph{Constant term.}
\(-\dfrac{g(b_{\mathrm{sing}}(K))}{K}\,\displaystyle\int_{-R}^R \dfrac{Ks}{K^2 s^2 + \varepsilon^2}\,ds = 0\),
identically for every \(\varepsilon \ge 0\), by odd symmetry of the
integrand over \([-R, R]\).

--- \emph{Linear term.}
\begin{equation*}
-\frac{g'(b_{\mathrm{sing}}(K))}{K}\int_{-R}^R \frac{K\,s^2}{K^2 s^2 + \varepsilon^2}\,ds
= -\frac{g'(b_{\mathrm{sing}}(K))}{K^2}\!\left[\,2R - \frac{2\varepsilon}{|K|}\,\arctan\!\frac{|K|R}{\varepsilon}\,\right],
\end{equation*}
bounded by \(2\,\sup|g'|\,R / (|K^*| - \delta)^2\) uniformly in
\(\varepsilon\).

--- \emph{Quadratic remainder.} With
\(r(s) := \tfrac{1}{2}\,g''(\xi_s)\,s^2\), the contribution is
\begin{equation}
    \int_{-R}^R r(s)\cdot\frac{s}{K^2 s^2 + \varepsilon^2}\,ds.
\end{equation}

Bounding \(|r(s)| \le \tfrac{1}{2}\|g''\|_\infty\,s^2\) and
\(|s|^3/(K^2 s^2 + \varepsilon^2) \le |s|/K^2\) (from
\(K^2 s^2 + \varepsilon^2 \ge K^2 s^2\)):
\begin{equation}
    \left|\int_{-R}^R r(s)\cdot\frac{s}{K^2 s^2 + \varepsilon^2}\,ds\right| \le \frac{\|g''\|_\infty}{2}\,\frac{R^2}{(|K^*| - \delta)^2}.
\end{equation}

Summing the three contributions,
\(|I_\varepsilon^{\mathrm{near}}(K)| \le C_I^{\mathrm{near}}\) with
explicit constant.

\emph{Step 2: Joint continuity and the upper bound \(L_0\).}

Joint continuity of \(B_\varepsilon\),
\(I_\varepsilon^{\mathrm{near}}\), \(I_\varepsilon^{\mathrm{far}}\) in
\((K, \varepsilon)\) on \(\mathcal{N}_\delta \times [0, 1]\) follows
from dominated convergence: each integrand is uniformly bounded by the
explicit dominant function constructed above, with \(\varepsilon = 0\)
defined via the parity-shell decomposition (the constant Taylor term
vanishes identically by odd parity at \(\varepsilon = 0\) in the
principal-value sense).

Setting \(L_0 := C_B + C_I^{\rm near} + C_I^{\rm far}\)
gives \(|d^2 J_\varepsilon/dK^2(K)| \le L_0\) uniformly on
\(\mathcal{N}_\delta \times [0, 1]\), hence in particular the upper
bound \(d^2 J_\varepsilon/dK^2 \le L_0\) in (A.1).

\emph{Step 3: The lower bound \(\mu_0 > 0\) and choice of
\(\varepsilon_0\).}

By \cref{asm:optimum}, $d^2 J/dK^2(K^*) > 0$ in the Hadamard finite-part
sense, where $d^2 J/dK^2(K^*)$ is given in closed form by \eqref{eq:hess_Kstar}. By the
joint continuity established in Step 2 --- which extends to $\varepsilon = 0$
via the parity-shell decomposition --- the function
$(K, \varepsilon) \mapsto d^2 J_\varepsilon/dK^2(K)$ is continuous on
the compact rectangle $\mathcal{N}_\delta \times [0, 1]$, and strictly
positive at $(K^*, 0)$.

Choose \(\delta\) and \(\varepsilon_0 \in (0, 1]\) small enough that
\(d^2 J_\varepsilon/dK^2 > 0\) throughout
\(\mathcal{N}_\delta \times [0, \varepsilon_0]\). (This is possible: the
strict positivity at \((K^*, 0)\) and joint continuity guarantee a
relatively open neighborhood on which positivity persists; shrink
\(\delta\) and \(\varepsilon_0\) as needed so that
\(\mathcal{N}_\delta \times [0, \varepsilon_0]\) lies within this
neighborhood, while keeping the constraint
\(\{-1/K : K \in \mathcal{N}_\delta\} \subset [b_{\min} + \tau, b_{\max} - \tau]\)
from the lemma\textquotesingle s hypothesis.)

By compactness, the infimum
\begin{equation}
    \mu_0 := \inf_{(K, \varepsilon) \in \mathcal{N}_\delta \times [0, \varepsilon_0]} \frac{d^2 J_\varepsilon}{dK^2}(K)
\end{equation}
is attained, and the strict positivity above gives \(\mu_0 > 0\). This
establishes the lower bound in (A.1).
\end{proof}

\section{Auxiliary derivations}\label{appendix-b-auxiliary-derivations}

This appendix sketches the bias estimate of \cref{prop:properties-J-epsilon}(c) and the variance remainder of \cref{thm:variance-scaling-naive-estimator}; full proofs are in sections~SM3 and~SM4 of the supplement.

\subsection{\texorpdfstring{Proof of \cref{prop:properties-J-epsilon}(c): bias at the optimum}{Proof of Proposition 3.2(c): bias at the optimum}}\label{b1-proof-of-proposition-32c-bias-at-the-optimum}

The bias decomposes as $J_\varepsilon^* - J^* = [J_\varepsilon(K^*) - J(K^*)] - [J_\varepsilon(K^*) - J_\varepsilon(K_\varepsilon^*)]$, with the second piece $O(\varepsilon^2)$ by \cref{lem:persistence-regularized-minimizer}(c). For the first piece, the integrand $\tfrac{1}{2}\log(1+\varepsilon^2/v(b)^2)$ with $v(b) = 1+bK^*$ is split into a near-shell $|s|\le R$ and far-region in the variable $s = b - b_\star$ (with $b_\star := -1/K^*$). On the far region $|v|\ge|K^*|R$, so $\tfrac{1}{2}\log(1+\varepsilon^2/v^2) = \tfrac{\varepsilon^2}{2v^2} + O(\varepsilon^4)$, contributing $O(\varepsilon^2)$. On the near-shell, the substitution $t = K^* s/\varepsilon$ produces
\begin{equation}\label{eq:near_region_integral}
N_\varepsilon = \frac{\varepsilon}{|K^*|}\int_{-T}^{T}\rho(b_\star + \varepsilon t/K^*)\cdot\tfrac{1}{2}\log(1+1/t^2)\,dt,\quad T := |K^*|R/\varepsilon.
\end{equation}
The Frullani-type identity $I_\infty := \int_{-\infty}^\infty\tfrac{1}{2}\log(1+1/t^2)\,dt = \pi$ together with the tail estimate $I_\infty - \int_{-T}^{T} = O(1/T) = O(\varepsilon)$ gives the constant-term contribution $\pi\rho(b_\star)\varepsilon/|K^*|$ to leading order. The linear-in-$t$ Taylor term in $\rho(b_\star + \varepsilon t/K^*)$ is odd and integrates to zero over $[-T, T]$. The quadratic remainder is $O(\varepsilon)\cdot \varepsilon/|K^*|\cdot\int_{-T}^T t^2\log(1+1/t^2)\,dt = O(\varepsilon^2)$. Summing,
\begin{equation}\label{eq:Jeps_bias_at_Kstar}
J_\varepsilon(K^*) - J(K^*) = \frac{\pi\rho(b_\star)}{|K^*|}\,\varepsilon + O(\varepsilon^2),
\end{equation}
and the same expression for $J_\varepsilon^* - J^*$ follows. The full derivation including the asymmetric tail correction is in section~SM3 of the supplement.

\subsection{\texorpdfstring{Remainder in the variance scaling of \cref{thm:variance-scaling-naive-estimator}}{Remainder in the variance scaling of Theorem 4.1}}\label{b2-remainder-in-the-variance-scaling-of-theorem-41}

We give a brief sketch; the full derivation is in section~SM4 of the supplement.
From the proof of \cref{thm:variance-scaling-naive-estimator}, the substitution $t = Ks/\varepsilon$ produces $\mathbb{E}[\psi^2] = (\varepsilon|K|)^{-1}\int_{T_-}^{T_+} f(t;\varepsilon)\cdot t^2/(1+t^2)^2\,dt$ with $T_\pm = Ks_\pm/\varepsilon$ and $f(t;\varepsilon) = \rho(b_{\rm sing}+\varepsilon t/K)(b_{\rm sing}+\varepsilon t/K)^2$. Taylor-expanding $f$ to second order gives three contributions:
(i) the constant term $f_0 = \rho(b_{\rm sing})\,b_{\rm sing}^2 = \rho(b_{\rm sing})/K^2$ produces the leading $\pi\rho(b_{\rm sing})/(2|K|^3\varepsilon) + O(1)$ via $\int_{-\infty}^\infty t^2/(1+t^2)^2\,dt = \pi/2$ and the $O(2/T)$ tail correction;
(ii) the linear-in-$t$ term has antiderivative $\tfrac{1}{2}[\log(1+t^2) + 1/(1+t^2)]$, which evaluated at the asymmetric limits $T_\pm$ produces $\log|T_+/T_-| = \log|s_+/s_-|$ at $\varepsilon\downarrow 0$, contributing the $O(1)$ remainder $\mathrm{sgn}(K)\,f_1/K^2\cdot\log|s_+/s_-|$ with $f_1 = \rho'(b_{\rm sing})b_{\rm sing}^2 + 2b_{\rm sing}\rho(b_{\rm sing})$;
(iii) the quadratic remainder contributes $O(\varepsilon)$ via $|t^4/(1+t^2)^2|\le 1$ and the integration range $|T_+ - T_-| = O(1/\varepsilon)$.
Summing yields the form of $R(K)$ stated in \cref{thm:variance-scaling-naive-estimator}.

\bibliographystyle{siamplain}
\bibliography{references}
\end{document}


\maketitle

This supplement provides material deferred from the main paper:
\cref{sec:SM-pv-lemma} states and proves Lemma~SM1.1 (PV--directional-derivative equivalence at a moving simple pole), invoked in \cref{prop:non-lebesgue-gradient}(b);
\cref{sec:SM-prelim} states and proves the preliminary-phase Proposition (driving an arbitrary compact-subset initialization into $\mathcal{N}_\delta$), referenced in section~\ref{56-comparison-and-structural-perspective};
\cref{sec:SM-bias-detail} gives the full derivation of the bias formula \cref{prop:properties-J-epsilon}(c), and \cref{sec:SM-variance-remainder} the full derivation of the remainder $R(K)$ in \cref{thm:variance-scaling-naive-estimator};
\cref{sec:SM-quadrature} reports the numerical-stability ablation for the deterministic PV alternative; and \cref{sec:SM-aux} collects auxiliary numerical validations of the population-level objects of sections~\ref{2-problem-formulation-and-the-cusp-obstruction}--\ref{3-cauchy-regularization-and-uniform-in-epsilon-polyak-Lojasiewicz} and the reproducibility details for the experiments.

\section{PV--directional-derivative equivalence at a moving simple pole}\label{sec:SM-pv-lemma}

\begin{lemma}\label{lem:pv_directional_derivative}
Let $K \neq 0$ and suppose $\rho \in C^1(\mathbb{R})$ has compact support (or sufficient decay at infinity). Then $J(K) = \int_\mathbb{R}\rho(b)\log|1+bK|\,db$ is differentiable at $K$, and
\begin{equation}
J'(K) = \lim_{h\to 0}\frac{J(K+h)-J(K)}{h} = \mathrm{PV}\int_\mathbb{R}\rho(b)\frac{b}{1+bK}\,db.
\end{equation}
\end{lemma}
\begin{proof}
Let $b_0 = -1/K$ and $b_h = -1/(K+h)$, so the pole shift is $\Delta b = h/(K(K+h))$ and $1+b(K+h) = (K+h)(y-\Delta b)$ in the variable $y = b - b_0$. Split the difference quotient into an exterior region $|y| > \delta$ and an interior shell $|y| \le \delta$. On the exterior, the integrand is smooth and dominated convergence gives the limit $\int_{|y|>\delta}\rho(b_0+y)(b_0+y)/(Ky)\,dy$. On the interior, write
$\tfrac{1}{h}\log|1+b(K+h)| - \tfrac{1}{h}\log|1+bK| = \tfrac{1}{h}\log(1+h/K) + \tfrac{1}{h}\log|1-\Delta b/y|$
and Taylor-expand $\rho$ around $b_0$ as $\rho(y+b_0) = \rho(b_0) + \rho'(b_0)y + \mathcal{R}(y)$ with $\mathcal{R}(y) = o(|y|)$.

\emph{Constant term.} The first part of the integrand contributes $2\delta\rho(b_0)h^{-1}\log(1+h/K) \to 2\delta\rho(b_0)/K$. The logarithmic part contributes $\rho(b_0)h^{-1}I_0(\Delta b)$ with $I_0(c) := \int_{-\delta}^\delta\log|1-c/y|\,dy = (\delta-c)\log(\delta-c) + (\delta+c)\log(\delta+c) - 2\delta\log\delta$. Differentiating gives $I_0'(0) = \log 1 = 0$, so by the chain rule and $\Delta b/h\to 1/K^2$, this term vanishes in the limit. The total constant-term contribution is $2\delta\rho(b_0)/K$, which matches the PV identity $\mathrm{PV}\int_{-\delta}^\delta\rho(b_0)(b_0+y)/(Ky)\,dy = 2\delta\rho(b_0)/K$.

\emph{Linear term.} The first part is odd, integrating to zero. The logarithmic part contributes $\rho'(b_0)h^{-1}I_1(\Delta b)$ with $I_1(c) := \int_{-\delta}^\delta y\log|1-c/y|\,dy = \tfrac{1}{2}(\delta^2-c^2)\log((\delta-c)/(\delta+c)) - c\delta$. Differentiating gives $I_1'(0) = -2\delta$, so $h^{-1}I_1(\Delta b)\to -2\delta/K^2$. The total linear-term contribution is $-2\delta\rho'(b_0)/K^2$, which matches $\mathrm{PV}\int_{-\delta}^\delta\rho'(b_0)y(b_0+y)/(Ky)\,dy = 2\delta b_0\rho'(b_0)/K = -2\delta\rho'(b_0)/K^2$ (using $b_0 = -1/K$).

\emph{Remainder.} $\mathcal{R}(y)/y$ is continuous at $0$; by dominated convergence the remainder converges to its non-singular integral.

Summing, the limit $h^{-1}(J(K+h) - J(K))$ exists and equals the PV integral.
\end{proof}

\section{Preliminary phase: driving compact-subset initialization into the basin}\label{sec:SM-prelim}

The rate of \cref{thm:sample-complexity-density-known,thm:total-sample-complexity-pl-basin} assumes $K^{(0)} \in \mathcal{N}_\delta$. We close this gap by exhibiting a preliminary phase that drives an arbitrary $K^{(0)} \in \mathcal{K}_{\rm stab}$ into $\mathcal{N}_\delta$ in a number of samples logarithmic in problem parameters and independent of $\eta$.

\begin{proposition}[preliminary phase, sample cost independent of $\eta$]
\label{prop:preliminary-phase-sample-cost-eta}
Suppose $K^{(0)}$ lies in a compact subset $\mathcal{K}_0 \subset \mathcal{K}_{\rm stab}$ on which $J_1(K) := \mathbb{E}[\tfrac{1}{2}\log((1+BK)^2 + 1)]$ is uniformly strictly convex with PL constant $\mu_1 > 0$ and Lipschitz constant $L_1 < \infty$. Run mini-batch SGD on $J_1$ with mini-batch size $N_0$ and step size $1/L_1$ until the iterate enters $\mathcal{N}_\delta$. Then there exists $N_0 = N_0(\rho, \mathcal{K}_0, \delta)$ and $T_0 = O((L_1/\mu_1)\log(|\mathcal{K}_0|/\delta))$ such that the preliminary phase terminates in at most $T_0$ iterations with total sample cost $T_0 N_0$, independent of $\eta$.
\end{proposition}
\begin{proof}
The fixed regularization $\varepsilon = 1$ makes $J_1$ a $C^\infty$ function on $\mathcal{K}_{\rm stab}$ with bounded gradient and Hessian uniformly on the compact subset $\mathcal{K}_0$. Strict convexity of $J_1$ on $\mathcal{K}_0$ does not hold in general on the full $\mathcal{K}_{\rm stab}$ --- the Hessian degenerates at the channel-edge gains $\{-1/b_{\min}, -1/b_{\max}\}$ --- but on any compact subset $\mathcal{K}_0$ bounded away from these gains, joint continuity of $d^2 J_\varepsilon/dK^2$ in $(K, \varepsilon)$ on $\mathcal{K}_0 \times [0, 1]$ (\cref{thm:hessian-decomposition}) and strict positivity at the interior minimum of $J_1$ on $\mathcal{K}_0$ yield a uniform PL constant $\mu_1 > 0$ depending only on $\rho$ and the distance from $\mathcal{K}_0$ to the channel-edge gains. The single-sample gradient estimator $\psi(B;K,1) = B(1+BK)/((1+BK)^2 + 1)$ is uniformly bounded on $\mathcal{K}_0 \times \mathrm{supp}(\rho)$ (no parity cancellation needed at $\varepsilon = 1$, since the cusp is absent), so its variance is bounded by an absolute constant $\sigma_1^2$ depending only on $\mathcal{K}_0$ and $\rho$. Standard PL-SGD analysis on $J_1$ gives geometric contraction of $J_1(K^{(t)}) - \min_{\mathcal{K}_0} J_1$ at rate $1 - \mu_1/L_1$; the iterate enters $\mathcal{N}_\delta$ once $J_1(K^{(t)}) - \min_{\mathcal{N}_\delta} J_1 \le \mu_1\delta^2/2$. With $N_0 = \lceil\sigma_1^2/(\mu_1^2\delta^4)\rceil$, this happens in $T_0 = O((L_1/\mu_1)\log(|\mathcal{K}_0|/\delta))$ iterations.
\end{proof}

\emph{Boundary behavior.} Writing $d_0 := \mathrm{dist}(\mathcal{K}_0, \{-1/b_{\min}, -1/b_{\max}\})$, the Hessian formula \eqref{eq:hess_Kstar} gives $\mu_1 = \Omega(d_0^2)$ and $L_1 = O(d_0^{-2})$ uniformly on $\mathcal{K}_0$, so the preliminary-phase cost is $T_0 N_0 = O(d_0^{-4}\log(|\mathcal{K}_0|/\delta))$, an $\eta$-independent constant.

\section{Full derivation of the bias formula (\texorpdfstring{\cref{prop:properties-J-epsilon}(c)}{Proposition 3.2(c)})}\label{sec:SM-bias-detail}

We give the full derivation of $J_\varepsilon^* - J^* = \pi\rho(b_\star)\varepsilon/|K^*| + O(\varepsilon^2)$, expanding the sketch in \cref{b1-proof-of-proposition-32c-bias-at-the-optimum}. The bias decomposes as $J_\varepsilon^* - J^* = [J_\varepsilon(K^*) - J(K^*)] - [J_\varepsilon(K^*) - J_\varepsilon(K_\varepsilon^*)]$.

\emph{Step 1: bias at $K^*$.} The integrand $\tfrac{1}{2}\log((1+bK^*)^2 + \varepsilon^2) - \log|1+bK^*| = \tfrac{1}{2}\log(1 + \varepsilon^2/v(b)^2)$ with $v(b) = 1 + bK^*$. Substitute $s = b - b_\star$ and split at $|s| \le R$ (near) and $|s| > R$ (far):
\emph{Far region:} On $|s| > R$, $|v| \ge |K^*|R$, so $\tfrac{1}{2}\log(1 + \varepsilon^2/v^2) = \tfrac{\varepsilon^2}{2v^2} + O(\varepsilon^4)$, contributing $F_\varepsilon = O(\varepsilon^2)$.
\emph{Near region:} The substitution $t = K^*s/\varepsilon$, $ds = \varepsilon/|K^*|\,dt$, $T = |K^*|R/\varepsilon$ gives
\begin{equation}\label{eq:near_region_integral_SM}
N_\varepsilon = \frac{\varepsilon}{|K^*|}\int_{-T}^{T}\rho(b_\star + \varepsilon t/K^*)\cdot\tfrac{1}{2}\log(1 + 1/t^2)\,dt.
\end{equation}
The Frullani-type identity $I_\infty := \int_{-\infty}^\infty\tfrac{1}{2}\log(1+1/t^2)\,dt = \pi$ (provable via $\int_0^\infty\log(1+1/t^2)\,dt = \pi$ by contour or by $\log(1+1/t^2) = \log(1+it) + \log(1-it) - 2\log|t|$), together with the tail estimate $\int_{|t|>T}\tfrac{1}{2}\log(1+1/t^2)\,dt = 1/T + O(T^{-3}) = O(\varepsilon)$ (from $\log(1+1/t^2) = 1/t^2 + O(1/t^4)$ for large $|t|$), gives $\int_{-T}^T = \pi - O(\varepsilon)$. Taylor-expanding $\rho(b_\star + \varepsilon t/K^*) = \rho(b_\star) + \tfrac{\varepsilon t}{K^*}\rho'(b_\star) + \tfrac{1}{2}(\tfrac{\varepsilon t}{K^*})^2\rho''(\xi_t)$ and integrating: the constant term contributes $\pi\rho(b_\star)\varepsilon/|K^*| + O(\varepsilon^2)$; the linear-in-$t$ term is odd over $[-T,T]$ and integrates to zero; the quadratic remainder contributes $O(\varepsilon^2)$ via $\int_{-T}^T t^2\log(1+1/t^2)\,dt = 2T + O(1) = O(1/\varepsilon)$ times $\varepsilon^3/|K^*|^3 = O(\varepsilon^2)$. Summing,
\begin{equation}
J_\varepsilon(K^*) - J(K^*) = \pi\rho(b_\star)\varepsilon/|K^*| + O(\varepsilon^2).
\end{equation}

\emph{Step 2: from $K^*$ to $K_\varepsilon^*$.} By \cref{lem:persistence-regularized-minimizer}(c), $0 \le J_\varepsilon(K^*) - J_\varepsilon(K_\varepsilon^*) \le L_0 C_K^2\varepsilon^2/2 = O(\varepsilon^2)$. Combining with Step 1, $J_\varepsilon^* - J^* = \pi\rho(b_\star)\varepsilon/|K^*| + O(\varepsilon^2)$.

\section{Full derivation of the variance remainder (\texorpdfstring{\cref{thm:variance-scaling-naive-estimator}}{Theorem 4.1})}\label{sec:SM-variance-remainder}

We expand the sketch in \cref{b2-remainder-in-the-variance-scaling-of-theorem-41}. From the proof of \cref{thm:variance-scaling-naive-estimator},
\begin{equation}
\mathbb{E}[\psi^2] = \int_{s_-}^{s_+}\rho(b_{\rm sing}+s)(b_{\rm sing}+s)^2\cdot\frac{K^2s^2}{(K^2s^2+\varepsilon^2)^2}\,ds.
\end{equation}
Substitute $t = Ks/\varepsilon$, $ds = \varepsilon\,dt/|K|$, $T_\pm = K s_\pm/\varepsilon$:
\begin{equation}
\mathbb{E}[\psi^2] = \frac{1}{\varepsilon|K|}\int_{T_-}^{T_+}f(t;\varepsilon)\cdot\frac{t^2}{(1+t^2)^2}\,dt,
\end{equation}
with $f(t;\varepsilon) = \rho(b_{\rm sing}+\varepsilon t/K)(b_{\rm sing}+\varepsilon t/K)^2$ and Taylor expansion $f(t;\varepsilon) = f_0 + \tfrac{\varepsilon t}{K}f_1 + \tfrac{1}{2}(\tfrac{\varepsilon t}{K})^2 f''(\xi_{t,\varepsilon})$, where $f_0 = \rho(b_{\rm sing})b_{\rm sing}^2 = \rho(b_{\rm sing})/K^2$ and $f_1 = \rho'(b_{\rm sing})b_{\rm sing}^2 + 2b_{\rm sing}\rho(b_{\rm sing})$.

\emph{Constant term.} Since $\int_{-\infty}^\infty t^2/(1+t^2)^2\,dt = \pi/2$ and the tail satisfies $\int_{|t|>T}t^2/(1+t^2)^2\,dt = 2/T + O(T^{-3})$ (from $t^2/(1+t^2)^2 = 1/t^2 + O(1/t^4)$), we have $\int_{T_-}^{T_+}t^2/(1+t^2)^2\,dt = \pi/2 - O(\varepsilon)$. Multiplied by $f_0/(\varepsilon|K|) = \rho(b_{\rm sing})/(K^2\varepsilon|K|)$, the leading term is $\pi\rho(b_{\rm sing})/(2|K|^3\varepsilon)$ with $O(1)$ correction.

\emph{Linear-in-$t$ term.} Contributes $(f_1/(K|K|))\int_{T_-}^{T_+}t^3/(1+t^2)^2\,dt$. The antiderivative is $\tfrac{1}{2}[\log(1+t^2) + 1/(1+t^2)]$, giving
\begin{equation}
\int_{T_-}^{T_+}\frac{t^3}{(1+t^2)^2}\,dt = \tfrac{1}{2}\log\frac{1+T_+^2}{1+T_-^2} + O(\varepsilon^2)\xrightarrow{\varepsilon\downarrow 0}\log|s_+/s_-|.
\end{equation}
The contribution is $\mathrm{sgn}(K)\,f_1/K^2\cdot\log|s_+/s_-| + O(\varepsilon)$ (using $f_1/(K|K|) = \mathrm{sgn}(K)\,f_1/K^2$).

\emph{Quadratic remainder.} Bounded by $\tfrac{1}{2}(\varepsilon/|K|)^2\|f''\|_\infty\cdot t^2\cdot t^2/(1+t^2)^2$ pointwise. Since $t^4/(1+t^2)^2 \le 1$, $\int_{T_-}^{T_+}t^4/(1+t^2)^2\,dt = O(|T|) = O(1/\varepsilon)$. Multiplied by $\varepsilon^2$ and divided by $\varepsilon|K|$, the contribution is $O(\varepsilon)$.

Summing the three contributions and subtracting $\mathbb{E}[\psi]^2 = O(1)$, the variance is
\begin{equation}
\mathrm{Var}[\psi] = \frac{\pi\rho(b_{\rm sing}(K))}{2|K|^3\varepsilon} + \mathrm{sgn}(K)\frac{f_1(K)}{K^2}\log|s_+/s_-| + O(1) + o(1).
\end{equation}

\section{Numerical stability of the deterministic PV alternative}
\label{sec:SM-quadrature}

The rate parity established in \cref{prop:plug-and-solve-rate} between
policy gradient and plug-and-solve under density-unknown access raises the
question whether the policy-gradient framework offers any practical
advantage. As noted in
section~\ref{56-comparison-and-structural-perspective}, the answer is
quadrature stability: every Newton step of the deterministic
alternative requires the evaluation of two integrals against a Cauchy
kernel with a moving pole, and these evaluations are not robust to
off-the-shelf adaptive quadrature. We document this here.

\begin{figure}[!t]
\centering
\includegraphics[width=\columnwidth]{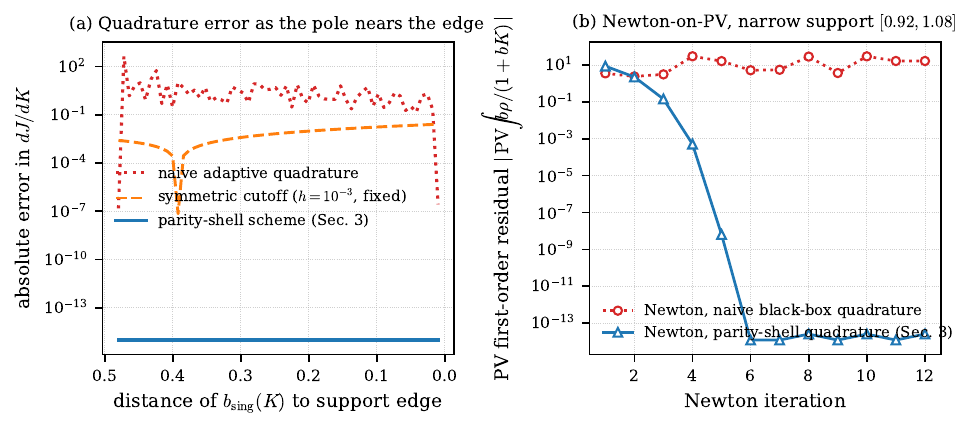}
\caption{Numerical-stability ablation for the deterministic
principal-value alternative
(section~\ref{56-comparison-and-structural-perspective}). \emph{(a)}~Absolute
error in $dJ/dK(K)$ as the moving pole $b_{\rm sing}(K) = -1/K$ sweeps
across the interior of the support $[0.5, 1.5]$ for D2: off-the-shelf
adaptive QUADPACK without pole-aware nodes returns errors of order
$10^1$--$10^2$ uniformly; a naive symmetric-cutoff scheme with a fixed
tolerance $h = 10^{-3}$ returns errors of order $10^{-2}$--$10^{-4}$;
the parity-shell scheme of section~3 (\cref{thm:hessian-decomposition})
returns errors at machine precision ($\sim\!10^{-16}$).
\emph{(b)}~Newton's method on the principal-value first-order
condition for a narrow-support density on $[0.92, 1.08]$: with naive
black-box quadrature for the gradient and a finite-difference Hessian
of the same unreliable gradient, Newton diverges (residual fixed near
$10^{1}$ for twelve iterations); with parity-shell quadrature for both
the gradient and the Hadamard finite-part Hessian, Newton converges at
the expected quadratic rate, reaching residual $2.6 \cdot 10^{-14}$ by
iteration~6.}
\label{fig:SM-quadrature}
\end{figure}

\paragraph{Pole-sweep error (\cref{fig:SM-quadrature}(a))}
We sweep $K$ so that $b_{\rm sing}(K)$ marches from mid-support toward
the right endpoint $b_{\max} = 1.5$ for D2 and record the absolute
error of three schemes in evaluating $dJ/dK$ against the parity-shell
reference: (i)~black-box adaptive QUADPACK (no pole information
passed), (ii)~symmetric cutoff with fixed tolerance $h = 10^{-3}$,
(iii)~parity-shell quadrature with the subtraction identity of section~3.
The maxima over the sweep are $4.3\!\cdot\!10^{2}$, $2.6\!\cdot\!10^{-2}$,
and $1.0\!\cdot\!10^{-16}$ respectively. The naive adaptive scheme
fails by twelve to fifteen orders of magnitude because the adaptive
node refinement of QUADPACK cannot resolve the non-integrable simple
pole; the symmetric-cutoff scheme is sensitive to the cutoff tolerance,
with errors that degrade as the pole nears the support endpoint
because the asymmetry term $\log|s_+/s_-|$ stops being negligible.

\paragraph{Newton-on-PV (\cref{fig:SM-quadrature}(b))}
We close the loop by running Newton's method on the plug-and-solve
principal-value first-order condition with each quadrature scheme. We
use a narrow-support uniform on $[0.92, 1.08]$ to expose the
conditioning issue: the singularity is interior and the domain is
short, leaving the off-the-shelf solver few quadrature nodes to spread
around the moving pole. With naive black-box quadrature for both
$dJ/dK$ and the finite-difference Hessian, the Newton residual
fluctuates near $10^{1}$ for twelve iterations without progress. With
parity-shell quadrature for $dJ/dK$ and the closed-form Hadamard
finite-part Hessian of \eqref{eq:hess_Kstar}, Newton converges at the
expected quadratic rate, reaching residual $2.6 \cdot 10^{-14}$ by
iteration~$6$ and machine-precision floor thereafter.

\paragraph{Operational implication}
The asymptotic $O(\log\log(1/\eta))$ Newton rate of
section~\ref{56-comparison-and-structural-perspective}(i) is realizable in
practice if and only if the underlying quadrature respects the
parity-cancellation structure of section~3; using an off-the-shelf solver
either re-implements the parity-shell decomposition in disguise, or
fails. The regularized smooth-landscape paradigm of section~3 sidesteps the
issue entirely (the integrand of $dJ_\varepsilon/dK$ is smooth for
every $\varepsilon > 0$), which is the operational form of the
structural advantage of the regularized paradigm (section~\ref{56-comparison-and-structural-perspective}, ``Why this regularization'').

\section{Auxiliary validations of objects from sections~2 and~3}\label{sec:SM-aux}

\paragraph{Kink handling for D4}
For the triangular density D4 with apex at $b = 1$, the integrand
$g(b) = 2b\rho(b) + b^2\rho'(b)$ entering the Hadamard finite-part
Hessian via \eqref{eq:hess_Kstar} has a jump discontinuity at the
kink: $g(1^-) = 4$, $g(1^+) = 0$. Passing the kink location to the
quadrature integrator as an explicit break-point yields the reported
$H_4 = 12.96$; a kink-unaware integrator aliases across the
discontinuity and returns $H_4 \approx 25.9$, an over-estimate by a
factor close to two. We verified the parity-shell value against a
direct central-second-difference of $J$ at $K^*$ with step
$h = 2\!\cdot\!10^{-4}$, obtaining agreement to $10^{-3}$ in relative
terms.

\paragraph{Basin geometry}
The basin-size constant $\delta$ of \cref{lem:uniform-PL-constant} is limited by the pole-to-edge margin
\begin{equation*}
\tau(K) := \min(b_{\rm sing}(K)-b_{\min},\,b_{\max}-b_{\rm sing}(K)).
\end{equation*}
For D2 with $K^* = -0.928$, $\tau(K^*) = 0.422$. We swept $\delta\in[0.05, 0.25]$ and computed
$\mu_0(\delta) := \min_{|K-K^*|\le\delta} d^2 J/dK^2(K)$ and
$L_0(\delta) := \max_{|K-K^*|\le\delta} d^2 J/dK^2(K)$
via \eqref{eq:hess_Kstar}; the condition number $L_0/\mu_0$ rises from $1.5$ at $\delta=0.05$ to $7.4$ at $\delta=0.20$, and $\mu_0$ remains bounded away from zero throughout, consistent with \cref{lem:uniform-PL-constant}.

\paragraph{Necessity of \cref{asm:regularity} (D4 robustness)}
D4 violates \cref{asm:regularity} (the density is $C^0$ but not
$C^2$), yet the rate-level conclusions of \cref{6-numerical-validation} hold for it: the variance
slope is $-1.048$ versus theoretical $-1$; the density-known sample
complexity $N(\eta)$ has slope $-0.88$ versus theoretical $-1$. The
constants are consistent with kink-aware quadrature of the Hadamard
finite-part Hessian, suggesting that \cref{asm:regularity} can be
relaxed to piecewise-$C^2$ provided the quadrature respects the kink
locations. A precise statement of this generalization is outside the
scope of the present paper (cf.~limitation L4 in
\cref{72-limitations}).

\paragraph{Tail averaging vs.\ raw iterate}
The \cref{fig:sgd-rate} trajectories use Polyak--Ruppert tail
averaging of the last half of iterates. On D2 we recomputed the
empirical slope (a) using the raw iterate and obtained $-0.83$
(vs.~$-0.94$ with tail averaging), confirming that tail averaging
removes the geometric transient and exposes the asymptotic rate of
\cref{thm:sample-complexity-density-known} more cleanly. The total sample
cost is unaffected; the gap at any fixed $n$ is reduced by roughly a
factor of $1.5$ on average.

\section*{Reproducibility}
All experiments were run in Python 3.11 with NumPy 1.26, SciPy 1.11, and Matplotlib 3.8. Seeds $1,\ldots,n_{\rm seed}$ are deterministic; each figure corresponds to a specific entry point of the accompanying script (\texttt{exp1} through \texttt{exp4}). The script and the processed data are provided in the supplementary archive.